\theoremstyle{thmstyleone}%
\newtheorem{theorem}{Theorem}
\newtheorem{proposition}[theorem]{Proposition}%
\newtheorem{lemma}[theorem]{Lemma}
\theoremstyle{thmstyletwo}%
\newtheorem{remark}{Remark}%
\theoremstyle{thmstylethree}%
\newtheorem{definition}{Definition}%
\begin{document}

\title[Geometric thermodynamics for the Fokker-Planck equation]{Geometric thermodynamics for the Fokker-Planck equation: Stochastic thermodynamic links between information geometry and optimal transport}

\author*[1]{\fnm{Sosuke} \sur{Ito}}\email{sosuke.ito@ubi.s.u-tokyo.ac.jp}

\affil*[1]{\orgdiv{Universal Biology Institute}, \orgname{the University of Tokyo}, \orgaddress{\street{Hongo 7-3-1}, \city{Bunkyo-ku}, \postcode{113-0033}, \state{Tokyo}, \country{Japan}}}

\abstract{We propose a geometric theory of non-equilibrium thermodynamics, namely geometric thermodynamics, using our recent developments of differential-geometric aspects of entropy production rate in non-equilibrium thermodynamics. By revisiting our recent results on geometrical aspects of entropy production rate in stochastic thermodynamics for the Fokker–Planck equation, we introduce a geometric framework of non-equilibrium thermodynamics in terms of information geometry and optimal transport theory. We show that the proposed geometric framework is useful for obtaining several non-equilibrium thermodynamic relations, such as thermodynamic trade-off relations between the thermodynamic cost and the fluctuation of the observable, optimal protocols for the minimum thermodynamic cost and the decomposition of the entropy production rate for the non-equilibrium system. We clarify several stochastic-thermodynamic links between information geometry and optimal transport theory via the excess entropy production rate based on a relation between the gradient flow expression and information geometry in the space of probability densities and a relation between the velocity field in optimal transport and information geometry in the space of path probability densities. }

\keywords{Stochastic thermodynamics, entropy production, information geometry, optimal transport theory, Fokker--Planck equation}

\maketitle

\section{Introduction}\label{sec1}
A geometric interpretation of thermodynamics originates from the geometric picture of the thermodynamic potential proposed by W.~Gibbs in equilibrium thermodynamics and chemical thermodynamics~\cite{callen1998thermodynamics}. In non-equilibrium thermodynamics, second-order thermodynamic fluctuations around the equilibrium or steady state have been studied~\cite{einstein1956investigations, onsager1931reciprocal,onsager1931reciprocal2, glansdorff1974thermodynamic,schnakenberg1976network}. A differential geometry for equilibrium thermodynamics has been proposed by F. Weinhold~\cite{weinhold1975metric} and G. Ruppeiner~\cite{ruppeiner1979thermodynamics} by considering the fluctuation around the equilibrium state, and the length called thermodynamic length in Weinhold geometry has been proposed to quantify the dissipated availability~\cite{salamon1983thermodynamic}.
Because this geometry for equilibrium thermodynamics is based on the second-order fluctuation of entropy, its generalization~\cite{ruppeiner1995riemannian, crooks2007measuring} has been regarded as information geometry~\cite{amari2000methods,amari2016information}, which is the differential geometry for the Fisher metric~\cite{rao1992information}.

In recent years, differential geometry for non-equilibrium thermodynamics, especially for stochastic thermodynamics~\cite{sekimoto2010stochastic,seifert2012stochastic} and non-equilibrium chemical thermodynamics~\cite{ge2016nonequilibrium, rao2016nonequilibrium}, has been used to investigate mathematical properties of entropy production for non-equilibrium transitions and fluctuations around non-equilibrium steady states~\cite{crooks2007measuring,  aurell2011optimal,sivak2012thermodynamic,mandal2016analysis, ito2018stochastic,chen2019stochastic,ito2020stochastic,ito2020unified,nicholson2020time,brandner2020thermodynamic, van2021geometrical,yoshimura2021thermodynamic,kolchinsky2021work,nakazato2021geometrical, fu2021maximal}. Because stochastic thermodynamics is based on stochastic processes~\cite{van1992stochastic} such as the Fokker--Planck equation~\cite{van2010three}, differential geometry for non-equilibrium thermodynamics is related to information geometry~\cite{amari2000methods, amari2016information} and optimal transport theory~\cite{villani2009optimal,villani2021topics}.

In this paper, we summarize our recent development of differential geometry for non-equilibrium thermodynamics~\cite{ito2018stochastic,ito2020stochastic,ito2020unified,yoshimura2021thermodynamic,ashida2021experimental,nakazato2021geometrical, ito2022information,dechant2022geometricletter, dechant2022geometric,ohga2021information, ohga2021information2, yoshimura2022geometrical, kolchinsky2022information, hoshino2022geometric} and propose several relations between these studies by focusing on non-equilibrium dynamics of the Fokker--Planck equation. Because entropy production for the Fokker--Planck equation can be discussed from the viewpoint of both information geometry and optimal transport theory, these relations provide links between information geometry and optimal transport theory. Our proposed geometrical framework for non-equilibrium thermodynamics, namely {\it geometric thermodynamics}, offers a new perspective on links between information geometry and optimal transport theory~\cite{amari2018information, li2022transport,khan2022optimal, wong2022pseudo} and the unification of non-equilibrium thermodynamic geometry.

\section{Fokker--Planck equation and stochastic thermodynamics}\label{sec2}
\subsection{Setup}\label{sec1.1}
We consider the time evolution of a probability density described by the (over-damped) Fokker--Planck equation. Let $t \in \mathbb{R}$ and $\boldsymbol{x} \in \mathbb{R}^d$ ($d \in \mathbb{N}$) be time and the $d$-dimensional position, respectively. The probability density of $\boldsymbol{x}$ at time $t$ will be denoted by $P_t(\boldsymbol{x})$, which satisfies $P_t(\boldsymbol{x}) \geq 0$ and $\int d\boldsymbol{x} P_t(\boldsymbol{x}) = 1$. The Fokker-Planck equation is given by the following continuity equation,
\begin{eqnarray}
    \partial_t P_t(\boldsymbol{x}) &=& - \nabla \cdot (\boldsymbol{\nu}_t(\boldsymbol{x}) P_t(\boldsymbol{x})) , \nonumber \\
    \boldsymbol{\nu}_t(\boldsymbol{x}) &=& \mu (\boldsymbol{F}_t(\boldsymbol{x}) - T \nabla \ln P_t(\boldsymbol{x})). \label{fp}
\end{eqnarray}
Here,  $\boldsymbol{\nu}_t(\boldsymbol{x}) \in \mathbb{R}^d$ and $\boldsymbol{F}_t(\boldsymbol{x})  \in \mathbb{R}^d$ are the vector functions at position $\boldsymbol{x}$,  $\mu \in \mathbb{R}_{>0}$ and $T \in \mathbb{R}_{>0}$ are positive constants, and $\nabla \cdot$ and $\nabla$ are the divergence and the gradient operators, respectively. Physically, the Fokker-Planck equation is used to describe the time evolution of the probability density of an over-damped Brownian particle. For Brownian motion, $\mu$, $T$, and $\boldsymbol{F}_t(\boldsymbol{x})$ physically represent the mobility of the Brownian particle, the temperature of the medium scaled by the Boltzmann constant, and the force on the Brownian particle, respectively~\cite{van1992stochastic}. The force field $\boldsymbol{\nu}_t(\boldsymbol{x})$ is called the mean local velocity because it quantifies the ensemble average of the Brownian particle's velocity in $\boldsymbol{x}$ at time $t$~\cite{seifert2012stochastic}.

This Fokker--Planck equation corresponds to the over-damped Langevin equation, which describes the position of the Brownian particle $\boldsymbol{X}(t) \in \mathbb{R}^d$ at time $t$, that is
\begin{eqnarray}
    \dot{\boldsymbol{X}}(t) &=& \mu \boldsymbol{F}_t(\boldsymbol{X}(t)) +\sqrt{2\mu T} \boldsymbol{\xi} (t).
\end{eqnarray}
Here, $\dot{\boldsymbol{X}}(t)$ is the time derivative of position $\boldsymbol{X}(t)$, and $(\boldsymbol{\xi} (t))_i = {\xi}_i (t)$ $(i \in \{1, 2, \cdots, d \})$ is the white Gaussian noise that satisfies $\langle {\xi}_i (t) {\xi}_j (t')\rangle = \delta_{ij} \delta (t-t')$ and $\langle {\xi}_i (t) \rangle = 0$, where $\langle \cdot \rangle$, $\delta_{ij}$, and $\delta (t-t')$ stand for the ensemble average, the Kronecker delta, and the delta function, respectively $(j \in \{1, 2, \cdots, d \})$. The product of $\sqrt{2\mu T}$ and $\boldsymbol{\xi} (t)$ is given by the Ito integral. Mathematically, this correspondence between the Fokker--Planck equation and the over-damped Langevin equation indicates that these two descriptions provide the same transition probability density from position $\boldsymbol{X}(\tau)= \boldsymbol{x}_{\tau}$ to position $\boldsymbol{X}(\tau +dt)= \boldsymbol{x}_{\tau +dt}$ during the positive infinitesimal time interval $dt>0$. The transition probability density from $\boldsymbol{x}_{\tau}$ to  $\boldsymbol{x}_{\tau +dt}$ is given by the Onsager--Machlup theory~\cite{van1992stochastic}, 
\begin{eqnarray} \label{transition}
\mathbb{T}_{\tau}(\boldsymbol{x}_{\tau+dt} \mid  \boldsymbol{x}_{\tau}) = \frac{1}{(4 \pi \mu Tdt)^{\frac{d}{2}}} \exp \left[ - \frac{\|\boldsymbol{x}_{\tau + dt}-\boldsymbol{x}_{\tau}-\mu \boldsymbol{F}_{\tau}(\boldsymbol{x}_{\tau}) dt \|^2}{4 \mu Tdt} \right],
\end{eqnarray}
where $\| \cdot \|$ stands for the $L^2$ norm, and the transition probability density satisfies $\int d\boldsymbol{x}_{\tau+dt} \mathbb{T}_{\tau}(\boldsymbol{x}_{\tau+dt} \mid  \boldsymbol{x}_{\tau})=1$ and $\mathbb{T}_{\tau}(\boldsymbol{x}_{\tau+dt} \mid  \boldsymbol{x}_{\tau}) \geq 0$. Let $\mathcal{X}_{\tau}$ be the random variable corresponding to state $\boldsymbol{x}_{\tau}$. The joint probability of $\mathcal{X}_{\tau+dt}$ and $\mathcal{X}_{\tau}$ being in $\boldsymbol{x}_{\tau+dt}$ and $\boldsymbol{x}_{\tau}$ is defined as 
\begin{eqnarray}
\mathbb{P} (\boldsymbol{x}_{\tau+dt}, \boldsymbol{x}_{\tau}) = \mathbb{T}_{\tau}(\boldsymbol{x}_{\tau+dt} \mid  \boldsymbol{x}_{\tau}) P_{\tau}(\boldsymbol{x}_{\tau}),
\end{eqnarray}
which satisfies $\int d \boldsymbol{x}_{\tau+dt} d\boldsymbol{x}_{\tau} \mathbb{P} (\boldsymbol{x}_{\tau+dt}, \boldsymbol{x}_{\tau}) =1$ and $\mathbb{P} (\boldsymbol{x}_{\tau+dt} ,  \boldsymbol{x}_{\tau}) \geq 0$. This joint probability $\mathbb{P}$ is called the forward path probability density because it is the probability of the forward path from time $t=\tau$ to time $t= \tau +dt$.
\subsection{Entropy production rate} 
We introduce stochastic thermodynamics~\cite{seifert2012stochastic, sekimoto2010stochastic}, that is a framework for non-equilibrium thermodynamics described by a stochastic process such as the Fokker--Planck equation. In stochastic thermodynamics, the entropy production rate is introduced as a measure of thermodynamic dissipation~\cite{van2010three}. The entropy production rate is defined as follows.
\begin{definition} For the Fokker--Planck equation~(\ref{fp}), {\it the entropy production rate at time $\tau$} is defined as
\begin{eqnarray}
\sigma_{\tau} = \frac{1}{\mu T} \int d \boldsymbol{x} \| \boldsymbol{\nu}_{\tau} (\boldsymbol{x}) \|^2 P_{\tau} (\boldsymbol{x}).
\end{eqnarray}
\end{definition}
\begin{remark}
This entropy production rate is definitely non-negative, and its non-negativity $\sigma_{\tau} \geq 0$ is known as the second law of thermodynamics~\cite{seifert2012stochastic}. 
\end{remark}
\begin{remark}
The entropy production rate is regarded as the sum of the entropy changes in the heat bath and the system~\cite{seifert2012stochastic}. If we assume that $P_{\tau} (\boldsymbol{x})$ decays sufficiently rapidly at infinity, the entropy production rate can be rewritten as 
\begin{eqnarray}
\sigma_{\tau} &=& \frac{1}{\mu T} \int d \boldsymbol{x} (\boldsymbol{\nu}_{\tau} (\boldsymbol{x}) \cdot  \boldsymbol{\nu}_{\tau} (\boldsymbol{x}) ) P_{\tau} (\boldsymbol{x}) \nonumber \\
&=& \frac{\int d \boldsymbol{x} (\boldsymbol{\nu}_{\tau} (\boldsymbol{x})\cdot  \boldsymbol{F}_{\tau} (\boldsymbol{x}) ) P_{\tau} (\boldsymbol{x}) }{T} - \int d \boldsymbol{x} (\boldsymbol{\nu}_{\tau} (\boldsymbol{x}) P_{\tau} (\boldsymbol{x}))\cdot  \nabla \ln {P}_{\tau} (\boldsymbol{x}) \nonumber \\
&=& \frac{\int d \boldsymbol{x} (\boldsymbol{\nu}_{\tau} (\boldsymbol{x})\cdot  \boldsymbol{F}_{\tau} (\boldsymbol{x}))  P_{\tau} (\boldsymbol{x}) }{T} + \int d \boldsymbol{x} \nabla \cdot (\boldsymbol{\nu}_{\tau} (\boldsymbol{x})  P_{\tau} (\boldsymbol{x})) \ln {P}_{\tau} (\boldsymbol{x}) \nonumber \\
&=& \frac{\int d \boldsymbol{x} (\boldsymbol{\nu}_{\tau} (\boldsymbol{x})\cdot  \boldsymbol{F}_{\tau} (\boldsymbol{x}) ) P_{\tau} (\boldsymbol{x}) }{T} + \partial_{\tau}  \left[- \int d \boldsymbol{x}  P_{\tau} (\boldsymbol{x}) \ln {P}_{\tau} (\boldsymbol{x}) \right],
\end{eqnarray}
where we used Eq.~(\ref{fp}), $\int d \boldsymbol{x}  P_{\tau} (\boldsymbol{x})( \partial_{\tau}  \ln {P}_{\tau} (\boldsymbol{x})) =\partial_{\tau} \int d \boldsymbol{x}  P_{\tau} (\boldsymbol{x}) = 0$, and $\int d \boldsymbol{x} \nabla \cdot ( \boldsymbol{\nu}_{\tau} (\boldsymbol{x}) P_{\tau} (\boldsymbol{x}) \ln P_{\tau} (\boldsymbol{x}) )=0$ because of the assumption that $P_{\tau} (\boldsymbol{x})$ decays sufficiently rapidly at infinity. The term 
\begin{eqnarray}
\dot{S}_{\tau}= \partial_{\tau}  \left[- \int d \boldsymbol{x}  P_{\tau} (\boldsymbol{x}) \ln {P}_{\tau} (\boldsymbol{x}) \right],
\end{eqnarray}
is the time derivative of the differential entropy~\cite{cover1999elements}, which is regarded as the entropy change of the system. The term 
\begin{eqnarray}
- \dot{Q}_{\tau}= \int d \boldsymbol{x} ( \boldsymbol{\nu}_{\tau} (\boldsymbol{x})\cdot  \boldsymbol{F}_{\tau} (\boldsymbol{x}) ) P_{\tau} (\boldsymbol{x})  ,
\end{eqnarray}
is the heat dissipation rate and $- \dot{Q}_{\tau}/ T$ is regarded as the entropy change of the heat bath. Thus, the entropy production rate is given by the sum of the entropy changes in the heat bath and the system,
\begin{eqnarray}
\sigma_{\tau} = \dot{S}_{\tau} - \frac{\dot{Q}_{\tau}}{T}.
\end{eqnarray}
Its non-negativity $\sigma_{\tau} \geq 0$ provides the Clausius inequality for the Fokker--Planck equation,
\begin{eqnarray}
\dot{S}_{\tau} \geq \frac{\dot{Q}_{\tau}}{T},
\end{eqnarray}
which is an expression of the second law of thermodynamics.
\end{remark}

\subsection{Kullback--Leibler divergence and entropy production rate}
We introduce an expression of the entropy production rate in terms of the Kullback--Leibler divergence, which was discussed in the context of the fluctuation theorem~\cite{crooks1999entropy, seifert2012stochastic, chernyak2006path, kawai2007dissipation}. We assume that the parity of state $\boldsymbol{x}_{\tau}$ is even; in other words, the sign of $\boldsymbol{x}_{\tau}$ does not change under the time reversal transformation.
Let $\mathbb{P}^{\dagger} (\boldsymbol{x}_{\tau+dt}, \boldsymbol{x}_{\tau})$ be backward path probability density defined as $\mathbb{P}^{\dagger} (\boldsymbol{x}_{\tau+dt}, \boldsymbol{x}_{\tau})=  \mathbb{T}_{\tau} (\boldsymbol{x}_{\tau} \mid \boldsymbol{x}_{\tau+dt}) P_{\tau+dt}(\boldsymbol{x}_{\tau+dt})$. Now, we consider the Kullback--Leibler divergence between $\mathbb{P} (\boldsymbol{x}_{\tau+dt}, \boldsymbol{x}_{\tau})$ and $\mathbb{P}^{\dagger} (\boldsymbol{x}_{\tau+dt}, \boldsymbol{x}_{\tau})$ defined as
\begin{eqnarray}
D_{\rm KL}(\mathbb{P} \|\mathbb{P}^{\dagger}) = \int d\boldsymbol{x}_{\tau} d \boldsymbol{x}_{\tau+dt} \mathbb{P} (\boldsymbol{x}_{\tau+dt}, \boldsymbol{x}_{\tau}) \ln \frac{ \mathbb{P} (\boldsymbol{x}_{\tau+dt}, \boldsymbol{x}_{\tau}) }{ \mathbb{P}^{\dagger} (\boldsymbol{x}_{\tau+dt}, \boldsymbol{x}_{\tau}) }.
\end{eqnarray}
The entropy production rate $\sigma_{\tau}$ is given by this Kullback--Leibler divergence as follows.
\begin{lemma} \label{lemma1}
\textit{The entropy production rate $\sigma_{\tau}$ is given by}
\begin{align}
\sigma_{\tau} = \lim_{dt \to 0}\frac{D_{\rm KL}(\mathbb{P} \|\mathbb{P}^{\dagger} )}{dt}. \label{entropyproductionrateKL} 
\end{align}
\end{lemma}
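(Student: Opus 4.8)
The plan is to compute $D_{\rm KL}(\mathbb{P}\|\mathbb{P}^{\dagger})$ directly from the Onsager--Machlup kernel~(\ref{transition}) and read off the coefficient of $dt$. Inserting the definitions of $\mathbb{P}$ and $\mathbb{P}^{\dagger}$, the log-ratio inside the integral splits into a transition part and a marginal part,
\begin{align}
\ln\frac{\mathbb{P}(\boldsymbol{x}_{\tau+dt},\boldsymbol{x}_{\tau})}{\mathbb{P}^{\dagger}(\boldsymbol{x}_{\tau+dt},\boldsymbol{x}_{\tau})}
=\ln\frac{\mathbb{T}_{\tau}(\boldsymbol{x}_{\tau+dt}\mid\boldsymbol{x}_{\tau})}{\mathbb{T}_{\tau}(\boldsymbol{x}_{\tau}\mid\boldsymbol{x}_{\tau+dt})}
+\ln P_{\tau}(\boldsymbol{x}_{\tau})-\ln P_{\tau+dt}(\boldsymbol{x}_{\tau+dt}),
\end{align}
with the Gaussian prefactors $(4\pi\mu Tdt)^{-d/2}$ cancelling in the first term. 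Averaging the last two terms over $\mathbb{P}$ and using that the $\boldsymbol{x}_{\tau+dt}$-marginal of $\mathbb{P}$ is $P_{\tau+dt}$ up to $o(dt)$ (which is precisely how~(\ref{transition}) reproduces~(\ref{fp})), their contribution is $\int d\boldsymbol{x}\,P_{\tau}\ln P_{\tau}-\int d\boldsymbol{x}\,P_{\tau+dt}\ln P_{\tau+dt}=\dot{S}_{\tau}\,dt+o(dt)$; this is the cleanest route for the marginal part, as it needs no Taylor expansion at all.

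For the transition part I would set $\Delta\boldsymbol{x}:=\boldsymbol{x}_{\tau+dt}-\boldsymbol{x}_{\tau}$, which under $\mathbb{T}_{\tau}(\cdot\mid\boldsymbol{x}_{\tau})$ is Gaussian with $\langle\Delta\boldsymbol{x}\rangle=\mu\boldsymbol{F}_{\tau}(\boldsymbol{x}_{\tau})\,dt$ and $\langle\Delta x_{i}\Delta x_{j}\rangle=2\mu T\delta_{ij}\,dt+o(dt)$, so $\Delta\boldsymbol{x}=O(\sqrt{dt})$. Expanding the two squared norms in~(\ref{transition}) gives exactly
\begin{align}
\ln\frac{\mathbb{T}_{\tau}(\boldsymbol{x}_{\tau+dt}\mid\boldsymbol{x}_{\tau})}{\mathbb{T}_{\tau}(\boldsymbol{x}_{\tau}\mid\boldsymbol{x}_{\tau+dt})}
=\frac{1}{2T}\,\Delta\boldsymbol{x}\cdot\left[\boldsymbol{F}_{\tau}(\boldsymbol{x}_{\tau+dt})+\boldsymbol{F}_{\tau}(\boldsymbol{x}_{\tau})\right]
+\frac{\mu\,dt}{4T}\left[\|\boldsymbol{F}_{\tau}(\boldsymbol{x}_{\tau+dt})\|^{2}-\|\boldsymbol{F}_{\tau}(\boldsymbol{x}_{\tau})\|^{2}\right],
\end{align}
and the second bracket is $o(dt)$ in expectation. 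Taylor-expanding $\boldsymbol{F}_{\tau}(\boldsymbol{x}_{\tau+dt})=\boldsymbol{F}_{\tau}(\boldsymbol{x}_{\tau})+(\Delta\boldsymbol{x}\cdot\nabla)\boldsymbol{F}_{\tau}(\boldsymbol{x}_{\tau})+O(dt)$ and taking the conditional Gaussian average, the first term becomes $\tfrac{\mu}{T}\|\boldsymbol{F}_{\tau}(\boldsymbol{x}_{\tau})\|^{2}\,dt+\mu\,(\nabla\cdot\boldsymbol{F}_{\tau}(\boldsymbol{x}_{\tau}))\,dt+o(dt)$; integrating against $P_{\tau}$ and integrating by parts once (licensed by the decay-at-infinity hypothesis used in the Remark) turns this into $\tfrac{1}{T}\int d\boldsymbol{x}\,(\boldsymbol{\nu}_{\tau}\cdot\boldsymbol{F}_{\tau})\,P_{\tau}\,dt=-\dot{Q}_{\tau}\,dt/T$.

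Collecting the two pieces gives $D_{\rm KL}(\mathbb{P}\|\mathbb{P}^{\dagger})=(\dot{S}_{\tau}-\dot{Q}_{\tau}/T)\,dt+o(dt)=\sigma_{\tau}\,dt+o(dt)$ by the decomposition $\sigma_{\tau}=\dot{S}_{\tau}-\dot{Q}_{\tau}/T$ recorded in the Remark; dividing by $dt$ and letting $dt\to0$ yields~(\ref{entropyproductionrateKL}). The main obstacle is the bookkeeping in powers of $dt$: since $\Delta\boldsymbol{x}\sim\sqrt{dt}$, one must retain the second moment $\langle\Delta x_{i}\Delta x_{j}\rangle$ — the Ito correction that generates the $\nabla\cdot\boldsymbol{F}_{\tau}$ term and hence the correct heat rate — while discarding all $o(dt)$ contributions that arise after taking the expectation (odd moments of the noise, the $O(dt^{2})$ second bracket above, and higher-order terms of the $\boldsymbol{F}_{\tau}$ expansion). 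One also has to check that the accumulated $o(dt)$ errors survive the $\boldsymbol{x}_{\tau}$-integration as $o(dt)$, which relies on the same integrability and decay assumptions invoked throughout the section.
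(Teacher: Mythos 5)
Your proof is correct, but it takes a genuinely different route from the paper's. You split the log-ratio as (transition ratio) $+$ ($\ln P_{\tau}(\boldsymbol{x}_{\tau}) - \ln P_{\tau+dt}(\boldsymbol{x}_{\tau+dt})$) and identify the two pieces directly as $-\dot{Q}_{\tau}\,dt/T$ and $\dot{S}_{\tau}\,dt$ respectively, then recombine via $\sigma_{\tau}=\dot{S}_{\tau}-\dot{Q}_{\tau}/T$. The paper instead groups the spatial change of $P$ at fixed time $\tau$ with the transition ratio — i.e.\ it isolates $\ln[P_{\tau}(\boldsymbol{x}_{\tau+dt})/P_{\tau+dt}(\boldsymbol{x}_{\tau+dt})]$, which integrates to $O(dt^2)$ — and then Taylor-expands the remainder to recognize the Stratonovich increment $(\boldsymbol{x}_{\tau+dt}-\boldsymbol{x}_{\tau})\circ\boldsymbol{\nu}_{\tau}/(\mu T)$; the Gaussian integral then lands directly on the defining form $\frac{1}{\mu T}\int d\boldsymbol{x}\,\|\boldsymbol{\nu}_{\tau}\|^2 P_{\tau}$, with no detour through the $\dot{S}$, $\dot{Q}$ decomposition. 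Both routes rest on the same second-moment bookkeeping ($\Delta\boldsymbol{x}\sim\sqrt{dt}$, the Ito correction), but yours additionally relies on the decay-at-infinity hypothesis twice over: once for the integration by parts that turns $\mu\nabla\cdot\boldsymbol{F}_{\tau}$ into the $-T\nabla\ln P_{\tau}\cdot\boldsymbol{F}_{\tau}$ piece, and once implicitly through the Remark's identity $\sigma_{\tau}=\dot{S}_{\tau}-\dot{Q}_{\tau}/T$, which itself requires that hypothesis. The paper's proof of the Lemma reaches the $\|\boldsymbol{\nu}_{\tau}\|^2$ expression without invoking decay at infinity, which is slightly cleaner; your route is physically more transparent in that it exhibits the system/bath entropy split explicitly. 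One small point of care worth stating: your claim that the $\boldsymbol{x}_{\tau+dt}$-marginal of $\mathbb{P}$ is $P_{\tau+dt}$ up to $o(dt)$ is exactly the normalization assumption the paper also uses, so it is consistent with the conventions of the section.
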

\begin{proof} 
We rewrite the Kullback--Leibler divergence $D_{\rm KL}(\mathbb{P} \|\mathbb{P}^{\dagger})$ as 
\begin{align}
D_{\rm KL}(\mathbb{P} \|\mathbb{P}^{\dagger}) = & \int d\boldsymbol{x}_{\tau} d \boldsymbol{x}_{\tau+dt} \mathbb{P} (\boldsymbol{x}_{\tau+dt}, \boldsymbol{x}_{\tau}) \ln \frac{\mathbb{T}_{\tau}(\boldsymbol{x}_{\tau +dt} \mid \boldsymbol{x}_{\tau}) P_{\tau}(\boldsymbol{x}_{\tau})  }{ \mathbb{T}_{\tau}(\boldsymbol{x}_{\tau} \mid \boldsymbol{x}_{\tau+dt}) P_{\tau}(\boldsymbol{x}_{\tau+dt}) } \nonumber \\ 
&+ \int d \boldsymbol{x}_{\tau+dt} P_{\tau+dt} (\boldsymbol{x}_{\tau +dt})\ln \frac { P_{\tau}(\boldsymbol{x}_{\tau+dt}) } {P_{\tau+dt}(\boldsymbol{x}_{\tau+dt})  }\nonumber \\
=& \int d\boldsymbol{x}_{\tau} d \boldsymbol{x}_{\tau+dt} \mathbb{P} (\boldsymbol{x}_{\tau+dt}, \boldsymbol{x}_{\tau}) \ln \frac{\mathbb{T}_{\tau}(\boldsymbol{x}_{\tau +dt} \mid \boldsymbol{x}_{\tau}) P_{\tau}(\boldsymbol{x}_{\tau})  }{ \mathbb{T}_{\tau}(\boldsymbol{x}_{\tau} \mid \boldsymbol{x}_{\tau+dt}) P_{\tau}(\boldsymbol{x}_{\tau+dt}) } \nonumber \\
& - dt \int d \boldsymbol{x}_{\tau+dt} \partial_{\tau'} P_{\tau'} (\boldsymbol{x}_{\tau +dt}) \mid_{\tau' =\tau+dt}
+ O(dt^2) \nonumber \\
=& \int d\boldsymbol{x}_{\tau} d \boldsymbol{x}_{\tau+dt} \mathbb{P} (\boldsymbol{x}_{\tau+dt}, \boldsymbol{x}_{\tau}) \ln \frac{\mathbb{T}_{\tau}(\boldsymbol{x}_{\tau +dt} \mid \boldsymbol{x}_{\tau}) P_{\tau}(\boldsymbol{x}_{\tau})  }{ \mathbb{T}_{\tau}(\boldsymbol{x}_{\tau} \mid \boldsymbol{x}_{\tau+dt}) P_{\tau}(\boldsymbol{x}_{\tau+dt}) } + O(dt^2), \label{epkl}
\end{align}
where $O(dt^2)$ means the term that satisfies $\lim_{dt \to 0} O(dt^2)/dt=0$ and we used $\int d\boldsymbol{x}_{\tau} \mathbb{P} (\boldsymbol{x}_{\tau+dt}, \boldsymbol{x}_{\tau}) = P_{\tau+dt} (\boldsymbol{x}_{\tau+dt})$ and $\int d \boldsymbol{x}_{\tau+dt} \partial_{\tau'} P_{\tau'} (\boldsymbol{x}_{\tau +dt}) \mid_{\tau' =\tau+dt} =0$.
From Eq.~(\ref{transition}),  we obtain 
\begin{align} 
&\ln \frac{\mathbb{T}_{\tau}(\boldsymbol{x}_{\tau +dt} \mid \boldsymbol{x}_{\tau}) P_{\tau}(\boldsymbol{x}_{\tau})  }{ \mathbb{T}_{\tau}(\boldsymbol{x}_{\tau} \mid \boldsymbol{x}_{\tau+dt}) P_{\tau}(\boldsymbol{x}_{\tau+dt}) } \nonumber \\
=& (\boldsymbol{x}_{\tau +dt} - \boldsymbol{x}_{\tau} ) \circ \left[ \frac{ \boldsymbol{\nu}_{\tau} (\boldsymbol{x}_{\tau}) }{\mu T}  + O(dt) \right] + O( \|\boldsymbol{x}_{\tau +dt} - \boldsymbol{x}_{\tau} \|^2 ), 
\end{align}
where $\circ$ stands for the Stratonovich integral defined as $(\boldsymbol{x}_{\tau +dt} - \boldsymbol{x}_{\tau} ) \circ  \boldsymbol{\nu}_{\tau} (\boldsymbol{x}_{\tau}) =(\boldsymbol{x}_{\tau +dt} - \boldsymbol{x}_{\tau} ) \cdot \boldsymbol{\nu}_{\tau} ([\boldsymbol{x}_{\tau} + \boldsymbol{x}_{\tau +dt}]/2)$, $O(dt)$ means the term that satisfies $\lim_{dt \to 0} O(dt)=0$ and $O( \|\boldsymbol{x}_{\tau +dt} - \boldsymbol{x}_{\tau} \|^2 )$ is the higher order term which satisfies $\int d\boldsymbol{x}_{\tau +dt } \mathbb{T}_{\tau}(\boldsymbol{x}_{\tau +dt} \mid \boldsymbol{x}_{\tau}) O( \|\boldsymbol{x}_{\tau +dt} - \boldsymbol{x}_{\tau} \|^2 )= O(dt)$.  Because the Gaussian integral provides 
\begin{align} 
&\frac{1}{dt} \int d \boldsymbol{x}_{\tau} P_{\tau}(\boldsymbol{x}_{\tau}) \int d\boldsymbol{x}_{\tau +dt } \mathbb{T}_{\tau}(\boldsymbol{x}_{\tau +dt} \mid \boldsymbol{x}_{\tau}) \ln \frac{\mathbb{T}_{\tau}(\boldsymbol{x}_{\tau +dt} \mid \boldsymbol{x}_{\tau}) P_{\tau}(\boldsymbol{x}_{\tau})  }{ \mathbb{T}_{\tau}(\boldsymbol{x}_{\tau} \mid \boldsymbol{x}_{\tau+dt}) P_{\tau}(\boldsymbol{x}_{\tau+dt}) } \nonumber \\
=&\int d \boldsymbol{x}_{\tau} P_{\tau}(\boldsymbol{x}_{\tau}) \int d\boldsymbol{x}_{\tau +dt} \mathbb{T}_{\tau}(\boldsymbol{x}_{\tau +dt} \mid \boldsymbol{x}_{\tau})  \frac{\boldsymbol{x}_{\tau +dt} - \boldsymbol{x}_{\tau} }{dt} \circ \left[ \frac{\boldsymbol{\nu}_{\tau} (\boldsymbol{x}_{\tau})}{\mu T}+O(dt) \right] + O(dt)\nonumber\\
=& \frac{1}{\mu T} \int d \boldsymbol{x}_{\tau} P_{\tau}(\boldsymbol{x}_{\tau}) \|\boldsymbol{\nu}_{\tau} (\boldsymbol{x}_{\tau}) \|^2 + O(dt), \label{epcalc} 
\end{align} 
we obtain
\begin{align} 
\lim_{dt \to 0}\frac{D_{\rm KL}(\mathbb{P} \|\mathbb{P}^{\dagger})}{dt} = \frac{1}{\mu T} \int d \boldsymbol{x}_{\tau} \| \boldsymbol{\nu}_{\tau} (\boldsymbol{x}_{\tau}) \|^2 P_{\tau} (\boldsymbol{x}_{\tau}) + \lim_{dt \to 0}O(dt) = \sigma_{\tau}, 
\end{align} 
from Eqs.~(\ref{epkl}) and (\ref{epcalc}).
\end{proof}
\begin{remark}
The Kullback--Leibler divergence is always non-negative $D_{\rm KL}(\mathbb{P} \|\mathbb{P}^{\dagger}) \geq 0$ and zero if and only if $\mathbb{P}=\mathbb{P}^{\dagger}$. Thus, $\sigma_{\tau}= 0$ if and only if $\mathbb{P}=\mathbb{P}^{\dagger}$. Physically, $\mathbb{P}=\mathbb{P}^{\dagger}$ means the reversibility of stochastic dynamics, and $\sigma_{\tau}= 0$ means that the system is in equilibrium.
\end{remark}
\begin{remark}
The time integral of the entropy production rate $\Sigma(\tau';\tau) = \int_{\tau}^{\tau'} dt \sigma_{t}$ is called the entropy production from time $t=\tau$ to time $t=\tau'$. The Lemma~\ref{lemma1} implies that the Kullback--Leibler divergence $D_{\rm KL}(\mathbb{P} \|\mathbb{P}^{\dagger})$ is equivalent to the entropy production from time $t=\tau$ to $t=\tau+dt$ up to $O(dt^2)$, 
\begin{eqnarray}
D_{\rm KL}(\mathbb{P} \|\mathbb{P}^{\dagger}) = \Sigma (\tau + dt;\tau) + O(dt^2),
\label{entropyproductionKL}
\end{eqnarray}
where $O(dt^2)$ means the term $\lim_{dt \to 0} O(dt^2)/dt =0$.
\end{remark}
\begin{remark}
Because the Fokker--Planck equation describes the Markov process, each increments are independent and the results in this paper can be generalized for the entire path from $t=0$ to $t = \tau$. Thus,
the results for the entropy production rate $\sigma_{\tau}$ in this paper can be generalized for the entropy production $\Sigma(\tau;0)$ based on the expression,
\begin{eqnarray} 
\Sigma(\tau;0) = \lim_{dt \to 0 \mid dt N = \tau} D_{\rm KL} (\hat{\mathbb{P}} \| \hat{\mathbb{P}}^{\dagger}),
\end{eqnarray}
where $\hat{\mathbb{P}}$ and $\hat{\mathbb{P}}^{\dagger}$ is the forward and backward path probability densities for the entire path $\Gamma = (\boldsymbol{x}_{Ndt}, \cdots ,\boldsymbol{x}_{2dt},\boldsymbol{x}_{dt}, \boldsymbol{x}_{0})$ with $Ndt =\tau$, defined as $\hat{\mathbb{P}}(\Gamma) = \prod_{i=1}^{N}  \mathbb{T}_{(i-1)dt} (\boldsymbol{x}_{idt} \mid \boldsymbol{x}_{(i-1)dt})  P_{0} (\boldsymbol{x}_0)$ and 
$\hat{\mathbb{P}}^{\dagger}(\Gamma) = \prod_{i=1}^{N}  \mathbb{T}_{(i-1)dt} (\boldsymbol{x}_{(i-1)dt} \mid \boldsymbol{x}_{idt})  P_{\tau} (\boldsymbol{x}_{\tau})$ respectively. The Kullback--Leibler divergence is defined as $D_{\rm KL}(\hat{\mathbb{P}} \|\hat{\mathbb{P}}^{\dagger})= \int d \Gamma \hat{\mathbb{P}}(\Gamma) \ln [\hat{\mathbb{P}}(\Gamma)/\hat{\mathbb{P}}^{\dagger}(\Gamma)]$.
\end{remark}

This link between the entropy production rate and the Kullback--Leibler divergence in Lemma~\ref{lemma1} leads to an information-geometric interpretation of the entropy production rate. 
\section{Information geometry and entropy production}
\subsection{Projection theorem and entropy production}
We discuss an information-geometric interpretation of the entropy production based on the projection theorem~\cite{amari2016information}, which is obtained for a general Markov jump process in Ref.~\cite{ito2020unified}. The entropy production can be understood in terms of the information-geometric projection onto the backward manifold defined as follows.
\begin{definition}
Let $\mathbb{Q} (\boldsymbol{x}_{\tau+dt}, \boldsymbol{x}_{\tau})$ be a probability density that satisfies $\mathbb{Q} (\boldsymbol{x}_{\tau+dt}, \boldsymbol{x}_{\tau}) \geq 0$ and $\int d\boldsymbol{x}_{\tau+dt} d\boldsymbol{x}_{\tau} \mathbb{Q} (\boldsymbol{x}_{\tau+dt}, \boldsymbol{x}_{\tau}) =1$. The set of the probability density
\begin{eqnarray}
\mathcal{M}_{\rm B} (\mathbb{P}) = \left\{\mathbb{Q}  \left\lvert  \mathbb{Q} (\boldsymbol{x}_{\tau+dt}, \boldsymbol{x}_{\tau}) = \mathbb{T}_{\tau}  (\boldsymbol{x}_{\tau} \mid \boldsymbol{x}_{\tau+dt}) \int d \boldsymbol{x}_{\tau} \mathbb{Q} (\boldsymbol{x}_{\tau+dt}, \boldsymbol{x}_{\tau}) \right. \right\},
\end{eqnarray}
is called {\it the backward manifold}.
\end{definition}
\begin{remark}
$\mathcal{M}_{\rm B} (\mathbb{P})$ depends on $\mathbb{P}$ because $\mathcal{M}_{\rm B} (\mathbb{P})$ depends on $\mathbb{T}_{\tau}$. $\mathbb{T}_{\tau}$ is given by the function of $\mathbb{P}$ such that $\mathbb{T}_{\tau} ( \boldsymbol{x}_{\tau+dt} \mid \boldsymbol{x}_{\tau}) = \mathbb{P}(\boldsymbol{x}_{\tau+dt}, \boldsymbol{x}_{\tau})/ [\int d \boldsymbol{x}_{\tau+dt} \mathbb{P}(\boldsymbol{x}_{\tau+dt}, \boldsymbol{x}_{\tau})]$ and $\mathbb{T}_{\tau} ( \boldsymbol{x}_{\tau} \mid \boldsymbol{x}_{\tau+dt})$ is given by the change of variables.
\end{remark}
\begin{remark}
$\mathbb{P}^{\dagger} \in \mathcal{M}_{\rm B}(\mathbb{P})$ because $\mathbb{P}^{\dagger} (  \boldsymbol{x}_{\tau+dt}, \boldsymbol{x}_{\tau})=\mathbb{T}_{\tau}(\boldsymbol{x}_{\tau} \mid  \boldsymbol{x}_{\tau+dt}) P_{\tau+dt}(\boldsymbol{x}_{\tau+dt})$ and $P_{\tau+dt}(\boldsymbol{x}_{\tau+dt})= \int d \boldsymbol{x}_{\tau} \mathbb{P}^{\dagger} ( \boldsymbol{x}_{\tau+dt}, \boldsymbol{x}_{\tau})$.
\end{remark}
This backward path probability density $\mathbb{P}^{\dagger}$ is given by the information-geometric projection from $\mathbb{P}$ onto $\mathcal{M}_{\rm B}(\mathbb{P})$. This information-geometric projection is formulated based on the following generalized Pythagorean theorem.
\begin{lemma} \label{pythagorean}
For any $\mathbb{Q} \in \mathcal{M}_{\rm B}(\mathbb{P})$, the generalized Pythagorean theorem
\begin{eqnarray}
D_{\rm KL} (\mathbb{P} \| \mathbb{Q}) = D_{\rm KL} (\mathbb{P} \| \mathbb{P}^{\dagger}) + D_{\rm KL} (\mathbb{P}^{\dagger} \| \mathbb{Q}), \label{pythagoreantheorem}
\end{eqnarray}
holds.
\end{lemma}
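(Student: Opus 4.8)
The plan is to prove Eq.~(\ref{pythagoreantheorem}) by a direct computation, reducing it to a single cancellation. Writing each Kullback--Leibler divergence as $D_{\rm KL}(p\|q) = \int p\ln p - \int p\ln q$ and cancelling the common $\int \mathbb{P}\ln\mathbb{P}$ terms, one sees that Eq.~(\ref{pythagoreantheorem}) is equivalent to the vanishing of a single ``discrepancy'' integral, namely
\[
\int d\boldsymbol{x}_{\tau}\, d\boldsymbol{x}_{\tau+dt}\, \bigl(\mathbb{P}^{\dagger}(\boldsymbol{x}_{\tau+dt},\boldsymbol{x}_{\tau}) - \mathbb{P}(\boldsymbol{x}_{\tau+dt},\boldsymbol{x}_{\tau})\bigr)\,\ln\frac{\mathbb{Q}(\boldsymbol{x}_{\tau+dt},\boldsymbol{x}_{\tau})}{\mathbb{P}^{\dagger}(\boldsymbol{x}_{\tau+dt},\boldsymbol{x}_{\tau})} = 0 .
\]
So the first step is this purely algebraic rearrangement, after which it suffices to establish this one identity.

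The second step is to exploit the defining structure of the backward manifold. Since $\mathbb{Q}\in\mathcal{M}_{\rm B}(\mathbb{P})$, it factorizes as $\mathbb{Q}(\boldsymbol{x}_{\tau+dt},\boldsymbol{x}_{\tau}) = \mathbb{T}_{\tau}(\boldsymbol{x}_{\tau}\mid\boldsymbol{x}_{\tau+dt})\,q(\boldsymbol{x}_{\tau+dt})$ with $q(\boldsymbol{x}_{\tau+dt}):=\int d\boldsymbol{x}_{\tau}\,\mathbb{Q}(\boldsymbol{x}_{\tau+dt},\boldsymbol{x}_{\tau})$, and by the preceding remarks $\mathbb{P}^{\dagger}(\boldsymbol{x}_{\tau+dt},\boldsymbol{x}_{\tau}) = \mathbb{T}_{\tau}(\boldsymbol{x}_{\tau}\mid\boldsymbol{x}_{\tau+dt})\,P_{\tau+dt}(\boldsymbol{x}_{\tau+dt})$ with the \emph{same} conditional kernel $\mathbb{T}_{\tau}(\boldsymbol{x}_{\tau}\mid\boldsymbol{x}_{\tau+dt})$, which is fixed once $\mathbb{P}$ is given. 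Consequently the log-ratio collapses to $\ln[\mathbb{Q}/\mathbb{P}^{\dagger}] = \ln[q(\boldsymbol{x}_{\tau+dt})/P_{\tau+dt}(\boldsymbol{x}_{\tau+dt})]$, a function of $\boldsymbol{x}_{\tau+dt}$ alone.

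The third step is to perform the $\boldsymbol{x}_{\tau}$-integral. Because the log-ratio no longer depends on $\boldsymbol{x}_{\tau}$, the discrepancy integral factorizes as $\int d\boldsymbol{x}_{\tau+dt}\,\ln[q/P_{\tau+dt}]\left(\int d\boldsymbol{x}_{\tau}\,(\mathbb{P}^{\dagger}-\mathbb{P})\right)$, and the inner integral vanishes identically: $\int d\boldsymbol{x}_{\tau}\,\mathbb{P}(\boldsymbol{x}_{\tau+dt},\boldsymbol{x}_{\tau}) = P_{\tau+dt}(\boldsymbol{x}_{\tau+dt})$ (as used in the proof of Lemma~\ref{lemma1}) and likewise $\int d\boldsymbol{x}_{\tau}\,\mathbb{P}^{\dagger}(\boldsymbol{x}_{\tau+dt},\boldsymbol{x}_{\tau}) = P_{\tau+dt}(\boldsymbol{x}_{\tau+dt})$ (as noted in the remark that $\mathbb{P}^{\dagger}\in\mathcal{M}_{\rm B}(\mathbb{P})$), so the two $\boldsymbol{x}_{\tau+dt}$-marginals coincide. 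This closes the argument.

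There is no deep obstacle: the computation is the standard derivation of the generalized Pythagorean theorem for an m-flat (mixture-type) submanifold, here specialized to the family $\{\mathbb{T}_{\tau}(\boldsymbol{x}_{\tau}\mid\boldsymbol{x}_{\tau+dt})\,q(\boldsymbol{x}_{\tau+dt})\}_{q}$. The one point that needs care — and the only place the hypothesis $\mathbb{Q}\in\mathcal{M}_{\rm B}(\mathbb{P})$ actually enters — is recognizing that $\mathbb{Q}$ and $\mathbb{P}^{\dagger}$ share the \emph{identical} conditional kernel $\mathbb{T}_{\tau}(\boldsymbol{x}_{\tau}\mid\boldsymbol{x}_{\tau+dt})$ determined by $\mathbb{P}$; without that, the log-ratio would retain $\boldsymbol{x}_{\tau}$-dependence and the factorization in the last step would fail.
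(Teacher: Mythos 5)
Your proof is correct and relies on the same two facts the paper uses: that $\mathbb{Q}$ and $\mathbb{P}^{\dagger}$ share the identical conditional kernel $\mathbb{T}_{\tau}(\boldsymbol{x}_{\tau}\mid\boldsymbol{x}_{\tau+dt})$, and that $\mathbb{P}$ and $\mathbb{P}^{\dagger}$ share the same $\boldsymbol{x}_{\tau+dt}$-marginal $P_{\tau+dt}$. The only cosmetic difference is bookkeeping: you first reduce the claimed identity to the vanishing of the discrepancy integral (which is precisely the orthogonality relation the paper records afterwards in Eq.~(\ref{orthogonal}) of Remark~\ref{e-geodesic}), while the paper inserts $\pm\ln P_{\tau+dt}$ inside the logarithm and identifies the two resulting pieces directly.
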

\begin{proof}
$\mathbb{Q}  \in \mathcal{M}_{\rm B}(\mathbb{P})$ is given by  $\mathbb{Q} ( \boldsymbol{x}_{\tau + dt}, \boldsymbol{x}_{\tau}) = \mathbb{T}_{\tau}(\boldsymbol{x}_{\tau} \mid  \boldsymbol{x}_{\tau+dt}) Q_{\tau+ dt} (\boldsymbol{x}_{\tau + dt})$ where $Q_{\tau+ dt} (\boldsymbol{x}_{\tau + dt}) = \int d \boldsymbol{x}_{\tau}  \mathbb{Q} (\boldsymbol{x}_{\tau + dt}, \boldsymbol{x}_{\tau})$. Thus, 
\begin{eqnarray}
D_{\rm KL} (\mathbb{P} \| \mathbb{Q}) &=& \int  d \boldsymbol{x}_{\tau}  d \boldsymbol{x}_{\tau+dt} \mathbb{P} ( \boldsymbol{x}_{\tau + dt},\boldsymbol{x}_{\tau}) \ln \frac{\mathbb{T}_{\tau}(\boldsymbol{x}_{\tau+dt} \mid  \boldsymbol{x}_{\tau})P_{\tau} (\boldsymbol{x}_{\tau})}{\mathbb{T}_{\tau}(\boldsymbol{x}_{\tau} \mid  \boldsymbol{x}_{\tau+dt})Q_{\tau+ dt} (\boldsymbol{x}_{\tau + dt})} \nonumber \\
&=& \int  d \boldsymbol{x}_{\tau}  d \boldsymbol{x}_{\tau+dt} \mathbb{P} ( \boldsymbol{x}_{\tau + dt}, \boldsymbol{x}_{\tau}) \ln \frac{\mathbb{T}_{\tau}(\boldsymbol{x}_{\tau+dt} \mid  \boldsymbol{x}_{\tau})P_{\tau} (\boldsymbol{x}_{\tau})}{\mathbb{T}_{\tau}(\boldsymbol{x}_{\tau} \mid  \boldsymbol{x}_{\tau+dt})P_{\tau+ dt} (\boldsymbol{x}_{\tau + dt})} \nonumber \\
&&+ \int  d \boldsymbol{x}_{\tau+dt} P_{\tau+ dt} (\boldsymbol{x}_{\tau + dt}) \ln \frac{P_{\tau+dt} (\boldsymbol{x}_{\tau+dt})}{Q_{\tau+ dt} (\boldsymbol{x}_{\tau + dt})} \nonumber \\
&=& D_{\rm KL} (\mathbb{P} \| \mathbb{P}^{\dagger}) + D_{\rm KL} (\mathbb{P}^{\dagger} \| \mathbb{Q}),
\end{eqnarray}
where we used $\int  d \boldsymbol{x}_{\tau+dt}\mathbb{T}_{\tau}(\boldsymbol{x}_{\tau+dt} \mid  \boldsymbol{x}_{\tau}) = 1$, $\int  d \boldsymbol{x}_{\tau}\mathbb{T}_{\tau}(\boldsymbol{x}_{\tau} \mid  \boldsymbol{x}_{\tau+dt}) = 1$, and $P_{\tau+dt} (\boldsymbol{x}_{\tau+dt})/ Q_{\tau+ dt} (\boldsymbol{x}_{\tau + dt}) = \mathbb{P}^{\dagger} (\boldsymbol{x}_{\tau + dt},\boldsymbol{x}_{\tau})/ \mathbb{Q} (\boldsymbol{x}_{\tau + dt},\boldsymbol{x}_{\tau})$.
\end{proof}
\begin{remark}
The generalized Pythagorean theorem Eq.~(\ref{pythagoreantheorem}) can be rewritten as
\begin{eqnarray}
\int d\boldsymbol{x}_{\tau}  d \boldsymbol{x}_{\tau+dt}  ( \mathbb{P} (\boldsymbol{x}_{\tau + dt},\boldsymbol{x}_{\tau}) - \mathbb{P^{\dagger}} (\boldsymbol{x}_{\tau + dt},\boldsymbol{x}_{\tau}) ) \ln \frac{\mathbb{P}^{\dagger} (\boldsymbol{x}_{\tau + dt},\boldsymbol{x}_{\tau})}{ \mathbb{Q} (\boldsymbol{x}_{\tau + dt},\boldsymbol{x}_{\tau})}  = 0. \label{orthogonal}
\end{eqnarray}
Information-geometrically, Eq.~(\ref{orthogonal}) implies that the $m$-geodesic between two points $\mathbb{P}$ and $\mathbb{P}^{\dagger}$ is orthogonal to the $e$-geodesic between two points $\mathbb{P}^{\dagger}$ and $\mathbb{Q}$~\cite{amari2000methods}. The definition of the $m$-geodesic between two points $\mathbb{P}$ and $\mathbb{P}^{\dagger}$ is given by $(1- \theta) \mathbb{P} + \theta \mathbb{P}^{\dagger}$, where $\theta$ is an affine parameter $\theta \in [0,1]$. The definition of the $e$-geodesic between two points $\mathbb{P}^{\dagger}$ and $\mathbb{Q}$ is also given by $(1- \theta) \ln \mathbb{P}^{\dagger} +  \theta \ln \mathbb{Q}$.
\label{e-geodesic}
\end{remark}
The orthogonality based on the generalized Pythagorean theorem leads to the projection theorem, which provides a minimization problem of the Kullback--Leibler divergence. Thus, Lemma~\ref{pythagorean} implies that the entropy production rate can be obtained from a minimization problem of the Kullback--Leibler divergence.
\begin{theorem} \label{ptep} The entropy production rate $\sigma_{\tau}$ is given by the minimization problem, 
\begin{eqnarray}
\sigma_{\tau} = \lim_{dt \to 0}  \inf_{\mathbb{Q} \in \mathcal{M}_{\rm B}(\mathbb{P})} \frac{ D_{\rm KL}(\mathbb{P}\|\mathbb{Q})}{dt}. \label{projectiontheoremepr}
\end{eqnarray}
The entropy production $\Sigma(\tau+dt;\tau)$ is also given by the minimization problem,
\begin{eqnarray}
\Sigma(\tau+dt;\tau) = \inf_{\mathbb{Q} \in \mathcal{M}_{\rm B}(\mathbb{P})}  D_{\rm KL}(\mathbb{P}\|\mathbb{Q}) + O(dt^2). \label{projectiontheoremep}
\end{eqnarray}
\end{theorem}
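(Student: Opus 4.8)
The plan is to read off the minimization formula directly from the generalized Pythagorean theorem (Lemma~\ref{pythagorean}) together with the non-negativity of the Kullback--Leibler divergence, and then to pass to the $dt \to 0$ limit via Lemma~\ref{lemma1}. No new analytic input should be needed beyond these two lemmas.

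First I would fix $dt>0$ and note, as already observed in the remark following the definition of the backward manifold, that $\mathbb{P}^{\dagger} \in \mathcal{M}_{\rm B}(\mathbb{P})$; in particular the infimum over $\mathcal{M}_{\rm B}(\mathbb{P})$ runs over a nonempty set containing $\mathbb{P}^{\dagger}$. Then, for an arbitrary $\mathbb{Q} \in \mathcal{M}_{\rm B}(\mathbb{P})$, Lemma~\ref{pythagorean} gives $D_{\rm KL}(\mathbb{P}\|\mathbb{Q}) = D_{\rm KL}(\mathbb{P}\|\mathbb{P}^{\dagger}) + D_{\rm KL}(\mathbb{P}^{\dagger}\|\mathbb{Q})$, and since $D_{\rm KL}(\mathbb{P}^{\dagger}\|\mathbb{Q}) \geq 0$ with equality exactly when $\mathbb{Q}=\mathbb{P}^{\dagger}$, one obtains $D_{\rm KL}(\mathbb{P}\|\mathbb{Q}) \geq D_{\rm KL}(\mathbb{P}\|\mathbb{P}^{\dagger})$ for every $\mathbb{Q} \in \mathcal{M}_{\rm B}(\mathbb{P})$, with the bound saturated at $\mathbb{Q}=\mathbb{P}^{\dagger}$. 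Hence, for each $dt>0$,
\[
\inf_{\mathbb{Q} \in \mathcal{M}_{\rm B}(\mathbb{P})} D_{\rm KL}(\mathbb{P}\|\mathbb{Q}) = D_{\rm KL}(\mathbb{P}\|\mathbb{P}^{\dagger}),
\]
and the infimum is in fact a minimum attained at the backward path probability density.

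From here the two claims follow immediately. Dividing by $dt$ and letting $dt \to 0$, Lemma~\ref{lemma1} yields $\lim_{dt\to 0} D_{\rm KL}(\mathbb{P}\|\mathbb{P}^{\dagger})/dt = \sigma_{\tau}$, which gives Eq.~(\ref{projectiontheoremepr}). For Eq.~(\ref{projectiontheoremep}) I would combine the displayed identity with Eq.~(\ref{entropyproductionKL}), namely $D_{\rm KL}(\mathbb{P}\|\mathbb{P}^{\dagger}) = \Sigma(\tau+dt;\tau) + O(dt^2)$, to conclude $\inf_{\mathbb{Q} \in \mathcal{M}_{\rm B}(\mathbb{P})} D_{\rm KL}(\mathbb{P}\|\mathbb{Q}) = \Sigma(\tau+dt;\tau) + O(dt^2)$.

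I do not expect a genuine obstacle: because for every fixed $dt$ the infimum equals $D_{\rm KL}(\mathbb{P}\|\mathbb{P}^{\dagger})$ exactly, no interchange of the limit and the infimum is required. The only point warranting a line of care is the equality case of the divergence inequality — here one uses that any $\mathbb{Q} \in \mathcal{M}_{\rm B}(\mathbb{P})$ shares the conditional factor $\mathbb{T}_{\tau}(\boldsymbol{x}_{\tau}\mid\boldsymbol{x}_{\tau+dt})$ with $\mathbb{P}^{\dagger}$, so that $D_{\rm KL}(\mathbb{P}^{\dagger}\|\mathbb{Q})$ collapses to the divergence between the marginals $P_{\tau+dt}$ and $Q_{\tau+dt}$ (as already computed inside the proof of Lemma~\ref{pythagorean}), whose vanishing forces $\mathbb{Q}=\mathbb{P}^{\dagger}$ and hence identifies the minimizer as the backward path probability density.
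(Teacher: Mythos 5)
Your proposal is correct and follows essentially the same route as the paper: both deduce $\inf_{\mathbb{Q} \in \mathcal{M}_{\rm B}(\mathbb{P})} D_{\rm KL}(\mathbb{P}\|\mathbb{Q}) = D_{\rm KL}(\mathbb{P}\|\mathbb{P}^{\dagger})$ from Lemma~\ref{pythagorean} together with $D_{\rm KL}(\mathbb{P}^{\dagger}\|\mathbb{Q})\geq 0$ and $\mathbb{P}^{\dagger}\in\mathcal{M}_{\rm B}(\mathbb{P})$, then invoke Eqs.~(\ref{entropyproductionrateKL}) and (\ref{entropyproductionKL}) for the two claims. Your additional observation that the minimizer is uniquely $\mathbb{P}^{\dagger}$, and the explicit remark that no interchange of limit and infimum is needed, are correct but not required for the statement.
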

\begin{proof} From Lemma~\ref{pythagorean}, Eq.~(\ref{pythagoreantheorem}) implies 
\begin{eqnarray}
D_{\rm KL}(\mathbb{P}\|\mathbb{P}^{\dagger})  =  \inf_{\mathbb{Q} \in \mathcal{M}_{\rm B}(\mathbb{P})}  D_{\rm KL}(\mathbb{P}\|\mathbb{Q}), 
\label{projectiontheoremkl}
\end{eqnarray}
because $D_{\rm KL}(\mathbb{P}^{\dagger} \| \mathbb{Q}) \geq 0$ and $\mathbb{P}^{\dagger} \in \mathcal{M}_{\rm B}(\mathbb{P})$. By combining Eqs.~(\ref{entropyproductionrateKL}) and (\ref{entropyproductionKL}) with Eq.~(\ref{projectiontheoremkl}), we obtain Eqs.~(\ref{projectiontheoremepr}) and (\ref{projectiontheoremep}), respectively. 
\end{proof}
Thus, Theorem~\ref{ptep} implies that the entropy production $\Sigma(\tau+dt;\tau)$ can be obtained from the information-geometric projection onto the backward manifold $\mathcal{M}_{\rm B}(\mathbb{P})$. This result is helpful in estimating the entropy production $\Sigma(\tau+dt;\tau)$ numerically by calculating the optimization problem to minimize the Kullback--Leibler divergence $D_{\rm KL}(\mathbb{P}\|\mathbb{Q})$.

\subsection{Interpolated dynamics and Fisher information}
We can consider not only the $m$-geodesic between $\mathbb{P}$ and $\mathbb{P}^{\dagger}$ in Lemma~\ref{pythagorean} but also the $e$-geodesic between $\mathbb{P}$ and $\mathbb{P}^{\dagger}$. This geodesic can be discussed in terms of the interpolated dynamics, which has been substantially introduced in Refs.~\cite{dechant2021continuous, dechant2022geometric}. By considering this interpolation, we obtain an expression of the entropy production rate by the Fisher metric, which provides a trade-off relation between the entropy production rate and the fluctuation of any observable. We start with the definition of interpolated dynamics as follows.
\begin{definition} Dynamics described by the following continuity equation are called {\it interpolated dynamics for two force fields $\boldsymbol{\nu}_t (\boldsymbol{x}) =\mu (\boldsymbol{F}_{t}(\boldsymbol{x}) - T \nabla \ln P_{t}(\boldsymbol{x}))$ and $\boldsymbol{\nu}'_t (\boldsymbol{x}) \in \mathbb{R}^d$},
\begin{eqnarray}
    \partial_t P_t(\boldsymbol{x}) &=& - \nabla \cdot (\boldsymbol{\nu}^{\theta}_t(\boldsymbol{x}) P_t(\boldsymbol{x})),\\
    \boldsymbol{\nu}^{\theta}_t(\boldsymbol{x})  &=& (1- \theta)\boldsymbol{\nu}_t(\boldsymbol{x}) +  \theta \boldsymbol{\nu}'_t(\boldsymbol{x}), 
\end{eqnarray}
where $\theta \in [0,1]$ is an interpolation parameter.
\end{definition}
\begin{remark}
The corresponding over-damped Langevin equation of interpolated dynamics is given by
\begin{eqnarray}
    \dot{\boldsymbol{X}}(t) &=& \mu \boldsymbol{F}_{t}(\boldsymbol{X}(t)) + \theta [ \boldsymbol{\nu}'_{t} (\boldsymbol{X}(t))- \boldsymbol{\nu}_{t} (\boldsymbol{X}(t))]+\sqrt{2\mu T} \boldsymbol{\xi} (t).
\end{eqnarray}
\end{remark}
\begin{definition}\label{pathprobint}
{\it The path probability density of interpolated dynamics for two force fields $\boldsymbol{\nu}_{\tau} (\boldsymbol{x}_{\tau}) =\mu (\boldsymbol{F}_{\tau}(\boldsymbol{x}_{\tau}) - T \nabla \ln P_{\tau}(\boldsymbol{x}_{\tau}))$ and $\boldsymbol{\nu}'_{\tau} (\boldsymbol{x}_{\tau})$} is defined as 
\begin{eqnarray}
    \mathbb{P}^{\theta}_{\boldsymbol{\nu}'_{\tau}}(\boldsymbol{x}_{\tau+dt},\boldsymbol{x}_{\tau}) &=& \mathbb{T}^{\theta}_{\tau; \boldsymbol{\nu}'_{\tau}}(\boldsymbol{x}_{\tau+dt} \mid \boldsymbol{x}_{\tau}) P_{\tau} (\boldsymbol{x}_{\tau}), \\
\mathbb{T}^{\theta}_{\tau; \boldsymbol{\nu}'_{\tau}}(\boldsymbol{x}_{\tau+dt} \mid \boldsymbol{x}_{\tau}) &=& \frac{\exp \left[ - \frac{\|\boldsymbol{x}_{\tau + dt}-\boldsymbol{x}_{\tau}-\mu \boldsymbol{F}_{\tau}(\boldsymbol{x}_{\tau})dt  - \theta [ \boldsymbol{\nu}'_{\tau} (\boldsymbol{x}_{\tau})- \boldsymbol{\nu}_{\tau} (\boldsymbol{x}_{\tau})] dt] \|^2}{4 \mu T dt} \right]}{(4 \pi \mu T dt )^{\frac{d}{2}}}.
\end{eqnarray}
\end{definition}

\begin{remark}
The parameter $\theta$ quantifies a difference between interpolated dynamics and original Fokker--Planck dynamics~(\ref{fp}) because $\theta =0$ provides the transition rate $\mathbb{T}^{0}_{\tau; \boldsymbol{\nu}'_{\tau}}(\boldsymbol{x}_{\tau+dt} \mid \boldsymbol{x}_{\tau}) = \mathbb{T}_{\tau}(\boldsymbol{x}_{\tau+dt} \mid \boldsymbol{x}_{\tau}) $ and the forward path probability density $\mathbb{P}^{0}_{\boldsymbol{\nu}'_{\tau}}=\mathbb{P}$ of original Fokker--Planck dynamics $\partial_t P_t(\boldsymbol{x}) = - \nabla \cdot (\boldsymbol{\nu}_t(\boldsymbol{x}) P_t(\boldsymbol{x}))$.
\end{remark}
\begin{remark}
Because the path probability density is given by
\begin{eqnarray}
    \ln \mathbb{P}^{\theta}_{\boldsymbol{\nu}'_{\tau}}(\boldsymbol{x}_{\tau+dt},\boldsymbol{x}_{\tau}) &=&  \theta \left[  \frac{ [\boldsymbol{x}_{\tau + dt}-\boldsymbol{x}_{\tau}-\mu \boldsymbol{F}_{\tau}(\boldsymbol{x}_{\tau})dt  ]\cdot [ \boldsymbol{\nu}'_{\tau} (\boldsymbol{x}_{\tau})- \boldsymbol{\nu}_{\tau} (\boldsymbol{x}_{\tau})] }{2 \mu T } \right]  \nonumber\\
    &&+\ln \mathbb{P}^{0}_{\boldsymbol{\nu}'_{\tau}}(\boldsymbol{x}_{\tau+dt},\boldsymbol{x}_{\tau}) +O(dt),
\end{eqnarray}
the parameter $\theta$ can be regarded as a theta coordinate system for the exponential family in information geometry~\cite{amari2016information}. By neglecting $O(dt)$, $\ln \mathbb{P}^{\theta}_{\boldsymbol{\nu}'_{\tau}}(\boldsymbol{x}_{\tau+dt},\boldsymbol{x}_{\tau})$ can be rewritten as
\begin{eqnarray}
    \ln \mathbb{P}^{\theta}_{\boldsymbol{\nu}'_{\tau}}(\boldsymbol{x}_{\tau+dt},\boldsymbol{x}_{\tau}) &=& (1-\theta) \ln \mathbb{P}^{0}_{\boldsymbol{\nu}'_{\tau}}(\boldsymbol{x}_{\tau+dt},\boldsymbol{x}_{\tau}) + \theta \ln \mathbb{P}^{1}_{\boldsymbol{\nu}'_{\tau}}(\boldsymbol{x}_{\tau+dt},\boldsymbol{x}_{\tau}),
\end{eqnarray}
which implies that $\mathbb{P}^{\theta}_{\boldsymbol{\nu}'_{\tau}}$ gives the $e$-geodesic between two points $\mathbb{P}^{0}_{\boldsymbol{\nu}'_{\tau}} = \mathbb{P}$ and $\mathbb{P}^{1}_{\boldsymbol{\nu}'_{\tau}}$.
\end{remark}
We next consider the backward path probability density $\mathbb{P}^{\dagger}$ in terms of interpolated dynamics. The backward path probability density $\mathbb{P}^{\dagger}$ can be regarded as $\mathbb{P}^{1}_{-\boldsymbol{\nu}_{\tau}}$ as follows.
\begin{lemma} \label{lemmabackward}
The backward path probability density $\mathbb{P}^{\dagger}(\boldsymbol{x}_{\tau+dt},\boldsymbol{x}_{\tau})$ is given by 
\begin{eqnarray}
    \ln \mathbb{P}^{\dagger} (\boldsymbol{x}_{\tau+dt},\boldsymbol{x}_{\tau})= \ln \mathbb{P}^{1}_{-\boldsymbol{\nu}_{\tau}}(\boldsymbol{x}_{\tau+dt},\boldsymbol{x}_{\tau}) + O(dt).
\end{eqnarray}
\end{lemma}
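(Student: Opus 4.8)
The plan is to write both path probability densities explicitly and compare their logarithms term by term, while keeping track of the diffusive scaling $\boldsymbol{x}_{\tau+dt}-\boldsymbol{x}_{\tau}=O(\sqrt{dt})$ that is implicit in the paper's order conventions (so that $\|\boldsymbol{x}_{\tau+dt}-\boldsymbol{x}_{\tau}\|^{2}$ counts as $O(dt)$ after averaging against $\mathbb{T}_{\tau}$, exactly as in the proof of Lemma~\ref{lemma1}). First I would write $\mathbb{P}^{\dagger}(\boldsymbol{x}_{\tau+dt},\boldsymbol{x}_{\tau})=\mathbb{T}_{\tau}(\boldsymbol{x}_{\tau}\mid\boldsymbol{x}_{\tau+dt})P_{\tau+dt}(\boldsymbol{x}_{\tau+dt})$ using Eq.~(\ref{transition}) with the arguments exchanged, and $\mathbb{P}^{1}_{-\boldsymbol{\nu}_{\tau}}(\boldsymbol{x}_{\tau+dt},\boldsymbol{x}_{\tau})=\mathbb{T}^{1}_{\tau;-\boldsymbol{\nu}_{\tau}}(\boldsymbol{x}_{\tau+dt}\mid\boldsymbol{x}_{\tau})P_{\tau}(\boldsymbol{x}_{\tau})$ using Definition~\ref{pathprobint} with $\theta=1$ and $\boldsymbol{\nu}'_{\tau}=-\boldsymbol{\nu}_{\tau}$. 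The crucial preliminary step is to simplify the drift of $\mathbb{T}^{1}_{\tau;-\boldsymbol{\nu}_{\tau}}$: since $\boldsymbol{\nu}'_{\tau}-\boldsymbol{\nu}_{\tau}=-2\boldsymbol{\nu}_{\tau}$ and $\boldsymbol{\nu}_{\tau}=\mu(\boldsymbol{F}_{\tau}-T\nabla\ln P_{\tau})$, the effective drift $\mu\boldsymbol{F}_{\tau}+\theta(\boldsymbol{\nu}'_{\tau}-\boldsymbol{\nu}_{\tau})$ at $\theta=1$ becomes $\mu\boldsymbol{F}_{\tau}-2\boldsymbol{\nu}_{\tau}=-\mu\boldsymbol{F}_{\tau}+2\mu T\nabla\ln P_{\tau}=:\boldsymbol{b}_{\tau}$. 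In particular the two Gaussian prefactors $(4\pi\mu Tdt)^{-d/2}$ coincide and cancel in the ratio.

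Next I would form $\ln[\mathbb{P}^{\dagger}/\mathbb{P}^{1}_{-\boldsymbol{\nu}_{\tau}}]$ and split it as the difference of the two Gaussian exponents plus $\ln[P_{\tau+dt}(\boldsymbol{x}_{\tau+dt})/P_{\tau}(\boldsymbol{x}_{\tau})]$. Writing $\Delta\boldsymbol{x}=\boldsymbol{x}_{\tau+dt}-\boldsymbol{x}_{\tau}$ and noting $\|\boldsymbol{x}_{\tau}-\boldsymbol{x}_{\tau+dt}-\mu\boldsymbol{F}_{\tau}(\boldsymbol{x}_{\tau+dt})dt\|^{2}=\|\Delta\boldsymbol{x}+\mu\boldsymbol{F}_{\tau}(\boldsymbol{x}_{\tau+dt})dt\|^{2}$, the exponent contribution is $\tfrac{1}{4\mu Tdt}\bigl(\|\Delta\boldsymbol{x}-\boldsymbol{b}_{\tau}(\boldsymbol{x}_{\tau})dt\|^{2}-\|\Delta\boldsymbol{x}+\mu\boldsymbol{F}_{\tau}(\boldsymbol{x}_{\tau+dt})dt\|^{2}\bigr)$. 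Using $\boldsymbol{F}_{\tau}(\boldsymbol{x}_{\tau+dt})=\boldsymbol{F}_{\tau}(\boldsymbol{x}_{\tau})+O(\sqrt{dt})$ (which perturbs only an $O(dt)$ drift and hence changes the squared norm by $O(dt^{2})$), the $\|\Delta\boldsymbol{x}\|^{2}$ pieces cancel and the remainder is $-\tfrac{1}{2\mu T}\Delta\boldsymbol{x}\cdot(\boldsymbol{b}_{\tau}+\mu\boldsymbol{F}_{\tau})+O(dt)$. Here the key algebraic identity $\boldsymbol{b}_{\tau}+\mu\boldsymbol{F}_{\tau}=2\mu T\nabla\ln P_{\tau}$ collapses this to $-\Delta\boldsymbol{x}\cdot\nabla\ln P_{\tau}(\boldsymbol{x}_{\tau})+O(dt)$. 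For the density ratio, a joint Taylor expansion in time and space gives $\ln P_{\tau+dt}(\boldsymbol{x}_{\tau+dt})-\ln P_{\tau}(\boldsymbol{x}_{\tau})=\Delta\boldsymbol{x}\cdot\nabla\ln P_{\tau}(\boldsymbol{x}_{\tau})+dt\,\partial_{\tau}\ln P_{\tau}(\boldsymbol{x}_{\tau})+O(\|\Delta\boldsymbol{x}\|^{2})=\Delta\boldsymbol{x}\cdot\nabla\ln P_{\tau}(\boldsymbol{x}_{\tau})+O(dt)$. Adding the two contributions, the $\pm\,\Delta\boldsymbol{x}\cdot\nabla\ln P_{\tau}(\boldsymbol{x}_{\tau})$ terms cancel, leaving $\ln\mathbb{P}^{\dagger}(\boldsymbol{x}_{\tau+dt},\boldsymbol{x}_{\tau})-\ln\mathbb{P}^{1}_{-\boldsymbol{\nu}_{\tau}}(\boldsymbol{x}_{\tau+dt},\boldsymbol{x}_{\tau})=O(dt)$, as claimed.

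I expect the main obstacle to be purely the order-bookkeeping rather than anything conceptual: one must be consistent about treating $\Delta\boldsymbol{x}$ as $O(\sqrt{dt})$ (so a term linear in $\Delta\boldsymbol{x}$ times $dt$ is $o(dt)$ and may be dropped, whereas a term linear in $\Delta\boldsymbol{x}$ \emph{alone} must be retained and shown to cancel), and one must verify that all the surviving pieces linear in $\Delta\boldsymbol{x}$ cancel \emph{exactly} — which is precisely what the identity $\boldsymbol{b}_{\tau}+\mu\boldsymbol{F}_{\tau}=2\mu T\nabla\ln P_{\tau}$ guarantees. The one place where the specific structure of the Fokker--Planck mean local velocity enters is the substitution $\boldsymbol{\nu}_{\tau}=\mu(\boldsymbol{F}_{\tau}-T\nabla\ln P_{\tau})$ that turns the time-reversal choice $\theta=1,\ \boldsymbol{\nu}'_{\tau}=-\boldsymbol{\nu}_{\tau}$ into the drift $-\mu\boldsymbol{F}_{\tau}+2\mu T\nabla\ln P_{\tau}$; after that, the argument is a routine Gaussian expansion.
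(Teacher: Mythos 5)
Your proposal is correct and follows essentially the same route as the paper: form $\ln\mathbb{P}^{\dagger}-\ln\mathbb{P}^{1}_{-\boldsymbol{\nu}_{\tau}}$ as the density ratio plus the difference of Gaussian exponents, Taylor-expand both under the diffusive scaling $\boldsymbol{x}_{\tau+dt}-\boldsymbol{x}_{\tau}=O(\sqrt{dt})$, and observe that the surviving $O(\sqrt{dt})$ terms $\pm\,(\boldsymbol{x}_{\tau+dt}-\boldsymbol{x}_{\tau})\cdot\nabla\ln P_{\tau}(\boldsymbol{x}_{\tau})$ cancel exactly because $\boldsymbol{\nu}_{\tau}=\mu(\boldsymbol{F}_{\tau}-T\nabla\ln P_{\tau})$. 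The paper's proof is terser — it records the decomposition and cites only the expansion $\ln[P_{\tau+dt}(\boldsymbol{x}_{\tau+dt})/P_{\tau}(\boldsymbol{x}_{\tau})]=(\boldsymbol{x}_{\tau+dt}-\boldsymbol{x}_{\tau})\cdot\nabla\ln P_{\tau}(\boldsymbol{x}_{\tau})+O(dt)$ — but the Gaussian-exponent bookkeeping you spell out is exactly what it leaves implicit.
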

\begin{proof}
The backward path probability density $\mathbb{P}^{\dagger}(\boldsymbol{x}_{\tau+dt},\boldsymbol{x}_{\tau})$ is calculated as
\begin{eqnarray}
 && \ln \mathbb{P}^{\dagger}(\boldsymbol{x}_{\tau+dt},\boldsymbol{x}_{\tau}) \nonumber \\
  &=& \ln \mathbb{P}^{1}_{-\boldsymbol{\nu}_{\tau}} (\boldsymbol{x}_{\tau+dt},\boldsymbol{x}_{\tau}) + \ln \frac{ P_{\tau +dt} (\boldsymbol{x}_{\tau +dt}) }{P_{\tau}(\boldsymbol{x}_{\tau})}  - \frac{\|\boldsymbol{x}_{\tau }-\boldsymbol{x}_{\tau+ dt}-\mu \boldsymbol{F}_{\tau}(\boldsymbol{x}_{\tau+ dt}) dt \|^2}{4 \mu Tdt}  \nonumber \\
&&+ \frac{\|\boldsymbol{x}_{\tau + dt}-\boldsymbol{x}_{\tau}+ \mu \boldsymbol{F}_{\tau}(\boldsymbol{x}_{\tau})dt - 2 \mu T \nabla \ln  P_{\tau}(\boldsymbol{x}_{\tau}) dt \|^2}{4 \mu T dt}    \nonumber \\
&=&\ln \mathbb{P}^{1}_{-\boldsymbol{\nu}_{\tau}}(\boldsymbol{x}_{\tau+dt},\boldsymbol{x}_{\tau}) + O(dt),
\end{eqnarray}
where we used $\ln [P_{\tau +dt} (\boldsymbol{x}_{\tau +dt}) /P_{\tau}(\boldsymbol{x}_{\tau})] = (\boldsymbol{x}_{\tau + dt}-\boldsymbol{x}_{\tau}) \cdot \nabla \ln  P_{\tau}(\boldsymbol{x}_{\tau}) + O(dt)$.
\end{proof}
Lemma~\ref{lemmabackward} implies that the path probability density of interpolated dynamics $\mathbb{P}^{\theta}_{-\boldsymbol{\nu}_{\tau}}$ gives the $e$-geodesic between $\mathbb{P}$ and $\mathbb{P}^{\dagger}$.

We discuss an information-geometric interpretation of the entropy production based on $\mathbb{P}^{\theta}_{- \boldsymbol{\nu}_{\tau}}$. To discuss it, we introduce the following lemma proposed in Ref.~\cite{beghi1996relative}.
\begin{lemma} \label{lemmakl}
Let $\theta \in [0,1]$ and $\theta' \in [0,1]$ be two interpolation parameters. The Kullback-Leibler divergence between $\mathbb{P}^{\theta}_{\boldsymbol{\nu}_{\tau}'}$ and $\mathbb{P}^{\theta'}_{\boldsymbol{\nu}_{\tau}'}$ is given by
\begin{eqnarray}
D_{\rm KL}(\mathbb{P}^{\theta}_{\boldsymbol{\nu}_{\tau}'}\|\mathbb{P}^{\theta'}_{\boldsymbol{\nu}_{\tau}'}) = \frac{(\theta - \theta')^2 dt}{4 \mu T} \int d {\boldsymbol{x}}_{\tau}\|\boldsymbol{\nu}_{\tau}'(\boldsymbol{x}_{\tau}) - \boldsymbol{\nu}_{\tau}(\boldsymbol{x}_{\tau}) \|^2 P_{\tau} (\boldsymbol{x}_{\tau}).
\end{eqnarray}
\end{lemma}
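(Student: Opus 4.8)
The plan is to compute the Kullback--Leibler divergence directly from the Gaussian form of the transition densities $\mathbb{T}^{\theta}_{\tau;\boldsymbol{\nu}'_{\tau}}$ and $\mathbb{T}^{\theta'}_{\tau;\boldsymbol{\nu}'_{\tau}}$ given in Definition~\ref{pathprobint}, since both path probability densities share the same initial marginal $P_{\tau}(\boldsymbol{x}_{\tau})$. First I would write
\begin{eqnarray}
D_{\rm KL}(\mathbb{P}^{\theta}_{\boldsymbol{\nu}'_{\tau}}\|\mathbb{P}^{\theta'}_{\boldsymbol{\nu}'_{\tau}}) = \int d\boldsymbol{x}_{\tau} P_{\tau}(\boldsymbol{x}_{\tau}) \int d\boldsymbol{x}_{\tau+dt}\, \mathbb{T}^{\theta}_{\tau;\boldsymbol{\nu}'_{\tau}}(\boldsymbol{x}_{\tau+dt}\mid\boldsymbol{x}_{\tau}) \ln \frac{\mathbb{T}^{\theta}_{\tau;\boldsymbol{\nu}'_{\tau}}(\boldsymbol{x}_{\tau+dt}\mid\boldsymbol{x}_{\tau})}{\mathbb{T}^{\theta'}_{\tau;\boldsymbol{\nu}'_{\tau}}(\boldsymbol{x}_{\tau+dt}\mid\boldsymbol{x}_{\tau})},
\end{eqnarray}
noting that the $P_{\tau}(\boldsymbol{x}_{\tau})$ factors in the logarithm cancel exactly. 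The normalization prefactors $(4\pi\mu T dt)^{-d/2}$ are identical for both densities, so only the quadratic exponents survive in the logarithm.

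Next I would evaluate the difference of the two quadratic exponents. Writing $\boldsymbol{a} = \boldsymbol{x}_{\tau+dt}-\boldsymbol{x}_{\tau}-\mu\boldsymbol{F}_{\tau}(\boldsymbol{x}_{\tau})dt$ and $\boldsymbol{b}_{\tau} = [\boldsymbol{\nu}'_{\tau}(\boldsymbol{x}_{\tau})-\boldsymbol{\nu}_{\tau}(\boldsymbol{x}_{\tau})]dt$, the logarithm of the ratio equals
\begin{eqnarray}
\frac{\|\boldsymbol{a}-\theta'\boldsymbol{b}_{\tau}\|^2 - \|\boldsymbol{a}-\theta\boldsymbol{b}_{\tau}\|^2}{4\mu T dt} = \frac{2(\theta-\theta')\,\boldsymbol{a}\cdot\boldsymbol{b}_{\tau} + (\theta'^2-\theta^2)\|\boldsymbol{b}_{\tau}\|^2}{4\mu T dt}.
\end{eqnarray}
Taking the expectation over $\boldsymbol{x}_{\tau+dt}$ under $\mathbb{T}^{\theta}_{\tau;\boldsymbol{\nu}'_{\tau}}$, the mean of $\boldsymbol{a}$ is $\theta\boldsymbol{b}_{\tau}$ by the Gaussian form, so $\langle\boldsymbol{a}\cdot\boldsymbol{b}_{\tau}\rangle = \theta\|\boldsymbol{b}_{\tau}\|^2$. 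Substituting gives the combined coefficient $[2(\theta-\theta')\theta + \theta'^2-\theta^2]\|\boldsymbol{b}_{\tau}\|^2 = (\theta-\theta')^2\|\boldsymbol{b}_{\tau}\|^2$, and since $\|\boldsymbol{b}_{\tau}\|^2 = \|\boldsymbol{\nu}'_{\tau}(\boldsymbol{x}_{\tau})-\boldsymbol{\nu}_{\tau}(\boldsymbol{x}_{\tau})\|^2 dt^2$, the $dt^2$ cancels one factor of $dt$ in the denominator, leaving exactly the claimed expression after integrating against $P_{\tau}(\boldsymbol{x}_{\tau})$.

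The only subtlety I anticipate is the bookkeeping of higher-order terms in $dt$: the Gaussian width scales as $\sqrt{dt}$, so $\|\boldsymbol{a}\| = O(\sqrt{dt})$ while $\boldsymbol{b}_{\tau} = O(dt)$, meaning the cross term $\boldsymbol{a}\cdot\boldsymbol{b}_{\tau}$ is $O(dt^{3/2})$ pointwise but has expectation of order $dt^2$, and the $\|\boldsymbol{b}_{\tau}\|^2$ term is genuinely $O(dt^2)$; dividing by $4\mu T dt$ yields the stated $O(dt)$ result, with all fluctuation contributions beyond leading order absorbed into the implicit error (the statement as written gives the leading contribution, consistent with the $O(dt)$-level accuracy used throughout this section). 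The Gaussian integral $\langle\boldsymbol{a}\rangle = \theta\boldsymbol{b}_{\tau}$ is the single computational fact doing all the work, and it is immediate from completing the square in the exponent of $\mathbb{T}^{\theta}_{\tau;\boldsymbol{\nu}'_{\tau}}$; no harder estimate is needed.
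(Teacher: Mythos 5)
Your proof is correct and follows the same route as the paper: cancel the shared $P_{\tau}$ factor and normalization constants, expand the difference of quadratic exponents, and use the Gaussian mean $\langle\boldsymbol{a}\rangle = \theta\boldsymbol{b}_{\tau}$ under $\mathbb{T}^{\theta}_{\tau;\boldsymbol{\nu}'_{\tau}}$ to collapse the coefficient to $(\theta-\theta')^2$. The only cosmetic difference is that the paper factors the $dt$ out of $\boldsymbol{b}_{\tau}$ from the start, while you keep it inside and cancel at the end; the algebra is identical.
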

\begin{proof} The quantity $\ln [\mathbb{T}^{\theta}_{\tau;\boldsymbol{\nu}_{\tau}'}(\boldsymbol{x}_{\tau+dt}\mid \boldsymbol{x}_{\tau} )/\mathbb{T}^{\theta'}_{\tau;\boldsymbol{\nu}_{\tau}'}(\boldsymbol{x}_{\tau+dt}\mid \boldsymbol{x}_{\tau} )]$ is calculated as
\begin{eqnarray}
\ln \frac{\mathbb{T}^{\theta}_{\tau;\boldsymbol{\nu}_{\tau}'}(\boldsymbol{x}_{\tau+dt}\mid \boldsymbol{x}_{\tau} )}{\mathbb{T}^{\theta'}_{\tau;\boldsymbol{\nu}_{\tau}'}(\boldsymbol{x}_{\tau+dt}\mid \boldsymbol{x}_{\tau} )} 
&=&  \frac{(\theta - \theta')(\boldsymbol{x}_{\tau + dt}-\boldsymbol{x}_{\tau}-\mu \boldsymbol{F}_{\tau}(\boldsymbol{x}_{\tau})dt) \cdot (\boldsymbol{\nu}_{\tau}' (\boldsymbol{x}_{\tau}) - \boldsymbol{\nu}_{\tau} (\boldsymbol{x}_{\tau})) }{2\mu T} \nonumber \\
&&+ \frac{(\theta'^2 - \theta^2) \|\boldsymbol{\nu}_{\tau}' (\boldsymbol{x}_{\tau}) - \boldsymbol{\nu}_{\tau} (\boldsymbol{x}_{\tau}) \|^2 dt}{4\mu T}. 
\end{eqnarray}
Thus, the Kullback--Leibler divergence is calculated as
\begin{eqnarray}
D_{\rm KL}(\mathbb{P}^{\theta}_{\boldsymbol{\nu}_{\tau}'}\|\mathbb{P}^{\theta'}_{\boldsymbol{\nu}_{\tau}'}) &=& \int d \boldsymbol{x}_{\tau} P_{\tau} ( \boldsymbol{x}_{\tau}) \int d \boldsymbol{x}_{\tau+dt} \mathbb{T}^{\theta}_{\tau;\boldsymbol{\nu}_{\tau}'}(\boldsymbol{x}_{\tau+dt}\mid \boldsymbol{x}_{\tau} ) \ln \frac{\mathbb{T}^{\theta}_{\tau;\boldsymbol{\nu}_{\tau}'}(\boldsymbol{x}_{\tau+dt}\mid \boldsymbol{x}_{\tau} )}{\mathbb{T}^{\theta'}_{\tau;\boldsymbol{\nu}_{\tau}'}(\boldsymbol{x}_{\tau+dt}\mid \boldsymbol{x}_{\tau} )} \nonumber \\
&=& \int d \boldsymbol{x}_{\tau} P_{\tau} ( \boldsymbol{x}_{\tau}) \left[ \frac{(\theta'^2 -2 \theta \theta' + \theta^2) \|\boldsymbol{\nu}_{\tau}' (\boldsymbol{x}_{\tau}) - \boldsymbol{\nu}_{\tau} (\boldsymbol{x}_{\tau}) \|^2 dt}{4\mu T}  \right] \nonumber \\
&=& \frac{(\theta - \theta')^2 dt}{4 \mu T} \int d {\boldsymbol{x}}_{\tau}\|\boldsymbol{\nu}_{\tau}'(\boldsymbol{x}_{\tau}) - \boldsymbol{\nu}_{\tau}(\boldsymbol{x}_{\tau}) \|^2 P_{\tau} (\boldsymbol{x}_{\tau}).
\end{eqnarray}
\end{proof}
\begin{remark}
Let $g_{\theta (\boldsymbol{\nu}_{\tau}') \theta (\boldsymbol{\nu}_{\tau}')}(\mathbb{P})$ be the Fisher information of the interpolation parameter $\theta$ for $\mathbb{P}^{\theta}_{\boldsymbol{\nu}_{\tau}'}$ defined as
\begin{eqnarray}
g_{\theta (\boldsymbol{\nu}_{\tau}') \theta (\boldsymbol{\nu}_{\tau}')} (\mathbb{P}) = \lim_{\Delta \theta \to 0} \frac{2D_{\rm KL}(\mathbb{P}\|\mathbb{P}^{\Delta \theta}_{\boldsymbol{\nu}_{\tau}'})}{(\Delta \theta)^2}. \label{deffisher}
\end{eqnarray}
Information-geometrically, this Fisher information $g_{\theta (\boldsymbol{\nu}_{\tau}') \theta (\boldsymbol{\nu}_{\tau}')} (\mathbb{P})$ can be regarded as a particular Riemannian metric called the Fisher metric~\cite{amari2016information} at point $\mathbb{P}$. 
If we consider $\mathbb{P}^{\theta}_{\boldsymbol{\nu}_{\tau}'} =\mathbb{P}^{0}_{\boldsymbol{\nu}_{\tau}'}=\mathbb{P}$ and $\mathbb{P}^{\theta'}_{\boldsymbol{\nu}_{\tau}'} = \mathbb{P}^{\Delta \theta}_{\boldsymbol{\nu}_{\tau}'}$ in  Lemma~\ref{lemmakl}, the Fisher metric is given by
\begin{eqnarray}
g_{\theta (\boldsymbol{\nu}_{\tau}') \theta (\boldsymbol{\nu}_{\tau}')} (\mathbb{P})= \frac{dt}{2 \mu T} \int d {\boldsymbol{x}}_{\tau}\|\boldsymbol{\nu}_{\tau}'(\boldsymbol{x}_{\tau}) - \boldsymbol{\nu}_{\tau}(\boldsymbol{x}_{\tau}) \|^2 P_{\tau} (\boldsymbol{x}_{\tau}). \label{fishermetricv}
\end{eqnarray}
\end{remark}
Based on Lemma~\ref{lemmakl}, we obtain an information-geometric interpretation of the entropy production, which is substantially obtained in Refs.~\cite{dechant2021continuous, dechant2022geometric}.
\begin{theorem} \label{theoremfisher}
Let $\theta \in [0,1]$ and $\theta' \in [0,1]$ be any interpolation parameters. The entropy production rate $\sigma_{\tau}$ for original Fokker--Planck dynamics~(\ref{fp}) is given by
\begin{eqnarray}
\sigma_{\tau} &=& \lim_{dt \to 0} \frac{D_{\rm KL}(\mathbb{P}_{- \boldsymbol{\nu}_{\tau}}^{\theta}\|\mathbb{P}_{- \boldsymbol{\nu}_{\tau}}^{\theta'}) }{ (\theta-\theta')^2 dt}.
\label{fisher1}
\end{eqnarray}
The entropy production $\Sigma (\tau +dt; \tau)$ for original Fokker--Planck dynamics~(\ref{fp}) is also given by
\begin{eqnarray}
\Sigma (\tau +dt; \tau) &=& \frac{D_{\rm KL}(\mathbb{P}_{- \boldsymbol{\nu}_{\tau}}^{\theta}\|\mathbb{P}_{- \boldsymbol{\nu}_{\tau}}^{\theta'}) }{ (\theta-\theta')^2} +O(dt^2).
\label{fisher2}
\end{eqnarray}
In terms of the Fisher information, the entropy production $\Sigma (\tau +dt; \tau)$ for original Fokker--Planck dynamics~(\ref{fp}) is given by half of the Fisher metric, 
\begin{eqnarray}
\Sigma (\tau +dt; \tau) =\sigma_{\tau} dt + O(dt^2) = \frac{1}{2}g_{\theta (-\boldsymbol{\nu}_{\tau}) \theta (-\boldsymbol{\nu}_{\tau})} (\mathbb{P}) + O(dt^2).
\label{fisher3}
\end{eqnarray}
\end{theorem}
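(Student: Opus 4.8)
The plan is to derive all three displayed equations directly from Lemma~\ref{lemmakl} by specializing the auxiliary vector field to $\boldsymbol{\nu}_{\tau}' = -\boldsymbol{\nu}_{\tau}$, and then to match the resulting expression against the Definition of the entropy production rate and against Eq.~(\ref{fishermetricv}). First I would note that the path probability densities $\mathbb{P}^{\theta}_{-\boldsymbol{\nu}_{\tau}}$ and $\mathbb{P}^{\theta'}_{-\boldsymbol{\nu}_{\tau}}$ are perfectly well defined objects by Definition~\ref{pathprobint} (they are exact Gaussian transition kernels multiplied by $P_{\tau}$), so Lemma~\ref{lemmakl} applies verbatim with $\boldsymbol{\nu}_{\tau}' = -\boldsymbol{\nu}_{\tau}$ and yields
\begin{eqnarray}
D_{\rm KL}(\mathbb{P}^{\theta}_{-\boldsymbol{\nu}_{\tau}}\|\mathbb{P}^{\theta'}_{-\boldsymbol{\nu}_{\tau}}) &=& \frac{(\theta-\theta')^2 dt}{4\mu T}\int d\boldsymbol{x}_{\tau}\, \| -\boldsymbol{\nu}_{\tau}(\boldsymbol{x}_{\tau}) - \boldsymbol{\nu}_{\tau}(\boldsymbol{x}_{\tau}) \|^2 P_{\tau}(\boldsymbol{x}_{\tau}).
\end{eqnarray}

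The only nontrivial manipulation is the scalar identity $\| -\boldsymbol{\nu}_{\tau}(\boldsymbol{x}_{\tau}) - \boldsymbol{\nu}_{\tau}(\boldsymbol{x}_{\tau}) \|^2 = \| 2\boldsymbol{\nu}_{\tau}(\boldsymbol{x}_{\tau}) \|^2 = 4\|\boldsymbol{\nu}_{\tau}(\boldsymbol{x}_{\tau})\|^2$, which converts the right-hand side into $\frac{(\theta-\theta')^2 dt}{\mu T}\int d\boldsymbol{x}_{\tau}\,\|\boldsymbol{\nu}_{\tau}(\boldsymbol{x}_{\tau})\|^2 P_{\tau}(\boldsymbol{x}_{\tau}) = (\theta-\theta')^2 dt\,\sigma_{\tau}$, by the Definition of $\sigma_{\tau}$. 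Dividing by $(\theta-\theta')^2 dt$ and letting $dt\to 0$ gives Eq.~(\ref{fisher1}). For Eq.~(\ref{fisher2}) I would instead divide only by $(\theta-\theta')^2$ and combine the result with the elementary fact that, since $t\mapsto\sigma_t$ is continuous at $t=\tau$, one has $\Sigma(\tau+dt;\tau) = \int_{\tau}^{\tau+dt}dt'\,\sigma_{t'} = \sigma_{\tau}dt + O(dt^2)$; this is the same bookkeeping convention already used around Eq.~(\ref{entropyproductionKL}).

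For Eq.~(\ref{fisher3}) I would substitute $\boldsymbol{\nu}_{\tau}' = -\boldsymbol{\nu}_{\tau}$ directly into the expression~(\ref{fishermetricv}) for the Fisher metric and apply the same identity $\| -\boldsymbol{\nu}_{\tau} - \boldsymbol{\nu}_{\tau} \|^2 = 4\|\boldsymbol{\nu}_{\tau}\|^2$, obtaining $g_{\theta(-\boldsymbol{\nu}_{\tau})\theta(-\boldsymbol{\nu}_{\tau})}(\mathbb{P}) = \frac{2dt}{\mu T}\int d\boldsymbol{x}_{\tau}\,\|\boldsymbol{\nu}_{\tau}(\boldsymbol{x}_{\tau})\|^2 P_{\tau}(\boldsymbol{x}_{\tau}) = 2\sigma_{\tau}dt$; taking half and again using $\Sigma(\tau+dt;\tau)=\sigma_{\tau}dt+O(dt^2)$ delivers the final equality.

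I do not expect any genuine obstacle: the argument is essentially Lemma~\ref{lemmakl} plus the observation that the ``reverse'' vector field $-\boldsymbol{\nu}_{\tau}$ differs from $\boldsymbol{\nu}_{\tau}$ by exactly $2\boldsymbol{\nu}_{\tau}$, which reproduces the quadratic form defining $\sigma_{\tau}$. The single point deserving a word of care is the replacement of the exact quantity $(\theta-\theta')^2 dt\,\sigma_{\tau}$ by $\Sigma(\tau+dt;\tau)$ up to $O(dt^2)$, which tacitly uses the regularity of the dynamics so that $\sigma_t$ is continuous at $t=\tau$ and its integral over $[\tau,\tau+dt]$ equals $\sigma_{\tau}dt$ to leading order.
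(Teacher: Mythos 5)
Your proof is correct and follows essentially the same route as the paper: both specialize Lemma~\ref{lemmakl} to $\boldsymbol{\nu}_{\tau}' = -\boldsymbol{\nu}_{\tau}$, absorb the factor $\|2\boldsymbol{\nu}_{\tau}\|^2 = 4\|\boldsymbol{\nu}_{\tau}\|^2$ into the $1/(4\mu T)$ prefactor to recover $\sigma_{\tau}$, and convert $\sigma_{\tau} dt$ to $\Sigma(\tau+dt;\tau) + O(dt^2)$. For Eq.~(\ref{fisher3}) you plug $\boldsymbol{\nu}_{\tau}'=-\boldsymbol{\nu}_{\tau}$ into Eq.~(\ref{fishermetricv}) while the paper specializes $\theta=0$, $\theta'=\Delta\theta$ in Eq.~(\ref{prooffisher}) and invokes the definition (\ref{deffisher}), but these are the same computation since (\ref{fishermetricv}) is itself derived from (\ref{deffisher}) and Lemma~\ref{lemmakl}.
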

\begin{proof}
From Lemma~\ref{lemmakl}, the Kullback--Leibler divergence $D_{\rm KL}(\mathbb{P}_{- \boldsymbol{\nu}_{\tau}}^{\theta}\|\mathbb{P}_{- \boldsymbol{\nu}_{\tau}}^{\theta'})$ is calculated as
\begin{eqnarray}
D_{\rm KL}(\mathbb{P}_{- \boldsymbol{\nu}_{\tau}}^{\theta}\|\mathbb{P}_{- \boldsymbol{\nu}_{\tau}}^{\theta'}) &=& \frac{(\theta- \theta')^2 dt}{\mu T} \int d {\boldsymbol{x}}_{\tau}\|\boldsymbol{\nu}_{\tau}(\boldsymbol{x}_{\tau}) \|^2 P_{\tau} (\boldsymbol{x}_{\tau}) \nonumber \\
&=&(\theta- \theta')^2 dt \sigma_{\tau} \nonumber \\
&=& (\theta- \theta')^2 \Sigma (\tau +dt; \tau) + O(dt^2).
\label{prooffisher}
\end{eqnarray}
Thus, Eqs.~(\ref{fisher1}) and (\ref{fisher2}) holds. If we consider $\mathbb{P}_{- \boldsymbol{\nu}_{\tau}}^{\theta} = \mathbb{P}_{- \boldsymbol{\nu}_{\tau}}^{0} =\mathbb{P}$ and $\mathbb{P}_{- \boldsymbol{\nu}_{\tau}}^{\theta'} = \mathbb{P}_{- \boldsymbol{\nu}_{\tau}}^{\Delta \theta}$ for Eq.~(\ref{prooffisher}) and use Eq.~(\ref{deffisher}), we obtain Eq.~(\ref{fisher3}).
\end{proof}
\begin{remark}
The entropy production can be regarded as half of the Fisher metric, and thus the entropy production can also be a particular Riemannian metric of differential geometry. Based on the Fisher metric, we can introduce the square of the line element $ds^2_{\rm path}$ defined as $ds^2_{\rm path} = g_{\theta (-\boldsymbol{\nu}_{\tau}) \theta (-\boldsymbol{\nu}_{\tau})} (\mathbb{P})  d\theta^2 = 2 \Sigma(\tau+dt ;\tau) d\theta^2 +O(dt^2)$, where $\theta$ is the interpolation parameter for $\mathbb{P}_{- \boldsymbol{\nu}_{\tau}}^{\theta}$ and this line element is introduced on the corresponding $e$-geodeisc.
\end{remark}
\subsection{Thermodynamic uncertainty relations}
The link between the entropy production rate and the Fisher metric leads to thermodynamic trade-off relations between the entropy production rate and the fluctuation of any observable. A particular case of thermodynamic trade-off relations was proposed as the thermodynamic uncertainty relations~\cite{barato2015thermodynamic,gingrich2016dissipation}. In Refs.~\cite{dechant2018multidimensional,ito2020stochastic,hasegawa2019uncertainty,liu2020thermodynamic,dechant2022geometric,yoshimura2022geometrical}, several links between the Cram\'{e}r--Rao bound and generalizations of the thermodynamic uncertainty relations have been discussed. Here, we newly propose a generalization of the thermodynamic uncertainty relations based on the fact that the entropy production is regarded as half of the Fisher information in Theorem~\ref{theoremfisher}.
To obtain the proposed thermodynamic uncertainty relation, we start with the Cram\'{e}r--Rao bound.
\begin{lemma} \label{lemmacramerrao}
Let $R(\boldsymbol{x}_{\tau+dt}, \boldsymbol{x}_{\tau}) \in \mathbb{R}$ be any function of states $\boldsymbol{x}_{\tau+dt} \in \mathbb{R}^d$ and $\boldsymbol{x}_{\tau} \in \mathbb{R}^d$. The Fisher metric $g_{\theta  (\boldsymbol{\nu}_{\tau}') \theta  (\boldsymbol{\nu}_{\tau}')}(\mathbb{P})$ is bounded by the Cram\'{e}r--Rao bound as follows,
\begin{eqnarray}
g_{\theta  (\boldsymbol{\nu}_{\tau}') \theta  (\boldsymbol{\nu}_{\tau}')}(\mathbb{P}) \geq \frac{\left. \left(\partial_{\theta} \mathbb{E}_{\mathbb{P}^{\theta}_{ \boldsymbol{\nu}_{\tau}'}} [R]  \right)^2 \right\rvert_{\theta =0}}{\mathbb{E}_{\mathbb{P}_{ \boldsymbol{\nu}_{\tau}'}^0} \left[(\Delta_{\mathbb{P}_{ \boldsymbol{\nu}_{\tau}'}^0} R)^2 \right]}  ,
\end{eqnarray}
where $\rvert_{\theta =0}$ stands for substitution $\theta =0$, $\mathbb{E}_{\mathbb{P}^{\theta}_{ \boldsymbol{\nu}_{\tau}'}} [R]$ is the expected value defined as
\begin{eqnarray}
\mathbb{E}_{\mathbb{P}^{\theta}_{ \boldsymbol{\nu}_{\tau}'}} [R] = \int d \boldsymbol{x}_{\tau+dt} d \boldsymbol{x}_{\tau} \mathbb{P}^{\theta}_{ \boldsymbol{\nu}_{\tau}'} (\boldsymbol{x}_{\tau+dt}, \boldsymbol{x}_{\tau})R(\boldsymbol{x}_{\tau+dt}, \boldsymbol{x}_{\tau}),
\end{eqnarray}
and the deviation $\Delta_{\mathbb{P}^{\theta}_{\boldsymbol{\nu}_{\tau}'}} R (\boldsymbol{x}_{\tau+dt}, \boldsymbol{x}_{\tau})$ is defined as $\Delta_{\mathbb{P}^{\theta}_{\boldsymbol{\nu}_{\tau}'}} R (\boldsymbol{x}_{\tau+dt}, \boldsymbol{x}_{\tau}) = R (\boldsymbol{x}_{\tau+dt}, \boldsymbol{x}_{\tau}) - \mathbb{E}_{\mathbb{P}^{\theta}_{\boldsymbol{\nu}_{\tau}'}} [R] $.
\end{lemma}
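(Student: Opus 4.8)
The plan is to prove Lemma~\ref{lemmacramerrao} as an instance of the classical Cram\'{e}r--Rao inequality, viewing the interpolation parameter $\theta$ as the parameter of the one-parameter statistical family $\{\mathbb{P}^{\theta}_{\boldsymbol{\nu}_{\tau}'}\}_{\theta\in[0,1]}$ and $R$ as an estimator for a reparametrized quantity.

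\emph{Step 1 (the score function).} First I would introduce the score $s(\boldsymbol{x}_{\tau+dt},\boldsymbol{x}_{\tau}) := \partial_{\theta}\ln \mathbb{P}^{\theta}_{\boldsymbol{\nu}_{\tau}'}(\boldsymbol{x}_{\tau+dt},\boldsymbol{x}_{\tau})\rvert_{\theta=0}$. From the explicit Gaussian form of $\mathbb{T}^{\theta}_{\tau;\boldsymbol{\nu}_{\tau}'}$ in Definition~\ref{pathprobint}, expanding the quadratic in the exponent gives $s = [\boldsymbol{x}_{\tau+dt}-\boldsymbol{x}_{\tau}-\mu \boldsymbol{F}_{\tau}(\boldsymbol{x}_{\tau})dt]\cdot[\boldsymbol{\nu}_{\tau}'(\boldsymbol{x}_{\tau})-\boldsymbol{\nu}_{\tau}(\boldsymbol{x}_{\tau})]/(2\mu T)$. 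Differentiating the normalization $\int d\boldsymbol{x}_{\tau+dt}\, d\boldsymbol{x}_{\tau}\,\mathbb{P}^{\theta}_{\boldsymbol{\nu}_{\tau}'}=1$ under the integral sign and setting $\theta=0$ yields the zero-mean property $\mathbb{E}_{\mathbb{P}^{0}_{\boldsymbol{\nu}_{\tau}'}}[s]=0$.

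\emph{Step 2 (Fisher metric as the variance of the score).} Starting from the definition~(\ref{deffisher}), a second-order Taylor expansion of $\ln \mathbb{P}^{\Delta\theta}_{\boldsymbol{\nu}_{\tau}'}$ around $\theta=0$, combined with $\mathbb{E}_{\mathbb{P}^{0}_{\boldsymbol{\nu}_{\tau}'}}[s]=0$ and $\int d\boldsymbol{x}_{\tau+dt}\,d\boldsymbol{x}_{\tau}\,\partial_{\theta}^{2}\mathbb{P}^{\theta}_{\boldsymbol{\nu}_{\tau}'}\rvert_{\theta=0}=0$, gives $g_{\theta(\boldsymbol{\nu}_{\tau}')\theta(\boldsymbol{\nu}_{\tau}')}(\mathbb{P}) = \mathbb{E}_{\mathbb{P}^{0}_{\boldsymbol{\nu}_{\tau}'}}[s^{2}]$. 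As a consistency check, substituting the explicit $s$ of Step~1 and using that $\boldsymbol{x}_{\tau+dt}-\boldsymbol{x}_{\tau}-\mu\boldsymbol{F}_{\tau}(\boldsymbol{x}_{\tau})dt$ has covariance matrix $2\mu T dt$ times the $d\times d$ identity under $\mathbb{T}_{\tau}$ reproduces Eq.~(\ref{fishermetricv}), so one may alternatively take this identity as already established by Lemma~\ref{lemmakl}.

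\emph{Step 3 (the numerator as a covariance, then Cauchy--Schwarz).} Differentiating $\mathbb{E}_{\mathbb{P}^{\theta}_{\boldsymbol{\nu}_{\tau}'}}[R]=\int d\boldsymbol{x}_{\tau+dt}\,d\boldsymbol{x}_{\tau}\,\mathbb{P}^{\theta}_{\boldsymbol{\nu}_{\tau}'}R$ under the integral sign and evaluating at $\theta=0$ gives $\partial_{\theta}\mathbb{E}_{\mathbb{P}^{\theta}_{\boldsymbol{\nu}_{\tau}'}}[R]\rvert_{\theta=0} = \mathbb{E}_{\mathbb{P}^{0}_{\boldsymbol{\nu}_{\tau}'}}[sR] = \mathbb{E}_{\mathbb{P}^{0}_{\boldsymbol{\nu}_{\tau}'}}[s\,\Delta_{\mathbb{P}^{0}_{\boldsymbol{\nu}_{\tau}'}}R]$, the last equality using $\mathbb{E}_{\mathbb{P}^{0}_{\boldsymbol{\nu}_{\tau}'}}[s]=0$. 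Applying the Cauchy--Schwarz inequality to the pair $(s,\Delta_{\mathbb{P}^{0}_{\boldsymbol{\nu}_{\tau}'}}R)$ in $L^{2}(\mathbb{P}^{0}_{\boldsymbol{\nu}_{\tau}'})$ gives $\bigl(\partial_{\theta}\mathbb{E}_{\mathbb{P}^{\theta}_{\boldsymbol{\nu}_{\tau}'}}[R]\rvert_{\theta=0}\bigr)^{2} \leq \mathbb{E}_{\mathbb{P}^{0}_{\boldsymbol{\nu}_{\tau}'}}[s^{2}]\cdot\mathbb{E}_{\mathbb{P}^{0}_{\boldsymbol{\nu}_{\tau}'}}[(\Delta_{\mathbb{P}^{0}_{\boldsymbol{\nu}_{\tau}'}}R)^{2}] = g_{\theta(\boldsymbol{\nu}_{\tau}')\theta(\boldsymbol{\nu}_{\tau}')}(\mathbb{P})\cdot\mathbb{E}_{\mathbb{P}^{0}_{\boldsymbol{\nu}_{\tau}'}}[(\Delta_{\mathbb{P}^{0}_{\boldsymbol{\nu}_{\tau}'}}R)^{2}]$, and dividing by the positive variance $\mathbb{E}_{\mathbb{P}^{0}_{\boldsymbol{\nu}_{\tau}'}}[(\Delta_{\mathbb{P}^{0}_{\boldsymbol{\nu}_{\tau}'}}R)^{2}]$ yields the stated bound.

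I expect the main obstacle to be analytical rather than structural: one must justify differentiation under the integral sign and the interchange of the $\Delta\theta\to0$ limit with integration in the path-probability setting, where the transition kernels are $dt$-dependent Gaussians and $R$ is an essentially arbitrary function. This is handled by imposing mild integrability/growth conditions on $R$ so that dominated convergence applies, exploiting the rapid Gaussian decay of $\mathbb{T}^{\theta}_{\tau;\boldsymbol{\nu}_{\tau}'}$ in $\boldsymbol{x}_{\tau+dt}$ together with the assumed decay of $P_{\tau}$ in $\boldsymbol{x}_{\tau}$; once these regularity matters are settled, everything reduces to the routine finite-dimensional Cram\'{e}r--Rao manipulation.
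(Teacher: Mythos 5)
Your proposal is correct and follows essentially the same route as the paper: both express $g_{\theta(\boldsymbol{\nu}_{\tau}')\theta(\boldsymbol{\nu}_{\tau}')}(\mathbb{P})$ as the second moment of the score $\partial_{\theta}\ln\mathbb{P}^{\theta}_{\boldsymbol{\nu}_{\tau}'}\rvert_{\theta=0}$, rewrite $\partial_{\theta}\mathbb{E}_{\mathbb{P}^{\theta}_{\boldsymbol{\nu}_{\tau}'}}[R]\rvert_{\theta=0}$ as the covariance of the score with $\Delta_{\mathbb{P}^{0}_{\boldsymbol{\nu}_{\tau}'}}R$ using $\int\partial_{\theta}\mathbb{P}^{\theta}_{\boldsymbol{\nu}_{\tau}'}=0$, and conclude by the Cauchy--Schwarz inequality. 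Your extra remarks (explicit Gaussian form of the score, consistency check against Lemma~\ref{lemmakl}, and regularity caveats for differentiating under the integral) are not in the paper's proof but are harmless elaborations, not a different method.
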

\begin{proof}
The Fisher metric $g_{\theta (\boldsymbol{\nu}_{\tau}') \theta (\boldsymbol{\nu}_{\tau}')}(\mathbb{P})$ is calculated as
\begin{eqnarray}
g_{\theta  (\boldsymbol{\nu}_{\tau}') \theta  (\boldsymbol{\nu}_{\tau}')}(\mathbb{P}^{0}_{\boldsymbol{\nu}_{\tau}'}) &=& \lim_{\theta \to 0} \frac{2  \int d \boldsymbol{x}_{\tau+dt} d \boldsymbol{x}_{\tau} \mathbb{P}^{0}_{\boldsymbol{\nu}_{\tau}'} (\boldsymbol{x}_{\tau+dt}, \boldsymbol{x}_{\tau}) \ln \frac{\mathbb{P}^{0}_{\boldsymbol{\nu}_{\tau}'} (\boldsymbol{x}_{\tau+dt}, \boldsymbol{x}_{\tau})}{\mathbb{P}^{\theta}_{\boldsymbol{\nu}_{\tau}'} (\boldsymbol{x}_{\tau+dt}, \boldsymbol{x}_{\tau})}}{\theta^2} \nonumber \\
&=& \int d \boldsymbol{x}_{\tau+dt} d \boldsymbol{x}_{\tau}  \mathbb{P}^{0}_{\boldsymbol{\nu}_{\tau}'} (\boldsymbol{x}_{\tau+dt}, \boldsymbol{x}_{\tau}) \left. \left(\partial_{\theta}\ln \mathbb{P}^{\theta}_{\boldsymbol{\nu}_{\tau}'}  (\boldsymbol{x}_{\tau+dt}, \boldsymbol{x}_{\tau}) \right)^2 \right\rvert_{\theta =0}.
\end{eqnarray}
From the Cauchy--Schwartz inequality, we obtain the Cram\'{e}r--Rao bound,
\begin{eqnarray}
&&\left. (\partial_{\theta} \mathbb{E}_{\mathbb{P}^{\theta}_{\boldsymbol{\nu}_{\tau}'}} [R] )^2 \right\rvert_{\theta =0}\nonumber \\
&=& \left[ \int d \boldsymbol{x}_{\tau+dt} d \boldsymbol{x}_{\tau} \left(\sqrt{ \mathbb{P}^{0}_{\boldsymbol{\nu}_{\tau}'} (\boldsymbol{x}_{\tau+dt}, \boldsymbol{x}_{\tau}) } \right)^2 \Delta_{\mathbb{P}^{0}_{\boldsymbol{\nu}_{\tau}'}} R (\boldsymbol{x}_{\tau+dt}, \boldsymbol{x}_{\tau})   \left. \left(\partial_{\theta} \ln \mathbb{P}^{\theta}_{\boldsymbol{\nu}_{\tau}'}  (\boldsymbol{x}_{\tau+dt}, \boldsymbol{x}_{\tau}) \right) \right\rvert_{\theta =0}\right]^2\nonumber \\
& \leq &\left[ \int d \boldsymbol{x}_{\tau+dt} d \boldsymbol{x}_{\tau}  \mathbb{P}^{0}_{\boldsymbol{\nu}_{\tau}'} (\boldsymbol{x}_{\tau+dt}, \boldsymbol{x}_{\tau}) \left( \Delta_{\mathbb{P}^{0}_{\boldsymbol{\nu}_{\tau}'}} R (\boldsymbol{x}_{\tau+dt}, \boldsymbol{x}_{\tau}) \right)^2\right] \nonumber \\
&&\times \left[ \int d \boldsymbol{x}_{\tau+dt} d \boldsymbol{x}_{\tau}  \mathbb{P}^{0}_{\boldsymbol{\nu}_{\tau}'} (\boldsymbol{x}_{\tau+dt}, \boldsymbol{x}_{\tau}) \left. \left(\partial_{\theta} \ln \mathbb{P}^{\theta}_{\boldsymbol{\nu}_{\tau}'}  (\boldsymbol{x}_{\tau+dt}, \boldsymbol{x}_{\tau}) \right)^2 \right\rvert_{\theta =0} \right]\nonumber \\
&=& \mathbb{E}_{\mathbb{P}^{0}_{\boldsymbol{\nu}_{\tau}'}} \left[(\Delta_{\mathbb{P}^{0}_{\boldsymbol{\nu}_{\tau}'}} R )^2 \right] g_{\theta  (\boldsymbol{\nu}_{\tau}')  \theta  (\boldsymbol{\nu}_{\tau}')}(\mathbb{P}^{0}_{\boldsymbol{\nu}_{\tau}'}),
\end{eqnarray}
where we used $\int d \boldsymbol{x}_{\tau+dt} d \boldsymbol{x}_{\tau} \partial_{\theta} \mathbb{P}^{\theta}_{\boldsymbol{\nu}_{\tau}'} (\boldsymbol{x}_{\tau+dt}, \boldsymbol{x}_{\tau}) =0$.
\end{proof}
By plugging Eq.~(\ref{fisher3}) into the Cram\'{e}r--Rao bound, the proposed generalized thermodynamic uncertainty relation, which is a trade-off relation between the entropy production rate $\sigma_{\tau}$ and the fluctuation of any observable ${\rm Var} [R]=\mathbb{E}_{\mathbb{P}}[(\Delta_{\mathbb{P}} R )^2]$, can be obtained as follows. 
\begin{proposition}
Let $R(\boldsymbol{x}_{\tau+dt}, \boldsymbol{x}_{\tau}) \in \mathbb{R}$ be any function of states $\boldsymbol{x}_{\tau+dt} \in \mathbb{R}^d$ and $\boldsymbol{x}_{\tau} \in \mathbb{R}^d$ such that $ \mathbb{T}_{\tau} (\boldsymbol{x}_{\tau+dt} \mid \boldsymbol{x}_{\tau}) R(\boldsymbol{x}_{\tau+dt}, \boldsymbol{x}_{\tau}) \to 0$ at infinity of $\boldsymbol{x}_{\tau+dt}$. The entropy production rate $\sigma_{\tau}$ is bounded by the generalized thermodynamic uncertainty relation,
\begin{eqnarray}
\label{TUR}
&&\sigma_{\tau}dt  \geq \frac{2 (\mathcal{J} [\boldsymbol{\tilde{R}}])^2}{{\rm Var} [R]/dt} dt  +O(dt^2),
\end{eqnarray}
where ${\rm Var} [R]$ is the variance defined as ${\rm Var} [R] =\mathbb{E}_{\mathbb{P}}[(\Delta_{\mathbb{P}} R )^2]$ and $\mathcal{J} [\boldsymbol{\tilde{R}}] $ is the generalized current defined as 
\begin{eqnarray}
\mathcal{J} [\boldsymbol{\tilde{R}}] &=&\int d \boldsymbol{x}_{\tau}   \boldsymbol{\nu}(\boldsymbol{x}_{\tau}) \cdot \boldsymbol{\tilde{R}} (\boldsymbol{x}_{\tau}) P_{\tau}(\boldsymbol{x}_{\tau}) , \\
\boldsymbol{\tilde{R}} (\boldsymbol{x}_{\tau}) 
&=&  \int d \boldsymbol{x}_{\tau+dt} \mathbb{T}_{\tau} (\boldsymbol{x}_{\tau+dt} \mid \boldsymbol{x}_{\tau})  \nabla_{ \boldsymbol{x}_{\tau+dt}} R (\boldsymbol{x}_{\tau+dt}, \boldsymbol{x}_{\tau}).
\end{eqnarray}
Here, $\nabla_{ \boldsymbol{x}}$ is a gradient operator for $\boldsymbol{x} \in \mathbb{R}^d$. 
\end{proposition}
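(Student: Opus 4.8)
The plan is to specialize the Cram\'{e}r--Rao bound of Lemma~\ref{lemmacramerrao} to the $e$-geodesic $\mathbb{P}^{\theta}_{-\boldsymbol{\nu}_{\tau}}$ that appears in Theorem~\ref{theoremfisher}, i.e.\ to take $\boldsymbol{\nu}'_{\tau} = -\boldsymbol{\nu}_{\tau}$. With that choice $\mathbb{P}^{0}_{-\boldsymbol{\nu}_{\tau}} = \mathbb{P}$, so the denominator $\mathbb{E}_{\mathbb{P}^{0}_{-\boldsymbol{\nu}_{\tau}}}[(\Delta R)^2]$ becomes ${\rm Var}[R]$, and the Fisher metric $g_{\theta(-\boldsymbol{\nu}_{\tau})\theta(-\boldsymbol{\nu}_{\tau})}(\mathbb{P})$ on the left-hand side is precisely the quantity that Eq.~(\ref{fisher3}) identifies with $2\Sigma(\tau+dt;\tau) = 2\sigma_{\tau}dt + O(dt^2)$. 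Hence Lemma~\ref{lemmacramerrao} already gives
\begin{eqnarray}
2\sigma_{\tau}dt + O(dt^2) \geq \frac{\left. \left(\partial_{\theta}\mathbb{E}_{\mathbb{P}^{\theta}_{-\boldsymbol{\nu}_{\tau}}}[R]\right)^2 \right\rvert_{\theta=0}}{{\rm Var}[R]},
\end{eqnarray}
and it remains only to evaluate the numerator on the right.

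For that numerator I would exploit the fact that, by Definition~\ref{pathprobint}, the kernel $\mathbb{T}^{\theta}_{\tau;-\boldsymbol{\nu}_{\tau}}(\boldsymbol{x}_{\tau+dt}\mid\boldsymbol{x}_{\tau})$ is \emph{exactly} Gaussian in $\boldsymbol{x}_{\tau+dt}$, with mean $\boldsymbol{x}_{\tau}+\mu\boldsymbol{F}_{\tau}(\boldsymbol{x}_{\tau})dt - 2\theta\boldsymbol{\nu}_{\tau}(\boldsymbol{x}_{\tau})dt$ and covariance $2\mu Tdt\,\mathbb{I}$. Writing $\boldsymbol{y}=\boldsymbol{x}_{\tau+dt}-\boldsymbol{x}_{\tau}-\mu\boldsymbol{F}_{\tau}(\boldsymbol{x}_{\tau})dt$, a direct differentiation under the integral gives $\partial_{\theta}\mathbb{T}^{\theta}_{\tau;-\boldsymbol{\nu}_{\tau}}\rvert_{\theta=0} = -(\mu T)^{-1}(\boldsymbol{\nu}_{\tau}(\boldsymbol{x}_{\tau})\cdot\boldsymbol{y})\,\mathbb{T}_{\tau}(\boldsymbol{x}_{\tau+dt}\mid\boldsymbol{x}_{\tau})$, so the bracketed factor equals $-(\mu T)^{-1}\int d\boldsymbol{x}_{\tau}\,P_{\tau}(\boldsymbol{x}_{\tau})\int d\boldsymbol{x}_{\tau+dt}\,\mathbb{T}_{\tau}\,(\boldsymbol{\nu}_{\tau}(\boldsymbol{x}_{\tau})\cdot\boldsymbol{y})\,R$. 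I would then perform a Gaussian integration by parts (Stein's identity) in the inner integral: since $\nabla_{\boldsymbol{x}_{\tau+dt}}=\nabla_{\boldsymbol{y}}$ and the covariance is $2\mu Tdt\,\mathbb{I}$, the inner integral equals $2\mu Tdt\,\boldsymbol{\nu}_{\tau}(\boldsymbol{x}_{\tau})\cdot\int d\boldsymbol{x}_{\tau+dt}\,\mathbb{T}_{\tau}\nabla_{\boldsymbol{x}_{\tau+dt}}R = 2\mu Tdt\,\boldsymbol{\nu}_{\tau}(\boldsymbol{x}_{\tau})\cdot\boldsymbol{\tilde{R}}(\boldsymbol{x}_{\tau})$, where the hypothesis $\mathbb{T}_{\tau}R\to0$ at infinity of $\boldsymbol{x}_{\tau+dt}$ is exactly what makes the boundary term vanish. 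Collecting the remaining factors yields $\partial_{\theta}\mathbb{E}_{\mathbb{P}^{\theta}_{-\boldsymbol{\nu}_{\tau}}}[R]\rvert_{\theta=0} = -2dt\int d\boldsymbol{x}_{\tau}\,\boldsymbol{\nu}_{\tau}(\boldsymbol{x}_{\tau})\cdot\boldsymbol{\tilde{R}}(\boldsymbol{x}_{\tau})P_{\tau}(\boldsymbol{x}_{\tau}) = -2dt\,\mathcal{J}[\boldsymbol{\tilde{R}}]$.

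Substituting the square $4dt^2(\mathcal{J}[\boldsymbol{\tilde{R}}])^2$ into the displayed inequality and dividing by $2$ gives $\sigma_{\tau}dt \geq 2dt^2(\mathcal{J}[\boldsymbol{\tilde{R}}])^2/{\rm Var}[R] + O(dt^2)$, which is Eq.~(\ref{TUR}) once one rewrites $2dt^2/{\rm Var}[R] = 2dt/({\rm Var}[R]/dt)$. I expect the genuine work to sit entirely in the second paragraph: carrying out the Gaussian integration by parts cleanly, checking that the implicit regularity of $R$ (differentiability in its first slot together with integrability) suffices for both Stein's identity and the Cauchy--Schwarz step behind Lemma~\ref{lemmacramerrao}, and keeping the order-$dt$ bookkeeping honest so that the $O(dt^2)$ remainder inherited from Eq.~(\ref{fisher3}) is of the same size as, and hence compatible with, the $O(dt^2)$ right-hand side of Eq.~(\ref{TUR}). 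Everything else is a direct assembly of Lemma~\ref{lemmacramerrao} and Theorem~\ref{theoremfisher}.
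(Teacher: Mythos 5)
Your proposal is correct and follows essentially the same route as the paper: it plugs $\boldsymbol{\nu}'_{\tau}=-\boldsymbol{\nu}_{\tau}$ into Lemma~\ref{lemmacramerrao}, invokes Eq.~(\ref{fisher3}) to identify the Fisher metric with $2\sigma_{\tau}dt+O(dt^2)$, and evaluates $\left.\partial_{\theta}\mathbb{E}_{\mathbb{P}^{\theta}_{-\boldsymbol{\nu}_{\tau}}}[R]\right\rvert_{\theta=0}=-2\,\mathcal{J}[\boldsymbol{\tilde{R}}]\,dt$ by differentiating the Gaussian kernel, rewriting the linear factor as $-2\mu T\,dt\,\nabla_{\boldsymbol{x}_{\tau+dt}}\mathbb{T}_{\tau}$, and integrating by parts using the boundary hypothesis. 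Your labelling of this last step as a Gaussian/Stein integration by parts is just a different name for the identical computation in the paper; the $dt$-bookkeeping and final algebra also match.
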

\begin{proof}
By plugging $\mathbb{P}^{\theta}_{\boldsymbol{\nu}_{\tau}'}= \mathbb{P}_{-\boldsymbol{\nu}_{\tau}}^{\theta}$ into Lemma.~\ref{lemmacramerrao}, we obtain
\begin{eqnarray}
g_{\theta (-\boldsymbol{\nu}_{\tau}) \theta  (-\boldsymbol{\nu}_{\tau})}(\mathbb{P}) \geq \frac{ \left. \left(\partial_{\theta} \mathbb{E}_{\mathbb{P}^{\theta}_{-\boldsymbol{\nu}_{\tau}}} [R]  \right)^2 \right\rvert_{\theta =0}}{{\rm Var} [R]} ,
\label{cramerraomodify}
\end{eqnarray}
where we used $\mathbb{P}_{-\boldsymbol{\nu}_{\tau}}^{0}=\mathbb{P}$. From Eq.~(\ref{fisher3}), we obtain $g_{\theta(-\boldsymbol{\nu}_\tau) \theta(-\boldsymbol{\nu}_\tau)}(\mathbb{P}) = 2 \sigma_{\tau} dt + O(dt^2)$. $\left. \partial_{\theta} \mathbb{E}_{\mathbb{P}_{-\boldsymbol{\nu}_{\tau}}^{\theta}} [R]  \right\rvert_{\theta =0}$ is calculated as
\begin{eqnarray}
&&\left. \partial_{\theta} \mathbb{E}_{\mathbb{P}_{-\boldsymbol{\nu}_{\tau}}^{\theta}} [R]  \right\rvert_{\theta =0} \nonumber \\
&=& \int d \boldsymbol{x}_{\tau+dt} d \boldsymbol{x}_{\tau}  R (\boldsymbol{x}_{\tau+dt}, \boldsymbol{x}_{\tau})  P_{\tau}(\boldsymbol{x}_{\tau})\left. \partial_{\theta} \mathbb{T}_{\tau ; -\boldsymbol{\nu}_{\tau}}^{\theta} (\boldsymbol{x}_{\tau+dt} \mid \boldsymbol{x}_{\tau}) \right\rvert_{\theta =0}  \nonumber \\
&=& \frac{- \int d \boldsymbol{x}_{\tau} P_{\tau}(\boldsymbol{x}_{\tau})   \boldsymbol{\nu}(\boldsymbol{x}_{\tau}) \cdot \left[ \int d \boldsymbol{x}_{\tau+dt} R (\boldsymbol{x}_{\tau+dt}, \boldsymbol{x}_{\tau}) (\boldsymbol{x}_{\tau +dt} - \boldsymbol{x}_{\tau} - \mu \boldsymbol{F}_{\tau} (\boldsymbol{x}_{\tau}) dt ) \mathbb{T}_{\tau} (\boldsymbol{x}_{\tau+dt} \mid \boldsymbol{x}_{\tau})\right]}{\mu T} \nonumber  \\
&=& 2 dt \int d \boldsymbol{x}_{\tau} P_{\tau}(\boldsymbol{x}_{\tau})   \boldsymbol{\nu}(\boldsymbol{x}_{\tau}) \cdot \left[ \int d \boldsymbol{x}_{\tau+dt} R (\boldsymbol{x}_{\tau+dt}, \boldsymbol{x}_{\tau}) \nabla_{ \boldsymbol{x}_{\tau+dt}} \mathbb{T}_{\tau} (\boldsymbol{x}_{\tau+dt} \mid \boldsymbol{x}_{\tau})\right] \nonumber  \\
&=& - 2 dt \int d \boldsymbol{x}_{\tau} P_{\tau}(\boldsymbol{x}_{\tau})   \boldsymbol{\nu}(\boldsymbol{x}_{\tau}) \cdot \left[ \int d \boldsymbol{x}_{\tau+dt} \mathbb{T}_{\tau} (\boldsymbol{x}_{\tau+dt} \mid \boldsymbol{x}_{\tau}) \nabla_{ \boldsymbol{x}_{\tau+dt}} R (\boldsymbol{x}_{\tau+dt}, \boldsymbol{x}_{\tau}) \right] \nonumber  \\
&=& -2 \mathcal{J} [\boldsymbol{\tilde{R}}] dt,
\end{eqnarray}
where we used $\int d \boldsymbol{x}_{\tau+dt}  \nabla_{ \boldsymbol{x}_{\tau+dt}} [\mathbb{T}_{\tau} (\boldsymbol{x}_{\tau+dt} \mid \boldsymbol{x}_{\tau}) R (\boldsymbol{x}_{\tau+dt}, \boldsymbol{x}_{\tau})]=0$ because of the assumption $ \mathbb{T}_{\tau} (\boldsymbol{x}_{\tau+dt} \mid \boldsymbol{x}_{\tau}) R(\boldsymbol{x}_{\tau+dt}, \boldsymbol{x}_{\tau}) \to 0$ at infinity of $\boldsymbol{x}_{\tau+dt}$.
From $\left.( \partial_{\theta} \mathbb{E}_{\mathbb{P}_{-\boldsymbol{\nu}_{\tau}}^{\theta}} [R]  )^2\right\rvert_{\theta =0} = 4 (\mathcal{J} [\boldsymbol{\tilde{R}}])^2 (dt)^2$ and $g_{\theta  (-\boldsymbol{\nu}_{\tau}) \theta  (-\boldsymbol{\nu}_{\tau})}(\mathbb{P}) = 2 \sigma_{\tau} dt + O(dt^2)$, Eq.~(\ref{cramerraomodify}) can be rewritten as Eq.~(\ref{TUR}).
\end{proof}
\begin{remark}
In Ref.~\cite{dechant2018multidimensional}, a special case of the proposed thermodynamic uncertainty relation (\ref{TUR}) was discussed for
\begin{eqnarray}
R(\boldsymbol{x}_{\tau+dt}, \boldsymbol{x}_{\tau}) = (\boldsymbol{x}_{\tau+dt}-  \boldsymbol{x}_{\tau}) \circ \boldsymbol{w} (\boldsymbol{x}_{\tau})=(\boldsymbol{x}_{\tau+dt}-  \boldsymbol{x}_{\tau}) \cdot \boldsymbol{w} \left(\frac{\boldsymbol{x}_{\tau+dt}+  \boldsymbol{x}_{\tau}}{2} \right),
\end{eqnarray}
where $\boldsymbol{w} (\boldsymbol{x}_{\tau}) \in \mathbb{R}^d$ is any function of $\boldsymbol{x}_{\tau}$. In this case, we obtain $\boldsymbol{\tilde{R}} (\boldsymbol{x}_{\tau}) = \boldsymbol{w} (\boldsymbol{x}_{\tau}) +O(dt)$ and ${\rm Var} [R]/dt = 2 \mu T \int d\boldsymbol{x}_{\tau}  \| \boldsymbol{w} (\boldsymbol{x}_{\tau}) \|^2 P_{\tau }(\boldsymbol{x}_{\tau})+O(dt)$. Thus, the generalized thermodynamic uncertainty relation Eq.~(\ref{TUR}) can be rewritten as
\begin{eqnarray}
\label{TUR2}
&&\sigma_{\tau}dt  \geq \frac{(\mathcal{J} [\boldsymbol{w}])^2}{\mu T \int d\boldsymbol{x}_{\tau}  \| \boldsymbol{w} (\boldsymbol{x}_{\tau}) \|^2 P_{\tau }(\boldsymbol{x}_{\tau})} dt +O(dt^2).
\end{eqnarray}
This result can also be easily obtained from the Cauchy-Schwartz inequality as follows, 
\begin{eqnarray}
(\mathcal{J} [\boldsymbol{w}])^2 &=& \left( \int d \boldsymbol{x}_{\tau} \boldsymbol{\nu}(\boldsymbol{x}_{\tau}) \cdot \boldsymbol{w}(\boldsymbol{x}_{\tau}) P_{\tau}(\boldsymbol{x}_{\tau}) \right)^2 \nonumber \\ 
&\leq & \left( \int d \boldsymbol{x}_{\tau} \| \boldsymbol{w}(\boldsymbol{x}_{\tau}) \|^2 P_{\tau}(\boldsymbol{x}_{\tau})  \right) \left( \int d \boldsymbol{x}_{\tau} \| \boldsymbol{\nu}(\boldsymbol{x}_{\tau}) \|^2 P_{\tau}(\boldsymbol{x}_{\tau})  \right)  \nonumber  \\
&=& \mu T \sigma_{\tau} \int d \boldsymbol{x}_{\tau} \| \boldsymbol{w}(\boldsymbol{x}_{\tau}) \|^2 P_{\tau}(\boldsymbol{x}_{\tau}).
\end{eqnarray}
\end{remark}
\section{Optimal transport theory and entropy production}\label{sec4}
\subsection{$L^2$-Wasserstein distance and minimum entropy production}
We next discuss a relation between the $L^2$-Wasserstein distance in optimal transport theory~\cite{villani2021topics} and the entropy production. We start with the Benamou--Brenier formula~\cite{benamou2000computational}, which gives the definition of the $L^2$-Wasserstein distance.

\begin{definition}
Let $P(\boldsymbol{x})$ and $Q(\boldsymbol{x})$ be probability densities at the position $\boldsymbol{x} \in \mathbb{R}^d$ that satisfy $\int d\boldsymbol{x} P (\boldsymbol{x}) = \int d\boldsymbol{x} Q (\boldsymbol{x})=1$, $P (\boldsymbol{x}) \geq 0$ and $Q(\boldsymbol{x}) \geq 0$, where the second-order moments $\int d\boldsymbol{x} \|\boldsymbol{x} \|^2 P (\boldsymbol{x})$ and $\int d\boldsymbol{x} \|\boldsymbol{x} \|^2 Q (\boldsymbol{x})$ are finite. Let $\Delta \tau \geq 0$ be a non-negative time interval.  
{\it The $L^2$-Wasserstein distance between probability densities $P$ and $Q$} is defined as 
\begin{eqnarray}
\mathcal{W}_2(P , Q) = \sqrt{ \inf \Delta \tau \int_{\tau}^{\tau+\Delta \tau} dt \int d \boldsymbol{x} \|\boldsymbol{v}_t ( \boldsymbol{x})  \|^2 \mathcal{P}_{t} ( \boldsymbol{x})},
\label{benamoubrenier}
\end{eqnarray}
where the infimum is taken among all paths $(\boldsymbol{v}_t ( \boldsymbol{x}), \mathcal{P}_t ( \boldsymbol{x}))_{\tau  \leq t \leq \tau +\Delta \tau}$ satisfying the continuity equation
\begin{eqnarray}
\partial_t \mathcal{P}_{t} ( \boldsymbol{x})  &=& - \nabla \cdot ( \boldsymbol{v}_t ( \boldsymbol{x}) \mathcal{P}_{t} ( \boldsymbol{x})) \label{continuity},
\end{eqnarray}
with the boundary conditions 
\begin{eqnarray}
\mathcal{P}_{\tau}( \boldsymbol{x}) &=& P( \boldsymbol{x}), \label{boundary1}\\
\mathcal{P}_{\tau+ \Delta \tau}( \boldsymbol{x}) &=& Q( \boldsymbol{x}).\label{boundary2}
\end{eqnarray}
\end{definition}
\begin{remark}
This definition of the $L^2$-Wasserstein distance is consistent with the definitions used in the Monge--Kantorovich problem~\cite{villani2009optimal}. The definition in the Monge--Kantrovich problem is as follows. Let $\Pi (\boldsymbol{x}, \boldsymbol{x}')$ be the joint probability density at positions $\boldsymbol{x} \in \mathbb{R}^d$ and $\boldsymbol{x}' \in \mathbb{R}^d$ that satisfies $\Pi (\boldsymbol{x}, \boldsymbol{x}') \geq 0$, $\int d\boldsymbol{x}' {\Pi} (\boldsymbol{x}, \boldsymbol{x}') =P (\boldsymbol{x})$, and $\int d\boldsymbol{x} {\Pi} (\boldsymbol{x}, \boldsymbol{x}') =Q (\boldsymbol{x}')$. The $L^2$-Wasserstein distance is also defined as 
\begin{eqnarray}
\mathcal{W}_2(P, Q) = \sqrt{ \inf_{\Pi (\boldsymbol{x}, \boldsymbol{x}')} \int d\boldsymbol{x} d\boldsymbol{x}' \|\boldsymbol{x}- \boldsymbol{x}' \|^2  \Pi (\boldsymbol{x}, \boldsymbol{x}') }.
\end{eqnarray}
\end{remark}
\begin{remark}
The $L^2$-Wasserstein distance is a distance~\cite{villani2009optimal}, since it is symmetric $\mathcal{W}_2(P,Q)= \mathcal{W}_2(Q, P)$, non-negative $\mathcal{W}_2(P, Q)\geq 0$, and zero $\mathcal{W}_2(P, Q)= 0$ if and only if $P=Q$. The triangle inequality $\mathcal{W}_2(P, Q) \leq \mathcal{W}_2(P, P') + \mathcal{W}_2( P', Q)$ for any probability density $P'(\boldsymbol{x})$ with a finite second-order moment is satisfied.
\end{remark}
Based on the Benamou--Breiner formula, we can consider the minimum entropy production for the Fokker--Planck equation in terms of the $L^2$-Wasserstein distance. This link between the $L^2$-Wasserstein distance and the entropy production was initially pointed out in the field of the optimal transport theory (for example, in Refs.~\cite{jordan1998variational, villani2021topics}). After that, it was also discussed in the context of stochastic thermodynamics~\cite{aurell2011optimal,aurell2012refined}. This link has been recently revisited in terms of thermodynamic trade-off relations such as the thermodynamic speed limit~\cite{dechant2019thermodynamic, nakazato2021geometrical} and the thermodynamic uncertainty relation~\cite{dechant2022geometric}. The decomposition of the entropy production rate based on the optimal transport theory has also been proposed in Refs.~~\cite{nakazato2021geometrical, dechant2022geometricletter}. 

For a stochastic process evolving according to the Fokker--Planck equation, the entropy production $\Sigma(\tau + \Delta t; \tau)$ for a fixed initial probability density $P_{\tau}$ and a fixed final probability density $P_{\tau+\Delta \tau}$ is bounded by the $L^2$-Wasserstein distance as follows. This result is regarded as a thermodynamic speed limit~\cite{dechant2019thermodynamic,nakazato2021geometrical}, which is a trade-off relation between the finite time interval $\Delta \tau$ and the entropy production $\Sigma(\tau + \Delta \tau; \tau)$.
\begin{lemma}
The entropy production $\Sigma(\tau + \Delta \tau; \tau)$ for fixed probability densities $P_{\tau}$ and $P_{\tau+\Delta \tau}$ is bounded by 
\begin{eqnarray}
\Sigma(\tau + \Delta \tau; \tau) \geq \frac{[\mathcal{W}_2(P_{\tau}, P_{\tau+ \Delta \tau})]^2}{\mu T \Delta \tau}.
\label{speedlimit}
\end{eqnarray}
The entropy production rate $\sigma_{\tau}$ is also bounded by
\begin{eqnarray}
\sigma_{\tau} \geq \frac{1}{\mu T } \lim_{\Delta t \to 0} \frac{[\mathcal{W}_2(P_{\tau}, P_{\tau+ \Delta t})]^2}{(\Delta t)^2}.
\label{speedlimitinfinite}
\end{eqnarray}
\end{lemma}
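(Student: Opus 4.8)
The plan is to recognize that the genuine Fokker--Planck evolution is an admissible competitor in the Benamou--Brenier variational problem (\ref{benamoubrenier}) defining $\mathcal{W}_2(P_\tau, P_{\tau+\Delta\tau})$, and then simply read off the bound. First I would check admissibility: the pair $(\boldsymbol{\nu}_t(\boldsymbol{x}), P_t(\boldsymbol{x}))_{\tau \le t \le \tau + \Delta\tau}$ satisfies the continuity equation (\ref{continuity}) --- this is exactly the Fokker--Planck equation (\ref{fp}) --- and the boundary conditions (\ref{boundary1})--(\ref{boundary2}) hold by construction, while the finiteness of the action $\int_\tau^{\tau+\Delta\tau} dt \int d\boldsymbol{x}\,\|\boldsymbol{\nu}_t\|^2 P_t = \mu T\, \Sigma(\tau+\Delta\tau;\tau)$ and of the second-order moments of $P_\tau$ and $P_{\tau+\Delta\tau}$ is part of the hypotheses. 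Hence this curve belongs to the feasible set over which the infimum in (\ref{benamoubrenier}) is taken. Note that the optimal Benamou--Brenier velocity field is a gradient, but the infimum is over all velocity fields compatible with the continuity equation, so $\boldsymbol{\nu}_t$ (which need not be a gradient) is a legitimate competitor.

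Since the infimum is no larger than the value attained on any admissible curve, evaluating the action along the Fokker--Planck path gives
\[
[\mathcal{W}_2(P_\tau, P_{\tau+\Delta\tau})]^2 \;\le\; \Delta\tau \int_\tau^{\tau+\Delta\tau} dt \int d\boldsymbol{x}\, \|\boldsymbol{\nu}_t(\boldsymbol{x})\|^2 P_t(\boldsymbol{x}).
\]
By the definition of the entropy production rate and the identity $\Sigma(\tau+\Delta\tau;\tau) = \int_\tau^{\tau+\Delta\tau} dt\,\sigma_t$, the right-hand side equals $\mu T\, \Delta\tau\, \Sigma(\tau+\Delta\tau;\tau)$, and dividing by $\mu T \Delta\tau$ yields (\ref{speedlimit}). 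For (\ref{speedlimitinfinite}) I would apply (\ref{speedlimit}) with $\Delta\tau = \Delta t$, substitute $\Sigma(\tau+\Delta t;\tau) = \sigma_\tau \Delta t + O(\Delta t^2)$ (which follows from Theorem~\ref{theoremfisher}, or directly from the continuity of $t \mapsto \sigma_t$), divide by $\Delta t$, and pass to the limit: $\sigma_\tau + O(\Delta t) \ge [\mathcal{W}_2(P_\tau, P_{\tau+\Delta t})]^2/(\mu T (\Delta t)^2)$ for every $\Delta t>0$, so in the limit $\sigma_\tau \ge \frac{1}{\mu T}\lim_{\Delta t\to 0}[\mathcal{W}_2(P_\tau, P_{\tau+\Delta t})]^2/(\Delta t)^2$.

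The whole argument is a one-line consequence of the variational characterization, so the only place that genuinely needs care --- and the step I expect to be the main obstacle --- is the admissibility and regularity bookkeeping: one must guarantee that $\boldsymbol{\nu}_t$ is square-integrable against $P_t$ uniformly on $[\tau,\tau+\Delta\tau]$ (equivalently $\Sigma(\tau+\Delta\tau;\tau)<\infty$), that the intermediate densities keep finite second-order moments, and that the decay-at-infinity assumptions invoked earlier in the paper continue to apply so that no boundary terms are lost when identifying the Benamou--Brenier action with $\mu T\,\Sigma$. A secondary point worth flagging in a remark is that the limit on the right-hand side of (\ref{speedlimitinfinite}) need not equal $\mu T \sigma_\tau$ in general, since $\boldsymbol{\nu}_t$ may carry a component that is divergence-free with respect to $P_t$ and therefore does not contribute to the Wasserstein metric speed; the proof only requires the upper bound on this limit, which the displayed estimate supplies.
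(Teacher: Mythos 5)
Your proof is correct and follows exactly the same route as the paper: the Fokker--Planck mean local velocity $\boldsymbol{\nu}_t$ is an admissible competitor in the Benamou--Brenier infimum, so $[\mathcal{W}_2(P_\tau, P_{\tau+\Delta\tau})]^2 \le \Delta\tau \int_\tau^{\tau+\Delta\tau} dt \int d\boldsymbol{x}\,\|\boldsymbol{\nu}_t\|^2 P_t = \mu T\,\Delta\tau\,\Sigma(\tau+\Delta\tau;\tau)$, and the rate version follows by taking $\Delta\tau = \Delta t \to 0$. Your additional remarks about admissibility bookkeeping and the fact that the limit need not equal $\mu T\sigma_\tau$ (anticipating the housekeeping contribution) are sound and consistent with the paper's subsequent discussion.
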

\begin{proof}
From the definition of the $L^2$-Wasserstein distance, we obtain Eq.~(\ref{speedlimit})
\begin{eqnarray}
[\mathcal{W}_2(P_{\tau} , P_{\tau + \Delta \tau})]^2 &=& \inf \Delta \tau \int_{\tau}^{\tau+\Delta \tau} dt \int d \boldsymbol{x} \|\boldsymbol{v}_t ( \boldsymbol{x})  \|^2 {P}_{t} ( \boldsymbol{x}) \nonumber \\
&\leq& \Delta \tau \int_{\tau}^{\tau+\Delta \tau} dt \int d \boldsymbol{x} \|\boldsymbol{\nu}_t ( \boldsymbol{x})  \|^2 {P}_{t} ( \boldsymbol{x}) \nonumber \\
&=& \mu T\Sigma(\tau + \Delta t; \tau), 
\end{eqnarray}
because the time evolution of $P_t( \boldsymbol{x})$ with boundary conditions $P_{\tau}( \boldsymbol{x})$ and $P_{\tau+ \Delta \tau}( \boldsymbol{x})$ is described by the Fokker--Planck equation $\partial_{t} P_{t} (\boldsymbol{x}) = - \nabla \cdot (\boldsymbol{\nu}_{t} (\boldsymbol{x})  P_{t} (\boldsymbol{x}))$, which is a kind of continuity equation. By using $\Delta \tau = \Delta t$, we obtain $[\mathcal{W}_2(P_{\tau} , P_{\tau + \Delta t})]^2
\leq (\Delta t)^2 (\sigma_{\tau} + O(\Delta t))$, and thus Eq.~(\ref{speedlimitinfinite}) holds.
\end{proof}
By considering the geometry of the $L^2$-Wasserstein distance and introducing the $L^2$-Wasserstein path length, we can obtain another lower bound on the entropy production, which is the tighter than Eq.~(\ref{speedlimit}). This bound is
proposed as a tighter version of the thermodynamic speed limit in Ref.~\cite{nakazato2021geometrical}. The $L^2$-Wasserstein path length is defined as follows.
\begin{definition} Let $t \in \mathbb{R}$ and $s \in \mathbb{R}$ indicate time. For a fixed trajectory of probability density $(P_{t})_{ \tau \leq t \leq \tau+\Delta \tau}$,
{\it the Wasserstein path length from time $t=\tau$ to time $t=\tau +\Delta \tau$} is defined as
\begin{eqnarray}
\mathcal{L}(\tau +\Delta \tau; \tau) &=& \mathcal{L}_{\tau +\Delta \tau} - \mathcal{L}_{\tau} =\int^{\tau +\Delta \tau}_{\tau}ds \left[ \lim_{\Delta t \to 0}\frac{\mathcal{W}_2(P_{s}, P_{s+ \Delta t})}{\Delta t} \right], \\
\mathcal{L}_{t} &=& \int^{t}_{- \infty} ds \left[ \lim_{\Delta t \to 0}\frac{\mathcal{W}_2(P_{s}, P_{s+ \Delta t})}{\Delta t} \right].
\end{eqnarray}
\end{definition}
\begin{remark}
We can obtain $\mathcal{L}(\tau +\Delta \tau; \tau) \geq \mathcal{W}_2(P_{\tau} , P_{\tau + \Delta \tau})$ by using the triangle inequality of the  $L^2$-Wasserstein distance. Thus, the $L^2$-Wasserstein distance $\mathcal{W}_2(P_{\tau} , P_{\tau + \Delta \tau})$ can be regarded as the minimum Wasserstein path length, that is the geodesic between two points $P_{\tau}$ and $P_{\tau + \Delta \tau}$.
\end{remark}
\begin{remark}
In terms of the Wasserstein path length, Eq.~(\ref{speedlimitinfinite}) can be rewritten as 
\begin{eqnarray}
\sigma_{\tau} \geq \frac{1}{\mu T } \left. \left( \partial_{s} \mathcal{L}_{s} \right)^2 \right\rvert_{s=\tau}.
\label{speedlimitinfinite2}
\end{eqnarray}
\end{remark}
\begin{theorem}
The entropy production $\Sigma(\tau + \Delta \tau; \tau)$ is bounded by
\begin{eqnarray}
\Sigma(\tau + \Delta \tau; \tau) \geq \frac{[\mathcal{L}(\tau +\Delta \tau; \tau) ]^2}{\mu T \Delta \tau}.
\label{speedlimitpathlength}
\end{eqnarray}
This lower bound on the entropy production is tighter than the bound Eq.~(\ref{speedlimit}) as follows,
\begin{eqnarray}
\Sigma(\tau+\Delta \tau; \tau) \geq \frac{[\mathcal{L}(\tau +\Delta \tau; \tau) ]^2}{\mu T \Delta \tau} \geq \frac{[\mathcal{W}_2(P_{\tau}, P_{\tau+ \Delta \tau})]^2}{\mu T \Delta \tau}.
\label{speedlimittighter}
\end{eqnarray}
\end{theorem}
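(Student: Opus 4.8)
The plan is to reduce everything to the infinitesimal speed-limit bound Eq.~(\ref{speedlimitinfinite2}), which already packages the hard analytic content, and then to use only two elementary inequalities: a Cauchy--Schwarz estimate under the time integral and the triangle inequality for $\mathcal{W}_2$. First I would write the entropy production as the time integral of the entropy production rate, $\Sigma(\tau+\Delta\tau;\tau)=\int_{\tau}^{\tau+\Delta\tau}dt\,\sigma_{t}$, and apply Eq.~(\ref{speedlimitinfinite2}) pointwise in $t$ to obtain
\begin{eqnarray}
\Sigma(\tau+\Delta\tau;\tau)\;\geq\;\frac{1}{\mu T}\int_{\tau}^{\tau+\Delta\tau}dt\,\left(\partial_{s}\mathcal{L}_{s}\right)^{2}\big\rvert_{s=t}.
\end{eqnarray}

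Next I would invoke the Cauchy--Schwarz inequality (equivalently Jensen's inequality for $u\mapsto u^{2}$) on the interval $[\tau,\tau+\Delta\tau]$, using that the integrand $\partial_{t}\mathcal{L}_{t}=\lim_{\Delta t\to 0}\mathcal{W}_{2}(P_{t},P_{t+\Delta t})/\Delta t$ is non-negative:
\begin{eqnarray}
\int_{\tau}^{\tau+\Delta\tau}dt\,\left(\partial_{t}\mathcal{L}_{t}\right)^{2}\;\geq\;\frac{1}{\Delta\tau}\left(\int_{\tau}^{\tau+\Delta\tau}dt\,\partial_{t}\mathcal{L}_{t}\right)^{2}=\frac{[\mathcal{L}(\tau+\Delta\tau;\tau)]^{2}}{\Delta\tau},
\end{eqnarray}
where the final equality is just the fundamental theorem of calculus together with the definition of the Wasserstein path length, $\mathcal{L}(\tau+\Delta\tau;\tau)=\mathcal{L}_{\tau+\Delta\tau}-\mathcal{L}_{\tau}$. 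Combining the two displays and multiplying through by $1/(\mu T)$ yields Eq.~(\ref{speedlimitpathlength}). For the comparison Eq.~(\ref{speedlimittighter}), I would simply quote the remark following the definition of $\mathcal{L}$: the triangle inequality for $\mathcal{W}_{2}$ gives $\mathcal{L}(\tau+\Delta\tau;\tau)\geq\mathcal{W}_{2}(P_{\tau},P_{\tau+\Delta\tau})$; since both sides are non-negative, squaring preserves the inequality, and dividing by $\mu T\Delta\tau>0$ and chaining with Eq.~(\ref{speedlimitpathlength}) gives the desired chain of inequalities.

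The main obstacle is not in this final assembly, which is elementary, but in the legitimacy of differentiating and integrating $\mathcal{L}_{s}$ and of applying Eq.~(\ref{speedlimitinfinite2}) under the integral sign: one needs the metric derivative $\partial_{s}\mathcal{L}_{s}$ to exist along the trajectory $(P_{t})_{\tau\leq t\leq\tau+\Delta\tau}$ and the integrand in the first display to be finite. For a smooth solution of the Fokker--Planck equation this is exactly what Eq.~(\ref{speedlimitinfinite}) provides, identifying $\partial_{s}\mathcal{L}_{s}$ with $\sqrt{\mu T\,\sigma_{s}}$ up to the finite-second-moment regularity assumed for $\mathcal{W}_{2}$; granting that, the Cauchy--Schwarz step is immediate and the proof is complete.
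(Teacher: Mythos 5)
Your proof is correct and follows the same route as the paper: apply the infinitesimal bound Eq.~(\ref{speedlimitinfinite2}) under the time integral, use Cauchy--Schwarz (with $\partial_t \mathcal{L}_t \geq 0$) to relate $\int (\partial_t \mathcal{L}_t)^2\,dt$ to $[\mathcal{L}(\tau+\Delta\tau;\tau)]^2/\Delta\tau$, and invoke the triangle-inequality remark $\mathcal{L}(\tau+\Delta\tau;\tau)\geq \mathcal{W}_2(P_\tau,P_{\tau+\Delta\tau})$ for the comparison. Your extra paragraph on the regularity needed to differentiate $\mathcal{L}_s$ is a reasonable addition but not part of the paper's argument.
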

\begin{proof}
From the Cauchy--Schwartz inequality, we obtain
\begin{eqnarray}
\int_{\tau}^{\tau + \Delta \tau} dt \left. \left( \partial_{s} \mathcal{L}_{s} \right)^2 \right\rvert_{s=t} \int_{\tau}^{\tau + \Delta \tau} dt  &\geq& \left[ \int_{\tau}^{\tau + \Delta \tau} dt \left. (  \partial_{s} \mathcal{L}_{s}  ) \right\rvert_{s=t} \right]^2 \nonumber \\
&=& [ \mathcal{L}(\tau +\Delta \tau; \tau) ]^2,
\end{eqnarray}
where we used  $\partial_{s} \mathcal{L}_{s} \geq 0$. From Eq.~(\ref{speedlimitinfinite2}), we obtain Eq.~(\ref{speedlimitpathlength}) as follows,
\begin{eqnarray}
\Sigma(\tau + \Delta \tau; \tau) &=& \int_{\tau}^{\tau + \Delta \tau} dt \sigma_{t} \nonumber \\
&\geq& \frac{1}{\mu T} \int_{\tau}^{\tau + \Delta \tau} dt \left. \left( \partial_{s} \mathcal{L}_s \right)^2  \right\rvert_{s=t} \nonumber \\
&\geq& \frac{[ \mathcal{L}(\tau +\Delta \tau; \tau) ]^2}{\mu T \Delta \tau}.
\end{eqnarray}
From the inequality $\mathcal{L}(\tau +\Delta \tau; \tau) \geq \mathcal{W}_2(P_{\tau} , P_{\tau + \Delta \tau})$ and Eq.~(\ref{speedlimitpathlength}), we obtain Eq.~(\ref{speedlimittighter}).
\end{proof}

\subsection{Geometric decomposition of entropy production rate}
Based on Eq.~(\ref{speedlimitinfinite2}), we can obtain a decomposition of the entropy production into two non-negative parts, namely the housekeeping entropy production rate and the excess entropy production rate. This decomposition has been substaintially obtained in Ref.~\cite{nakazato2021geometrical}, and discussed in Refs.~\cite{dechant2022geometricletter, dechant2022geometric} from the viewpoint of the thermodynamic uncertainty relation.

Here, we define the housekeeping entropy production rate and the excess entropy production rate based on Eq.~(\ref{speedlimitinfinite2}).
\begin{definition}
{\it The excess entropy production rate} is defined as
\begin{eqnarray}
\sigma^{\rm ex}_{\tau} = \frac{1}{\mu T} \left. \left( \partial_{s} \mathcal{L}_s \right)^2 \right\rvert_{s=\tau} =\frac{1}{\mu T } \lim_{\Delta t \to 0} \frac{[\mathcal{W}_2(P_{\tau}, P_{\tau+ \Delta t})]^2}{(\Delta t)^2}, 
\end{eqnarray}
and {\it the housekeeping entropy production rate} is defined as 
\begin{eqnarray}
\sigma^{\rm hk}_{\tau} = \sigma_{\tau}- \frac{1}{\mu T} \left. \left( \partial_{s} \mathcal{L}_s \right)^2  \right\rvert_{s=\tau}.
\end{eqnarray}
\end{definition}
\begin{remark}
The excess entropy production rate is definitely non-negative $\sigma^{\rm ex}_{\tau} \geq 0$. The housekeeping entropy production is non-negative $\sigma^{\rm hk}_{\tau} \geq 0$ because of  Eq.~(\ref{speedlimitinfinite2}). The entropy production rate is decomposed as $\sigma_{\tau} = \sigma^{\rm ex}_{\tau} + \sigma^{\rm hk}_{\tau}$.
\end{remark}
\begin{remark}
The excess entropy production rate becomes zero $\sigma^{\rm ex}_{\tau} = 0$ if and only if the system is in the steady-state $\partial_t P_t (\boldsymbol{x})=0$, or equivalently $P_{\tau} = P_{\tau + dt}$ for an infinitesimal time interval $dt$. The decomposition of the entropy production rate such that the excess entropy production rate becomes zero in the steady state is not unique, and another example of the decomposition of the entropy production rate has been obtained in the study of the steady-state thermodynamics~\cite{hatano2001steady}. Our definitions of the excess entropy production rate and the housekeeping entropy production rate are generally different from the excess entropy production rate and the housekeeping entropy production rate proposed in Ref.~\cite{hatano2001steady}. We discussed this difference in Ref.~\cite{dechant2022geometric}.
\end{remark}
\begin{remark}
The thermodynamic speed limit can be tightened by using the excess entropy production rate as follows.
\begin{eqnarray}
\Sigma(\tau+\Delta \tau; \tau) \geq \int_{\tau}^{\tau + \Delta \tau} dt \sigma^{\rm ex}_{t} \geq \frac{[\mathcal{L}(\tau +\Delta \tau; \tau) ]^2}{\mu T\Delta t} \geq \frac{[\mathcal{W}_2(P_{\tau}, P_{\tau+ \Delta \tau})]^2}{\mu T \Delta \tau}.
\end{eqnarray}
The contribution of the housekeeping entropy production rate does not affect in the thermodynamic speed limit and the lower bound becomes tighter if $\sigma^{\rm hk}_{t} = 0$. The lower bound $\Sigma(\tau+\Delta \tau; \tau)  = [\mathcal{W}_2(P_{\tau}, P_{\tau+ \Delta \tau})]^2 /(\mu T \Delta \tau)$ is achieved when 
\begin{eqnarray}
\left. \left( \partial_{s} \mathcal{L}_s \right)  \right\rvert_{s=t} = \frac{\mathcal{W}_2(P_{\tau}, P_{\tau+ \Delta \tau}) }{\Delta \tau} = {\rm const.}
\end{eqnarray}
and $\sigma^{\rm hk}_{t} = 0$ for $\tau \leq t \leq \tau+ \Delta \tau$. This implies that the geodesic in the space of the $L^2$-Wasserstein distance is related to the optimal protocol that minimizes the entropy production in a finite time. The condition $\sigma^{\rm hk}_{t} = 0$ is related to the condition of the potential force as discussed below. Thus, if we want to minimize the entropy production in a finite time, the probability density $P_t$ should be changed along the geodesic in the space of the $L^2$-Wasserstein distance by the potential force~\cite{nakazato2021geometrical}.
\end{remark}

To discuss a physical interpretation of this decomposition, we focus on another expression of the optimal protocol proposed in Ref.~\cite{benamou2000computational}. 
\begin{lemma} \label{optimalprotocol}
The $L^2$-Wasserstein distance is given by
\begin{eqnarray}
 \mathcal{W}_2 (P_{\tau}, P_{\tau + \Delta \tau}) = \sqrt{\Delta \tau \int_{\tau}^{\tau+ \Delta \tau} dt \int d\boldsymbol{x} \|\boldsymbol{\nu}^*_t (\boldsymbol{x}) \|^2 P_t(\boldsymbol{x})}.
\end{eqnarray}
Here, $\boldsymbol{\nu}^*_t (\boldsymbol{x}) \in \mathbb{R}^d$ is a vector field, namely an optimal mean local velocity, that satisfies
\begin{eqnarray}
\partial_t P_t(\boldsymbol{x}) &=& - \nabla \cdot(\boldsymbol{\nu}^*_t (\boldsymbol{x})P_t(\boldsymbol{x}) ), \\
\boldsymbol{\nu}^*_t (\boldsymbol{x}) &=& \nabla \phi_t (\boldsymbol{x}), \\
\partial_t \phi_t (\boldsymbol{x})  &=& -\frac{1}{2} \|\nabla \phi_t (\boldsymbol{x}) \|^2, \label{additionalcondition}
\end{eqnarray}
with a potential $\phi_t (\boldsymbol{x}) \in \mathbb{R}$ and a time evolution of $P_t(\boldsymbol{x})$ that connects $P_{\tau} (\boldsymbol{x})$ and $P_{\tau + \Delta \tau} (\boldsymbol{x})$. 
\end{lemma}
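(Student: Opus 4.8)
The plan is to derive the stated characterization of the optimal transport path from the Benamou--Brenier formula~(\ref{benamoubrenier}) by a Lagrange-multiplier (calculus of variations) argument, and to rely on the Benamou--Brenier existence theorem~\cite{benamou2000computational} for the rigorous part. First I would adjoin the continuity equation~(\ref{continuity}) to the cost with a multiplier field $\phi_t(\boldsymbol{x})$ and make stationary the augmented action (the overall factor $\Delta\tau$ plays no role in the minimization, and the numerical factor in front of $\phi_t$ is fixed for later convenience)
\begin{eqnarray}
\mathcal{A}[\boldsymbol{v},\mathcal{P},\phi] = \int_{\tau}^{\tau+\Delta\tau} dt \int d\boldsymbol{x} \left[ \|\boldsymbol{v}_t(\boldsymbol{x})\|^2 \mathcal{P}_t(\boldsymbol{x}) + 2\phi_t(\boldsymbol{x}) \left( \partial_t \mathcal{P}_t(\boldsymbol{x}) + \nabla\cdot(\boldsymbol{v}_t(\boldsymbol{x})\mathcal{P}_t(\boldsymbol{x})) \right) \right],
\end{eqnarray}
over all $(\boldsymbol{v}_t,\mathcal{P}_t,\phi_t)$ with the boundary data $\mathcal{P}_\tau = P_\tau$ and $\mathcal{P}_{\tau+\Delta\tau} = P_{\tau+\Delta\tau}$ held fixed and with decay at spatial infinity.

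Next I would read off the three stationarity conditions after integrating by parts (in $\boldsymbol{x}$ using the decay at infinity, in $t$ using $\delta\mathcal{P}=0$ at $t=\tau$ and $t=\tau+\Delta\tau$ so that the time-boundary term does not contribute to the variation). Variation in $\phi_t$ returns the continuity equation~(\ref{continuity}); variation in $\boldsymbol{v}_t$ gives $\mathcal{P}_t(\boldsymbol{v}_t - \nabla\phi_t)=0$, so on the support of $\mathcal{P}_t$ the optimal velocity field is a gradient, $\boldsymbol{\nu}^*_t = \nabla\phi_t$; variation in $\mathcal{P}_t$ gives $\|\boldsymbol{v}_t\|^2 - 2\partial_t\phi_t - 2\nabla\phi_t\cdot\boldsymbol{v}_t = 0$, which on inserting $\boldsymbol{v}_t=\nabla\phi_t$ collapses to the Hamilton--Jacobi equation $\partial_t\phi_t = -\tfrac12\|\nabla\phi_t\|^2$, i.e.\ Eq.~(\ref{additionalcondition}). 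Substituting $\boldsymbol{v}_t = \boldsymbol{\nu}^*_t$ back into Eq.~(\ref{benamoubrenier}) then yields the claimed expression for $\mathcal{W}_2(P_\tau,P_{\tau+\Delta\tau})$.

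It remains to argue that this critical point is the minimizer and that an admissible such triple exists. For fixed $\mathcal{P}_t\ge 0$ the integrand $\|\boldsymbol{v}_t\|^2\mathcal{P}_t$ is convex in $\boldsymbol{v}_t$, so once the density path is fixed the stationary velocity is the global minimizer; the full existence statement --- that there is a path $(\boldsymbol{\nu}^*_t,\mathcal{P}_t=P_t)$ with $\boldsymbol{\nu}^*_t=\nabla\phi_t$ and $\phi_t$ solving Eq.~(\ref{additionalcondition}) which attains the infimum and connects $P_\tau$ to $P_{\tau+\Delta\tau}$ --- is exactly the content of the Benamou--Brenier theorem~\cite{benamou2000computational}, equivalently of Brenier's polar factorization together with McCann displacement interpolation in optimal transport theory~\cite{villani2009optimal,villani2021topics}, and is ensured by the finite second-moment hypotheses of the definition. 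The main obstacle, which I would not carry out but instead defer to those references, is the rigour of the variational manipulation: the multiplier $\phi_t$ need not be smooth (solutions of the Hamilton--Jacobi equation~(\ref{additionalcondition}) generically develop singularities and must be read in the viscosity sense), so the integrations by parts and the pointwise Euler--Lagrange equations above need a careful weak formulation rather than the formal derivation sketched here.
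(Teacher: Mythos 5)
Your proposal is correct and follows essentially the same Lagrange--multiplier calculus-of-variations route as the paper's proof: adjoin the continuity equation to the Benamou--Brenier cost with a multiplier $\phi_t$, integrate by parts, and read off $\boldsymbol{\nu}^*_t = \nabla\phi_t$ together with the Hamilton--Jacobi equation $\partial_t\phi_t = -\tfrac{1}{2}\|\nabla\phi_t\|^2$ from the three stationarity conditions. The cosmetic normalization difference (your $\|\boldsymbol{v}\|^2 + 2\phi(\cdots)$ versus the paper's $\tfrac{1}{2}\|\boldsymbol{\nu}^*\|^2 + \phi(\cdots)$) is immaterial, and your added caveats about attainment, convexity, and viscosity-solution regularity are accurate remarks that the paper simply leaves implicit by citing Benamou--Brenier.
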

\begin{proof}
Using the method of Lagrange multipliers, the optimization problem in Eq.~(\ref{benamoubrenier}) with the constraint $\partial_t P_t(\boldsymbol{x}) = - \nabla \cdot(\boldsymbol{\nu}^*_t (\boldsymbol{x})P_t(\boldsymbol{x}) )$ can be solved by the calculus of variations for $(P_t)_{\tau<t< \tau+\Delta \tau}$ and $(\boldsymbol{\nu}^*_t)_{\tau \leq t< \tau+\Delta \tau}$,
\begin{eqnarray}
\partial_{P_t(\boldsymbol{x})}\mathbb{L} (\{P_t \}, \{\boldsymbol{\nu}^*_t \}, \{\phi_t \}) = \partial_{(\boldsymbol{\nu}^*_t(\boldsymbol{x}))_i}\mathbb{L} (\{P_t \}, \{\boldsymbol{\nu}^*_t \}, \{ \phi_t \})  =0,
\label{calculusvariation}
\end{eqnarray}
with the Lagrangian 
\begin{eqnarray}
&&\mathbb{L} (\{P_t \}, \{\boldsymbol{\nu}^*_t \}, \{ \phi_t \}) \nonumber \\
&=& \int^{\tau+\Delta \tau}_{\tau} dt \int d\boldsymbol{x} \left[ \frac{1}{2} \| {\boldsymbol{\nu}}^*_t(\boldsymbol{x}) \|^2 P_t(\boldsymbol{x}) + \phi_t(\boldsymbol{x}) [\partial_t P_t(\boldsymbol{x}) + \nabla \cdot(\boldsymbol{\nu}^*_t (\boldsymbol{x})P_t(\boldsymbol{x}) )]  \right] \nonumber \\
&=& \int^{\tau+\Delta \tau}_{\tau} dt \int d\boldsymbol{x} \left[ \frac{1}{2} \| {\boldsymbol{\nu}}^*_t(\boldsymbol{x}) \|^2 P_t(\boldsymbol{x}) - P_t(\boldsymbol{x}) \partial_t \phi_t(\boldsymbol{x})  - \nabla \phi_t (\boldsymbol{x}) \cdot \boldsymbol{\nu}^*_t (\boldsymbol{x})P_t(\boldsymbol{x})\right] \nonumber \\
&& + \int^{\tau+\Delta \tau}_{\tau} dt \int d\boldsymbol{x} \left[ \partial_t [\phi_t(\boldsymbol{x}) P_t(\boldsymbol{x}) ] + \nabla \cdot ( \phi_t (\boldsymbol{x}) \boldsymbol{\nu}^*_t (\boldsymbol{x})P_t(\boldsymbol{x}) )\right],
\end{eqnarray}
where $\partial_{P_t(\boldsymbol{x})}$ and $\partial_{(\boldsymbol{\nu}^*_t(\boldsymbol{x}))_i}$ stand for functional derivatives,  $(\boldsymbol{\nu}^*_t(\boldsymbol{x}))_i$ stands for $i$-th component of $\boldsymbol{\nu}^*_t(\boldsymbol{x})$, and $\phi_t (\boldsymbol{x})$ is the Lagrange multiplier. The variations $\partial_{(\boldsymbol{\nu}^*_t(\boldsymbol{x}))_i}\mathbb{L} (\{P_t \}, \{\boldsymbol{\nu}^*_t \}, \{ \phi_t \}) =0$ and $\partial_{P_t(\boldsymbol{x})}\mathbb{L} (\{P_t \}, \{\boldsymbol{\nu}^*_t \}, \{ \phi_t \}) =0$ in Eq.~(\ref{calculusvariation}) are calculated as $\boldsymbol{\nu}^*_t (\boldsymbol{x}) = \nabla \phi_t (\boldsymbol{x})$ and 
$\partial_t \phi_t (\boldsymbol{x}) = \| \boldsymbol{\nu}^*_t (\boldsymbol{x}) \|^2/2  - (\nabla \phi_t (\boldsymbol{x}) \cdot \boldsymbol{\nu}^*_t (\boldsymbol{x}))= - \|\nabla \phi_t (\boldsymbol{x}) \|^2/2$, respectively.
\end{proof}
From this optimal protocol, the excess entropy production rate and the housekeeping entropy production rate can be regarded as a potential contribution and a non-potential contribution to the entropy production rate, respectively. This fact is given by the following theorem.
\begin{theorem} \label{excesshousekeeping}
Let $\phi_{\tau}(\boldsymbol{x}) \in \mathbb{R}$ be the potential that satisfies
\begin{eqnarray}
\partial_{\tau} P_{\tau}(\boldsymbol{x}) =- \nabla \cdot( \boldsymbol{\nu}_{\tau} (\boldsymbol{x}) P_{\tau} (\boldsymbol{x})  ) = -\nabla \cdot ( \boldsymbol{\nu}^*_{\tau}(\boldsymbol{x})  P_{\tau} (\boldsymbol{x}) ), \label{conditiondecomp}
\end{eqnarray}
where $\boldsymbol{\nu}^*_{\tau}(\boldsymbol{x}) $ is the optimal mean local velocity defined as $\boldsymbol{\nu}^*_{\tau}(\boldsymbol{x}) = \nabla \phi_{\tau}(\boldsymbol{x})$, and $\partial_{\tau} P_{\tau}(\boldsymbol{x}) =- \nabla \cdot( \boldsymbol{\nu}_{\tau} (\boldsymbol{x}) P_{\tau} (\boldsymbol{x})  )$ is the Fokker--Planck equation with the mean local velocity $\boldsymbol{\nu}_{\tau} (\boldsymbol{x}) = \mu ( \boldsymbol{F}_{\tau} (\boldsymbol{x}) - T \nabla \ln P_{\tau}(\boldsymbol{x}) )$. We assume that $P_{\tau} (\boldsymbol{x})$ decays sufficiently rapidly at infinity. The excess entropy production rate is given by
\begin{eqnarray}
\sigma_{\tau}^{\rm ex} = \frac{1}{\mu T} \int d\boldsymbol{x} \| \boldsymbol{\nu}^*_{\tau}(\boldsymbol{x})\|^2 P_{\tau}(\boldsymbol{x}),
\end{eqnarray}
and the housekeeping entropy production rate is given by 
\begin{eqnarray}
\sigma_{\tau}^{\rm hk} =  \frac{1}{\mu T} \int d\boldsymbol{x} \|\boldsymbol{\nu}_{\tau} (\boldsymbol{x})-\boldsymbol{\nu}^*_{\tau} (\boldsymbol{x})\|^2 P_{\tau}(\boldsymbol{x}).
\end{eqnarray}
\end{theorem}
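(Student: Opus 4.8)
The plan is to establish the formula for $\sigma^{\rm ex}_{\tau}$ first, and then to read off the formula for $\sigma^{\rm hk}_{\tau}$ from the decomposition $\sigma_{\tau} = \sigma^{\rm ex}_{\tau} + \sigma^{\rm hk}_{\tau}$ together with a Pythagorean-type cancellation of a cross term.

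\emph{Step 1 (excess rate).} I would start from Lemma~\ref{optimalprotocol}, which gives $\mathcal{W}_2(P_{\tau}, P_{\tau+\Delta \tau})^2 = \Delta \tau \int_{\tau}^{\tau+\Delta \tau} dt \int d\boldsymbol{x}\, \|\boldsymbol{\nu}^*_t(\boldsymbol{x})\|^2 P_t(\boldsymbol{x})$ with $\boldsymbol{\nu}^*_t = \nabla \phi_t$ and $\phi_t$ solving the Hamilton--Jacobi equation~(\ref{additionalcondition}) and the associated continuity equation. Dividing by $(\Delta \tau)^2$ and taking $\Delta \tau = \Delta t \to 0$, the normalized time integral $\tfrac{1}{\Delta t}\int_{\tau}^{\tau+\Delta t} dt\,(\cdots)$ converges to its integrand evaluated at $t=\tau$, so that $\lim_{\Delta t \to 0} \mathcal{W}_2(P_{\tau}, P_{\tau+\Delta t})^2/(\Delta t)^2 = \int d\boldsymbol{x}\, \|\boldsymbol{\nu}^*_{\tau}(\boldsymbol{x})\|^2 P_{\tau}(\boldsymbol{x})$, where $\boldsymbol{\nu}^*_{\tau} = \nabla \phi_{\tau}$ is the limiting optimal mean local velocity. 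Evaluating the continuity equation of Lemma~\ref{optimalprotocol} at $t=\tau$ shows $\partial_{\tau} P_{\tau} = -\nabla\cdot(\nabla\phi_{\tau}\,P_{\tau})$, i.e.\ this $\phi_{\tau}$ is exactly the potential fixed by~(\ref{conditiondecomp}); then, by the definition of $\sigma^{\rm ex}_{\tau}$, I obtain $\sigma^{\rm ex}_{\tau} = \tfrac{1}{\mu T}\int d\boldsymbol{x}\, \|\boldsymbol{\nu}^*_{\tau}(\boldsymbol{x})\|^2 P_{\tau}(\boldsymbol{x})$.

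\emph{Step 2 (housekeeping rate).} I would insert the algebraic identity $\|\boldsymbol{\nu}_{\tau}\|^2 = \|\boldsymbol{\nu}^*_{\tau}\|^2 + 2\,\boldsymbol{\nu}^*_{\tau}\cdot(\boldsymbol{\nu}_{\tau}-\boldsymbol{\nu}^*_{\tau}) + \|\boldsymbol{\nu}_{\tau}-\boldsymbol{\nu}^*_{\tau}\|^2$ into $\sigma_{\tau} = \tfrac{1}{\mu T}\int d\boldsymbol{x}\, \|\boldsymbol{\nu}_{\tau}(\boldsymbol{x})\|^2 P_{\tau}(\boldsymbol{x})$. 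The cross term $\tfrac{2}{\mu T}\int d\boldsymbol{x}\, \nabla\phi_{\tau}\cdot(\boldsymbol{\nu}_{\tau}-\boldsymbol{\nu}^*_{\tau})\,P_{\tau}$ becomes, after an integration by parts (the boundary term vanishing by the assumed rapid decay of $P_{\tau}$), $-\tfrac{2}{\mu T}\int d\boldsymbol{x}\, \phi_{\tau}\, \nabla\cdot[(\boldsymbol{\nu}_{\tau}-\boldsymbol{\nu}^*_{\tau})P_{\tau}]$, which is identically zero since $\nabla\cdot(\boldsymbol{\nu}_{\tau}P_{\tau}) = \nabla\cdot(\boldsymbol{\nu}^*_{\tau}P_{\tau}) = -\partial_{\tau}P_{\tau}$ by~(\ref{conditiondecomp}). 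Hence $\sigma_{\tau} = \sigma^{\rm ex}_{\tau} + \tfrac{1}{\mu T}\int d\boldsymbol{x}\, \|\boldsymbol{\nu}_{\tau}(\boldsymbol{x})-\boldsymbol{\nu}^*_{\tau}(\boldsymbol{x})\|^2 P_{\tau}(\boldsymbol{x})$, and since $\sigma^{\rm hk}_{\tau} = \sigma_{\tau} - \sigma^{\rm ex}_{\tau}$ by definition, the claimed formula for $\sigma^{\rm hk}_{\tau}$ follows.

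\emph{Main obstacle.} The delicate point is Step~1: one must justify that the infinitesimal-time optimal-transport velocity field coincides with the gradient field $\nabla\phi_{\tau}$ singled out by the divergence constraint~(\ref{conditiondecomp}). This is the Otto-calculus statement that the Wasserstein tangent space at $P_{\tau}$ consists of gradient fields, equivalently that among all $\boldsymbol{v}$ with $-\nabla\cdot(\boldsymbol{v}P_{\tau}) = \partial_{\tau}P_{\tau}$ the minimizer of $\int d\boldsymbol{x}\,\|\boldsymbol{v}\|^2 P_{\tau}$ is the unique gradient solution; the same ellipticity (under the decay hypothesis on $P_{\tau}$) is precisely what makes $\phi_{\tau}$ well defined up to an additive constant in the theorem statement. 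Granting this identification, the rest of the argument—the exchange of limit and time-average, the vanishing of the cross term, and the integrations by parts—is routine.
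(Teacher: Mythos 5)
Your proposal is correct and follows essentially the same route as the paper's proof: Step~1 reads off $\sigma^{\rm ex}_{\tau}=\frac{1}{\mu T}\int d\boldsymbol{x}\,\|\boldsymbol{\nu}^*_{\tau}\|^2P_{\tau}$ from Lemma~\ref{optimalprotocol} in the $\Delta t\to0$ limit (the paper makes the same point that the Hamilton--Jacobi condition~(\ref{additionalcondition}) plays no role at a single instant), and Step~2 expands $\|\boldsymbol{\nu}_{\tau}\|^2$, integrates the cross term by parts, and uses $\nabla\cdot[(\boldsymbol{\nu}_{\tau}-\boldsymbol{\nu}^*_{\tau})P_{\tau}]=0$ together with the decay hypothesis — exactly as the paper does. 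Your closing remark about the Otto-calculus identification is a fair observation about where the real analytic content lives, but the paper treats it as already contained in Lemma~\ref{optimalprotocol}, so it is not a gap relative to the paper's standard of rigor.
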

\begin{proof}
From Lemma~\ref{optimalprotocol}, the excess entropy production rate is calculated as
\begin{eqnarray}
\sigma^{\rm ex}_{\tau} &=& \frac{1}{\mu T} \lim_{\Delta t \to 0} \frac{[\mathcal{W}_2(P_{\tau}, P_{\tau + \Delta t})]^2}{(\Delta t)^2} \nonumber \\
&=& \frac{1}{\mu T}\int d\boldsymbol{x} \|\boldsymbol{\nu}_{\tau}^*(\boldsymbol{x}) \|^2 P_{\tau} (\boldsymbol{x}), \label{excesscalc}
\end{eqnarray}
where $\boldsymbol{\nu}_{\tau}^*(\boldsymbol{x})$ is the optimal mean local velocity that satisfies $\boldsymbol{\nu}_{\tau}^*(\boldsymbol{x}) = \nabla \phi_{\tau} (\boldsymbol{x})$ and $\partial_{\tau} P_{\tau} (\boldsymbol{x}) =- \nabla \cdot( \boldsymbol{\nu}_{\tau}^*(\boldsymbol{x}) P_{\tau} (\boldsymbol{x}) )$. Here, the condition of Eq.~(\ref{additionalcondition}) is not needed in the definition of the excess entropy production rate because Eq.~(\ref{excesscalc}) does not include the time evolution of $\phi_{\tau} (\boldsymbol{x})$. By combining the Fokker--Planck equation $\partial_{\tau} P_{\tau} (\boldsymbol{x}) =- \nabla \cdot( \boldsymbol{\nu}_{\tau} (\boldsymbol{x}) P_{\tau} (\boldsymbol{x}) )$ with the condition $\partial_{\tau} P_{\tau} (\boldsymbol{x}) = -\nabla \cdot( \boldsymbol{\nu}_{\tau}^*(\boldsymbol{x}) P_{\tau} (\boldsymbol{x}) )$, we obtain Eq.~(\ref{conditiondecomp}). We can calculate the quantity $\int d\boldsymbol{x} \boldsymbol{\nu}_{\tau}^*(\boldsymbol{x}) \cdot (\boldsymbol{\nu}_{\tau}(\boldsymbol{x}) - \boldsymbol{\nu}_{\tau}^*(\boldsymbol{x})) P_{\tau} (\boldsymbol{x})$ as follows.
\begin{eqnarray}
\int d\boldsymbol{x} \boldsymbol{\nu}_{\tau}^*(\boldsymbol{x}) \cdot (\boldsymbol{\nu}_{\tau}(\boldsymbol{x}) - \boldsymbol{\nu}^*_{\tau}(\boldsymbol{x})) P_{\tau} (\boldsymbol{x}) &=& \int d\boldsymbol{x} \nabla \phi_{\tau} (\boldsymbol{x}) \cdot (\boldsymbol{\nu}_{\tau}(\boldsymbol{x}) - \boldsymbol{\nu}^*_{\tau} (\boldsymbol{x})) P_{\tau} (\boldsymbol{x}) \nonumber \\
&=& - \int d\boldsymbol{x}  \phi_{\tau} (\boldsymbol{x}) \nabla \cdot [(\boldsymbol{\nu}_{\tau}(\boldsymbol{x}) - \boldsymbol{\nu}_{\tau}^*(\boldsymbol{x})) P_{\tau} (\boldsymbol{x}) ] \nonumber \\ 
&=& 0,
\end{eqnarray}
where we used $\nabla \cdot( (\boldsymbol{\nu}_{\tau} (\boldsymbol{x}) - \boldsymbol{\nu}^*_{\tau} (\boldsymbol{x}) ) P_{\tau} (\boldsymbol{x}) ) =0$ in Eq.~(\ref{conditiondecomp}), and $\int d\boldsymbol{x} \nabla  \cdot [ \phi_{\tau} (\boldsymbol{x}) (\boldsymbol{\nu}_{\tau}(\boldsymbol{x}) - \boldsymbol{\nu}_{\tau}^*(\boldsymbol{x})) P_{\tau} (\boldsymbol{x})] = 0$ because of the assumption that $P_{\tau} (\boldsymbol{x})$ decays sufficiently rapidly at infinity. Thus, the housekeeping entropy production rate is calculated as
\begin{eqnarray}
\sigma^{\rm hk}_{\tau} &=&  \frac{1}{\mu T} \int d\boldsymbol{x} \|\boldsymbol{\nu}_{\tau} (\boldsymbol{x}) \|^2 P_{\tau}(\boldsymbol{x}) - \frac{1}{\mu T} \int d\boldsymbol{x} \|\boldsymbol{\nu}^*_{\tau} (\boldsymbol{x}) \|^2 P_{\tau}(\boldsymbol{x}) \nonumber\\
&&- \frac{2}{\mu T} \int d\boldsymbol{x} \boldsymbol{\nu}_{\tau}^*(\boldsymbol{x}) \cdot (\boldsymbol{\nu}_{\tau}(\boldsymbol{x}) - \boldsymbol{\nu}_{\tau}^*(\boldsymbol{x})) P_{\tau} (\boldsymbol{x}) \nonumber\\
&=& \frac{1}{\mu T} \int d\boldsymbol{x} \|\boldsymbol{\nu}_{\tau} (\boldsymbol{x}) -  \boldsymbol{\nu}^*_{\tau} (\boldsymbol{x})\|^2 P_{\tau}(\boldsymbol{x}).
\end{eqnarray}
\end{proof}
\begin{remark} \label{remarkpytha}
If we introduce the inner product $\langle \boldsymbol{a}, \boldsymbol{b} \rangle_{P_{\tau}/(\mu T)} =\int d\boldsymbol{x} [ \boldsymbol{a} (\boldsymbol{x}) \cdot \boldsymbol{b} (\boldsymbol{x}) ] P_{\tau}(\boldsymbol{x})/(\mu T)$ for $\boldsymbol{a}(\boldsymbol{x}) \in \mathbb{R}^d$ and $\boldsymbol{b}(\boldsymbol{x}) \in \mathbb{R}^d$, the entropy production rate, the excess entropy production rate, and the housekeeping entropy production rate are given by $\sigma_{\tau} = \langle \boldsymbol{\nu}_{\tau}, \boldsymbol{\nu}_{\tau} \rangle_{P_{\tau}/(\mu T)}$, $\sigma_{\tau}^{\rm ex} = \langle \boldsymbol{\nu}^*_{\tau} , \boldsymbol{\nu}^*_{\tau} \rangle_{P_{\tau}/(\mu T)}$, and $\sigma_{\tau}^{\rm hk} = \langle \boldsymbol{\nu}_{\tau} - \boldsymbol{\nu}^*_{\tau} , \boldsymbol{\nu}_{\tau} - \boldsymbol{\nu}^*_{\tau} \rangle_{P_{\tau}/(\mu T)}$, respectively. Thus, the decomposition $\sigma_{\tau} =\sigma^{\rm ex}_{\tau} + \sigma^{\rm hk}_{\tau}$ can be regarded as the Pythagorean theorem 
\begin{eqnarray}
\langle \boldsymbol{\nu}_{\tau} , \boldsymbol{\nu}_{\tau} \rangle_{P_{\tau}/(\mu T)} = \langle  \boldsymbol{\nu}^*_{\tau} , \boldsymbol{\nu}^*_{\tau}  \rangle_{P_{\tau}/(\mu T)} + \langle \boldsymbol{\nu}_{\tau} - \boldsymbol{\nu}^*_{\tau} , \boldsymbol{\nu}_{\tau} - \boldsymbol{\nu}^*_{\tau} \rangle_{P_{\tau}/(\mu T)},
\end{eqnarray}
where $\boldsymbol{\nu}_{\tau}(\boldsymbol{x})$ is orthogonal to $\boldsymbol{\nu}_{\tau}(\boldsymbol{x}) - \boldsymbol{\nu}^*_{\tau}(\boldsymbol{x})$ because of $\langle  \boldsymbol{\nu}^*_{\tau} , \boldsymbol{\nu}_{\tau} - \boldsymbol{\nu}^*_{\tau} \rangle_{P_{\tau}/(\mu T)}=0$. Because the orthogonality is based on $\boldsymbol{\nu}^*_{\tau} (\boldsymbol{x}) = \nabla \phi_{\tau} (\boldsymbol{x})$ and $\nabla \cdot( (\boldsymbol{\nu}_{\tau} (\boldsymbol{x}) - \boldsymbol{\nu}_{\tau}^* (\boldsymbol{x}) ) P_{\tau} (\boldsymbol{x}) ) =0$, this decomposition is related to the Helmholtz--Hodge decomposition, and the mean local velocity is given by $\boldsymbol{\nu}_{t}(\boldsymbol{x}) =\nabla \phi_{\tau} (\boldsymbol{x}) + \boldsymbol{v}_{\tau} (\boldsymbol{x})$ for $\boldsymbol{v}_{\tau} (\boldsymbol{x}) \in \mathbb{R}^d$ that satisfies $\nabla \cdot(\boldsymbol{v}_{\tau} (\boldsymbol{x})P_{\tau} (\boldsymbol{x}) ) =0$. The decomposition of the entropy production rate based on $\nabla \cdot(\boldsymbol{v}_{\tau} (\boldsymbol{x})P_{\tau} (\boldsymbol{x}) ) =0$ was discussed in Ref.~\cite{maes2014nonequilibrium} without considering optimal transport theory. 
\end{remark}
\begin{remark}
Let us consider the case that a force $\boldsymbol{F}_{\tau}(\boldsymbol{x})$ is a potential force $\boldsymbol{F}_{\tau}(\boldsymbol{x}) = -\nabla U_{\tau}(\boldsymbol{x})$ where $U(\boldsymbol{x}) \in \mathbb{R}$ is a potential. In this case, the local mean velocity is given by $\boldsymbol{\nu}_{\tau} ( \boldsymbol{x}) = \nabla (- \mu U_{\tau}(\boldsymbol{x}) - \mu T \ln P_{\tau} (\boldsymbol{x}) )$ and $\phi_{\tau}(\boldsymbol{x})$ can be $\phi_{\tau}(\boldsymbol{x})  = - \mu U_{\tau}(\boldsymbol{x}) - \mu T \ln P_{\tau} (\boldsymbol{x})$. Thus, we obtain $\boldsymbol{\nu}_{\tau}(\boldsymbol{x}) = \boldsymbol{\nu}^*_{\tau}(\boldsymbol{x})$, $\sigma^{\rm ex}_{\tau} = \sigma_{\tau}$ and $\sigma^{\rm hk}_{\tau} = 0$ for a potential force. This fact implies that the excess entropy production rate and the housekeeping entropy production rate quantify contributions of a potential force and a non-potential force to the entropy production rate, respectively. 
\end{remark}
Based on the expression in Theorem~\ref{excesshousekeeping}, we also obtain a thermodynamic uncertainty relation for the excess entropy production rate~\cite{dechant2022geometricletter, dechant2022geometric}, which was substaintially obtained in Refs.~\cite{dechant2018current, ito2020stochastic} for the entropy production rate.
\begin{theorem} \label{turexcess}
Let $r(\boldsymbol{x}) \in \mathbb{R}$ be any time-independent function of $\boldsymbol{x}\in \mathbb{R}^d$. We assume that $P_{\tau} (\boldsymbol{x})$ decays sufficiently rapidly at infinity.
The entropy production rate is bounded by
\begin{eqnarray}
\sigma_{\tau} \geq \sigma_{\tau}^{\rm ex} \geq \frac{ [\partial_{\tau} \mathbb{E}_{P_{\tau}} (r)]^2}{\mu T \int d\boldsymbol{x}  \| \nabla r(\boldsymbol{x}) \|^2P_{\tau}(\boldsymbol{x})}, \label{turwasser}
\end{eqnarray}
where $\mathbb{E}_{P_{\tau}} (r)$ is the expected value defined as
\begin{eqnarray}
\mathbb{E}_{P_{\tau}} (r) = \int d \boldsymbol{x} P_{\tau}(\boldsymbol{x}) r(\boldsymbol{x}).
\end{eqnarray}
\end{theorem}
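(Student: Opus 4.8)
The first inequality $\sigma_{\tau}\geq\sigma_{\tau}^{\rm ex}$ is immediate from the decomposition $\sigma_{\tau}=\sigma_{\tau}^{\rm ex}+\sigma_{\tau}^{\rm hk}$ together with $\sigma_{\tau}^{\rm hk}\geq0$, both of which are already established. So the work is entirely in the second inequality, and the plan is to combine the explicit formula $\sigma_{\tau}^{\rm ex}=\tfrac{1}{\mu T}\int d\boldsymbol{x}\,\|\boldsymbol{\nu}^*_{\tau}(\boldsymbol{x})\|^2P_{\tau}(\boldsymbol{x})$ from Theorem~\ref{excesshousekeeping} with a Cauchy--Schwarz estimate, exactly in the spirit of the final remark of Section~3.

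The key computation is to rewrite $\partial_{\tau}\mathbb{E}_{P_{\tau}}(r)$ as a current-like quantity. First I would differentiate under the integral, $\partial_{\tau}\mathbb{E}_{P_{\tau}}(r)=\int d\boldsymbol{x}\,(\partial_{\tau}P_{\tau}(\boldsymbol{x}))\,r(\boldsymbol{x})$ (legitimate since $r$ is time-independent), then substitute the optimal-transport form of the Fokker--Planck equation, $\partial_{\tau}P_{\tau}(\boldsymbol{x})=-\nabla\cdot(\boldsymbol{\nu}^*_{\tau}(\boldsymbol{x})P_{\tau}(\boldsymbol{x}))$, from Eq.~(\ref{conditiondecomp}). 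Integrating by parts and using that $P_{\tau}$ decays sufficiently rapidly at infinity (so the boundary term $\int d\boldsymbol{x}\,\nabla\cdot(r\,\boldsymbol{\nu}^*_{\tau}P_{\tau})$ vanishes) gives
\begin{eqnarray}
\partial_{\tau}\mathbb{E}_{P_{\tau}}(r)=\int d\boldsymbol{x}\,\boldsymbol{\nu}^*_{\tau}(\boldsymbol{x})\cdot\nabla r(\boldsymbol{x})\,P_{\tau}(\boldsymbol{x}).\nonumber
\end{eqnarray}
(One may equivalently use $\boldsymbol{\nu}_{\tau}$ here, since $\nabla\cdot((\boldsymbol{\nu}_{\tau}-\boldsymbol{\nu}^*_{\tau})P_{\tau})=0$ by Eq.~(\ref{conditiondecomp}) kills the difference, but the form with $\boldsymbol{\nu}^*_{\tau}$ is what pairs with $\sigma^{\rm ex}_{\tau}$.)

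Then I would apply the Cauchy--Schwarz inequality with respect to the measure $P_{\tau}(\boldsymbol{x})\,d\boldsymbol{x}$:
\begin{eqnarray}
[\partial_{\tau}\mathbb{E}_{P_{\tau}}(r)]^2\leq\Bigl(\int d\boldsymbol{x}\,\|\boldsymbol{\nu}^*_{\tau}(\boldsymbol{x})\|^2P_{\tau}(\boldsymbol{x})\Bigr)\Bigl(\int d\boldsymbol{x}\,\|\nabla r(\boldsymbol{x})\|^2P_{\tau}(\boldsymbol{x})\Bigr)=\mu T\,\sigma_{\tau}^{\rm ex}\int d\boldsymbol{x}\,\|\nabla r(\boldsymbol{x})\|^2P_{\tau}(\boldsymbol{x}),\nonumber
\end{eqnarray}
and rearranging yields $\sigma_{\tau}^{\rm ex}\geq[\partial_{\tau}\mathbb{E}_{P_{\tau}}(r)]^2/(\mu T\int d\boldsymbol{x}\,\|\nabla r\|^2P_{\tau})$, which combined with $\sigma_{\tau}\geq\sigma_{\tau}^{\rm ex}$ is the claim. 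There is no serious obstacle; the only point needing care is the integration by parts, i.e.\ justifying that the boundary terms vanish under the stated decay assumption on $P_{\tau}$ (and implicitly that $\nabla r$ does not grow too fast relative to $P_{\tau}$, so that all integrals converge), and noting that the degenerate case $\int\|\nabla r\|^2P_{\tau}=0$ forces $\partial_{\tau}\mathbb{E}_{P_{\tau}}(r)=0$ so the bound holds trivially.
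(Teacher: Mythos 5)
Your proof is correct and follows essentially the same route as the paper: rewrite $\partial_{\tau}\mathbb{E}_{P_{\tau}}(r)$ via the optimal-transport continuity equation $\partial_{\tau}P_{\tau}=-\nabla\cdot(\boldsymbol{\nu}^*_{\tau}P_{\tau})$, integrate by parts using the decay assumption, then apply Cauchy--Schwarz against $\|\boldsymbol{\nu}^*_{\tau}\|^2$ and identify the result with $\mu T\sigma^{\rm ex}_{\tau}$. The remarks you add about the degenerate case and implicit growth conditions on $\nabla r$ are sensible side comments but are not part of the paper's argument.
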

\begin{proof}
The quantity $\partial_{\tau} \mathbb{E}_{P_{\tau}} (r)$ is calculated as
\begin{eqnarray}
\partial_{\tau} \mathbb{E}_{P_{\tau}} (r) &=& \int d \boldsymbol{x} \partial_{\tau} P_{\tau}(\boldsymbol{x}) r(\boldsymbol{x}) \nonumber\\
&=&- \int d \boldsymbol{x} \nabla \cdot (\boldsymbol{\nu}^*_{\tau} (\boldsymbol{x})  P_{\tau}(\boldsymbol{x})) r(\boldsymbol{x}) \nonumber \\
&=& \int d \boldsymbol{x} (\boldsymbol{\nu}^*_{\tau} (\boldsymbol{x})  P_{\tau}(\boldsymbol{x})) \cdot \nabla r(\boldsymbol{x}),
\end{eqnarray}
where we used $\int d \boldsymbol{x} \nabla \cdot (\boldsymbol{\nu}^*_{\tau} (\boldsymbol{x})  P_{\tau}(\boldsymbol{x}) r(\boldsymbol{x}))=0$ because of the assumption that $P_{\tau} (\boldsymbol{x})$ decays sufficiently rapidly at infinity. From the Cauchy--Schwartz inequality, we obtain
\begin{eqnarray}
[\partial_{\tau} \mathbb{E}_{P_{\tau}} (r)]^2
&=& \left[ \int d \boldsymbol{x} (\boldsymbol{\nu}^*_{\tau} (\boldsymbol{x})  P_{\tau}(\boldsymbol{x})) \cdot \nabla r(\boldsymbol{x}) \right]^2 \nonumber \\
&\leq& \left[ \int d \boldsymbol{x} \| \boldsymbol{\nu}^*_{\tau} (\boldsymbol{x}) \|^2 P_{\tau}(\boldsymbol{x})\right] \left[ \int d \boldsymbol{x} \| \nabla r(\boldsymbol{x}) \|^2 P_{\tau}(\boldsymbol{x})\right] \nonumber \\
&=& \mu T \sigma_{\tau}^{\rm ex} \left[ \int d \boldsymbol{x} \| \nabla r(\boldsymbol{x}) \|^2 P_{\tau}(\boldsymbol{x})\right].
\label{calcturwasser}
\end{eqnarray}
By combining Eq.~(\ref{calcturwasser}) with $\sigma_{\tau} \geq \sigma_{\tau}^{\rm ex}$, we obtain Eq.~(\ref{turwasser}).
\end{proof}
\begin{remark}
The weaker inequality $\sigma_{\tau} \geq  [\partial_{\tau} \mathbb{E}_{P_{\tau}} (r)]^2/[ \mu T \int d\boldsymbol{x}  \| \nabla r(\boldsymbol{x}) \|^2P_{\tau}(\boldsymbol{x})]$ can be regarded as the thermodynamic uncertainty relation Eq.~(\ref{TUR2}) for $\boldsymbol{w}(\boldsymbol{x}) = \nabla r(\boldsymbol{x})$ where the generalized current is calculated as $\mathcal{J}[\nabla r(\boldsymbol{x})] =\partial_{\tau} \mathbb{E}_{P_{\tau}} (r)$. Thus, this result is also regarded as a consequence of the Cram\'{e}r--Rao bound for the path probability density. In the context of optimal transport theory, a mathematically 
equivalent inequality, namely the Wasserstein--Cram\'{e}r--Rao bound, was also proposed in Ref.~\cite{li2019wasserstein}.
\end{remark}

\section{Thermodynamic links between information geometry and optimal transport}
\subsection{Gradient flow and information geometry in space of probability densities}
In terms of the excess entropy production rate, we can obtain a thermodynamic link between information geometry in the space of probability densities and optimal transport theory. To discuss this thermodynamic link, we start with the definition of the pseudo energy $U_{t}^*(\boldsymbol{x})$ and the pseudo canonical distribution $P^{\rm pcan}_t (\boldsymbol{x})$ proposed in Ref.~\cite{dechant2022geometric}.
\begin{definition}
Let $\phi_t (\boldsymbol{x}) \in \mathbb{R}^d$ be a potential which provides the optimal mean local velocity $\boldsymbol{\nu}^*_{t} (\boldsymbol{x})  = \nabla \phi_t (\boldsymbol{x}) \in \mathbb{R}^d$ for an infinitesimal time such that $\lim_{\Delta t \to 0} [\mathcal{W}_2 (P_t, P_{t+\Delta t})]^2/(\Delta t)^2 = \int d \boldsymbol{x}\| \boldsymbol{\nu}^*_{t} (\boldsymbol{x})  \|^2 P_t(\boldsymbol{x})$. The pseudo energy $U_{t}^*(\boldsymbol{x}) \in \mathbb{R}$ is defined as
\begin{eqnarray}
\phi_{t} (\boldsymbol{x}) = - \mu U_{t}^*(\boldsymbol{x}) - \mu T \ln P_{t} (\boldsymbol{x}),
\end{eqnarray}
and the pseudo canonical distribution $P^{\rm pcan}_t (\boldsymbol{x})$ is defined as 
\begin{eqnarray}
P^{\rm pcan}_t (\boldsymbol{x}) = \frac{ \exp \left[ - \frac{U_{t}^*(\boldsymbol{x})}{T}\right]}{\int d\boldsymbol{x} \exp \left[ - \frac{U_{t}^*(\boldsymbol{x})}{T}\right]},
\end{eqnarray}
which is a probability density that satisfies $P^{\rm pcan}_t (\boldsymbol{x}) \geq 0$ and $\int d \boldsymbol{x} P^{\rm pcan}_t (\boldsymbol{x}) =1$.
\end{definition}
\begin{remark}
The pseudo energy can be defined for a non-potential force $\boldsymbol{F}_t(\boldsymbol{x})$ that satisfies $\nabla \cdot( (\boldsymbol{F}_t(\boldsymbol{x})+\nabla U^*_t (\boldsymbol{x}) ) P_t(\boldsymbol{x})) = 0$. Thus, the pseudo energy $U^*_t (\boldsymbol{x})$ is not generally unique.
If a force $\boldsymbol{F}_t(\boldsymbol{x})$ is given by a potential force $\boldsymbol{F}_t(\boldsymbol{x}) = -\nabla U_t (\boldsymbol{x})$, a potential energy can trivially be a pseudo energy $U_{t}^*(\boldsymbol{x}) =  U_t (\boldsymbol{x})$.
\end{remark}
The time evolution of $P_{t}(\boldsymbol{x})$ is given by a gradient flow expression. The concept of the gradient flow is originated from the Jordan–-Kinderlehrer–-Otto scheme~\cite{jordan1998variational}. We rewrite the gradient flow expression in Ref.~\cite{maas2011gradient} by using a functional derivative of the Kullback--Leibler divergence for general Markov jump processes~\cite{yoshimura2022geometrical}. The following proposition is a special case of a gradient flow expression~\cite{yoshimura2022geometrical} for the Fokker--Planck equation.
\begin{proposition}
The time evolution of $P_{t}(\boldsymbol{x})$ under the Fokker--Planck equation is described by a gradient flow expression,
\begin{eqnarray}
\partial_t P_t (\boldsymbol{x}) = \mathsf{D} [ \partial_{P_t(\boldsymbol{x})} D_{\rm KL} (P_t \| P^{\rm pcan}_t)],
\label{gradientflow}
\end{eqnarray}
where $D_{\rm KL}(P_t\|P^{\rm pcan}_t)$ is the Kullback--Leibler divergence defined as
\begin{eqnarray}
D_{\rm KL}(P_t\|P^{\rm pcan}_t)&=& \int d\boldsymbol{x} \left[P_t(\boldsymbol{x})\ln \frac{P_t(\boldsymbol{x})}{P^{\rm pcan}_t(\boldsymbol{x})} - P_t(\boldsymbol{x}) +P^{\rm pcan}_t(\boldsymbol{x})\right],
\label{KLforfd}
\end{eqnarray}
and $\mathsf{D} [\cdot]$ stands for the weighted Laplacian operator defined as 
\begin{eqnarray} \label{weightedlaplacian}
\mathsf{D} [\cdot] = \nabla \cdot ( \mu T P_t (\boldsymbol{x}) \nabla [\cdot] ). 
\end{eqnarray}
\end{proposition}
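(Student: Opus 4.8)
The plan is to evaluate the right-hand side of Eq.~(\ref{gradientflow}) directly and show that it collapses to $-\nabla\cdot(\boldsymbol{\nu}^*_t(\boldsymbol{x})P_t(\boldsymbol{x}))$, which by Lemma~\ref{optimalprotocol} and Theorem~\ref{excesshousekeeping} is exactly $\partial_t P_t(\boldsymbol{x})$. So the proof is a short chain of substitutions: compute the functional derivative of the relative entropy, insert the explicit pseudo canonical distribution, apply the weighted Laplacian, and recognize the result as the continuity equation for the optimal mean local velocity.

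First I would compute $\partial_{P_t(\boldsymbol{x})} D_{\rm KL}(P_t\|P^{\rm pcan}_t)$ from Eq.~(\ref{KLforfd}), with $P^{\rm pcan}_t$ frozen as the reference density. Differentiating the integrand $P_t\ln P_t - P_t\ln P^{\rm pcan}_t - P_t + P^{\rm pcan}_t$ termwise with respect to $P_t(\boldsymbol{x})$ gives $\ln P_t(\boldsymbol{x}) + 1 - \ln P^{\rm pcan}_t(\boldsymbol{x}) - 1 = \ln[P_t(\boldsymbol{x})/P^{\rm pcan}_t(\boldsymbol{x})]$; the linear terms $-P_t+P^{\rm pcan}_t$ in Eq.~(\ref{KLforfd}) are included precisely so that the additive constant cancels and the derivative is exactly the log-ratio. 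Then I would use the definition of the pseudo canonical distribution to write $\ln P^{\rm pcan}_t(\boldsymbol{x}) = -U^*_t(\boldsymbol{x})/T - \ln Z_t$ with $Z_t = \int d\boldsymbol{x}\,\exp[-U^*_t(\boldsymbol{x})/T]$, so that $\partial_{P_t(\boldsymbol{x})} D_{\rm KL}(P_t\|P^{\rm pcan}_t) = \ln P_t(\boldsymbol{x}) + U^*_t(\boldsymbol{x})/T + \ln Z_t$.

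Next I would apply the weighted Laplacian of Eq.~(\ref{weightedlaplacian}). Since $\ln Z_t$ carries no $\boldsymbol{x}$-dependence, it drops out under $\nabla$, leaving $\mathsf{D}[\partial_{P_t(\boldsymbol{x})} D_{\rm KL}] = \nabla\cdot(\mu T P_t\nabla\ln P_t + \mu P_t\nabla U^*_t)$. The decisive step is then to read off from $\phi_t(\boldsymbol{x}) = -\mu U^*_t(\boldsymbol{x}) - \mu T\ln P_t(\boldsymbol{x})$ and $\boldsymbol{\nu}^*_t(\boldsymbol{x}) = \nabla\phi_t(\boldsymbol{x})$ that $\mu T P_t\nabla\ln P_t + \mu P_t\nabla U^*_t = -P_t\nabla\phi_t = -\boldsymbol{\nu}^*_t P_t$, whence $\mathsf{D}[\partial_{P_t(\boldsymbol{x})} D_{\rm KL}] = -\nabla\cdot(\boldsymbol{\nu}^*_t P_t) = \partial_t P_t(\boldsymbol{x})$, which is the claim. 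Note that no integration by parts is used anywhere in this identification, so no decay-at-infinity hypothesis is needed; the only structural property of $P^{\rm pcan}_t$ invoked is that its normalization $Z_t$ is $\boldsymbol{x}$-independent.

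The one point that needs care — and the main obstacle to stating the proof cleanly — is the meaning of the functional derivative $\partial_{P_t(\boldsymbol{x})}$. Although $P^{\rm pcan}_t$ is built from the pseudo energy $U^*_t$, which is itself tied to $P_t$ through the orthogonality condition $\nabla\cdot((\boldsymbol{F}_t+\nabla U^*_t)P_t)=0$ and is moreover not unique, the gradient-flow identity Eq.~(\ref{gradientflow}) is to be understood with $P^{\rm pcan}_t$ held fixed as the reference measure when the derivative is taken, exactly as in the standard Jordan--Kinderlehrer--Otto reading of a relative-entropy gradient flow. I would make this convention explicit before carrying out the termwise differentiation, since it is what renders the computation above self-consistent; everything after that is the routine substitution chain just sketched.
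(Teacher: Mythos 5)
Your proof is correct and follows essentially the same computation as the paper's: both take the functional derivative to get $\ln(P_t/P^{\rm pcan}_t)$, use the definition of the pseudo energy and $\phi_t$ to identify $\nabla\ln(P_t/P^{\rm pcan}_t) = -\nabla\phi_t/(\mu T)$, and recognize the resulting expression as the continuity equation $\partial_t P_t = -\nabla\cdot(\boldsymbol{\nu}^*_t P_t)$. Your explicit handling of the normalization $\ln Z_t$ and your remark that $P^{\rm pcan}_t$ must be held fixed during the functional differentiation are useful clarifications, but they do not change the substance of the argument.
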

\begin{proof} 
The functional derivative $\partial_{P_t(\boldsymbol{x})} D_{\rm KL} (P_t \| P^{\rm pcan}_t)$ is calculated as
\begin{eqnarray}
\partial_{P_t(\boldsymbol{x})} D_{\rm KL} (P_t \| P^{\rm pcan}_t) &=&  \partial_{P_t(\boldsymbol{x})} \int d\boldsymbol{x} \left[P_t(\boldsymbol{x})\ln \frac{P_t(\boldsymbol{x})}{P^{\rm pcan}_t(\boldsymbol{x})} - P_t(\boldsymbol{x}) +P^{\rm pcan}_t(\boldsymbol{x})\right]\nonumber \\
&=&\ln \frac{P_t(\boldsymbol{x})}{P^{\rm pcan}_t(\boldsymbol{x})},
\end{eqnarray}
and its gradient is calculated as
\begin{eqnarray}
\nabla ( \partial_{P_t(\boldsymbol{x})} D_{\rm KL} (P_t \| P^{\rm pcan}_t) ) &=& \nabla \ln \frac{P_t(\boldsymbol{x})}{P^{\rm pcan}_t(\boldsymbol{x})} \nonumber \\
&=& \frac{1}{\mu T} \nabla (\mu U^*_t (\boldsymbol{x}) +  \mu T \ln P_t (\boldsymbol{x}) ) \nonumber \\
&=& -\frac{1}{\mu T} \nabla \phi_t (\boldsymbol{x}),
\end{eqnarray}
where we used $\nabla \left[\int d\boldsymbol{x} \exp \left[ - U_{t}^*(\boldsymbol{x})/T\right] \right] =0$.
Thus, the optimal mean local velocity provides Eq.~(\ref{gradientflow}) as follows,
\begin{eqnarray}
\partial_t P_t (\boldsymbol{x}) &=& - \nabla \cdot ( (\nabla \phi_t (\boldsymbol{x})) P_t (\boldsymbol{x}) )
= \nabla \cdot( \mu T P_t(\boldsymbol{x}) \nabla ( \partial_{P_t(\boldsymbol{x})} D_{\rm KL} (P_t \| P^{\rm pcan}_t) ) ) \nonumber \\
&=& \mathsf{D} [ \partial_{P_t(\boldsymbol{x})} D_{\rm KL} (P_t \| P^{\rm pcan}_t)],
\end{eqnarray}
\end{proof}
\begin{remark}
The Kullback--Leibler divergence is calculated as $D_{\rm KL}(P_t\|P^{\rm pcan}_t)= \int d\boldsymbol{x} P_t(\boldsymbol{x})\ln [P_t(\boldsymbol{x})/P^{\rm pcan}_t(\boldsymbol{x})]$ by using the normalization of the probability $\int d\boldsymbol{x} P_t(\boldsymbol{x}) =\int d\boldsymbol{x} P^{\rm pcan}_t(\boldsymbol{x}) =1$. We take care that the functional derivative for $ \int d\boldsymbol{x} P_t(\boldsymbol{x})\ln [P_t(\boldsymbol{x})/P^{\rm pcan}_t(\boldsymbol{x})] $ is different from the functional derivative for Eq.~(\ref{KLforfd}), and the functional derivative $\partial_{P_t(\boldsymbol{x})} D_{\rm KL} (P_t \| P^{\rm pcan}_t) =  \ln[ P_t(\boldsymbol{x})/P^{\rm pcan}_t(\boldsymbol{x})]$ is defined for Eq.~(\ref{KLforfd}).
\end{remark}
\begin{remark}
If a force is given by a potential force $\boldsymbol{F}_t (\boldsymbol{x}) =- \nabla U(\boldsymbol{x})$ with a time-independent potential energy $U(\boldsymbol{x})\in \mathbb{R}$, a pseudo energy becomes a potential energy $U^*_t(\boldsymbol{x})=U(\boldsymbol{x})$ and a pseudo canonical distribution $P^{\rm pcan}_t(\boldsymbol{x})$ becomes an equilibrium distribution $P^{\rm pcan}_t(\boldsymbol{x}) = P^{\rm eq}(\boldsymbol{x})$, which satisfies the condition that $P_t(\boldsymbol{x}) \to P^{\rm eq}(\boldsymbol{x})$ in the limit $t \to \infty$. In this case, the gradient flow expression Eq.~(\ref{gradientflow}) is given by $\partial_t P_t (\boldsymbol{x}) =  \mathsf{D} [ \partial_{P_t(\boldsymbol{x})} D_{\rm KL} (P_t \| P^{ \rm eq})]$
which describes a relaxation to an equilibrium distribution $P^{\rm eq}(\boldsymbol{x})$.
\end{remark}
\begin{remark}
If a pseudo canonical distribution is given by a time-independent distribution $P^{\rm pcan}_t (\boldsymbol{x}) =P^{\rm st} (\boldsymbol{x})$, the Fokker--Planck equation can be rewritten as the heat equation,
\begin{eqnarray}
\partial_{t} [ \partial_{P_t(\boldsymbol{x})} D_{\rm KL} (P_t \| P^{\rm st}) ] = \mu T  {\rm div}_{P_t} \left(\nabla  [ \partial_{P_t(\boldsymbol{x})} D_{\rm KL} (P_t \| P^{\rm st}) ] \right),
\end{eqnarray}
or equivalently,
\begin{eqnarray}
\partial_{t} \ln \frac{P_t(\boldsymbol{x})}{ P^{\rm st} (\boldsymbol{x})}= \mu T {\rm div}_{P_t} \left( \nabla \ln \frac{P_t(\boldsymbol{x})}{ P^{\rm st} (\boldsymbol{x})} \right),
\end{eqnarray}
where ${\rm div}_{P_t}  $ is the operator defined as
\begin{eqnarray}
{\rm div}_{\sqrt{\mid\!| g |\!\mid }} (\cdots)= (\sqrt{\mid\!|\!g\!|\!\mid})^{-1} \nabla \cdot [ \sqrt{\mid\!|\!g\!|\!\mid} (\cdots)],
\end{eqnarray}
with $\sqrt{\mid\!|\!g\!|\!\mid}= P_t (\boldsymbol{x})$.
Because the operator ${\rm div}_{\sqrt{\mid\!| g |\!\mid }}$ is a generalization of the divergence operator for non-Euclidean space with the absolute value of the determinant of the metric tensor $\mid\!|\!g\!|\!\mid$, the Fokker--Planck equation may be regarded as a kind of the diffusion equation for $\partial_{P_t(\boldsymbol{x})} D_{\rm KL} (P_t \| P^{\rm st})=  \ln P_t(\boldsymbol{x}) - \ln P^{\rm st} (\boldsymbol{x})$. As a consequence of the diffusion process, we may obtain $\partial_{P_t(\boldsymbol{x})} D_{\rm KL} (P_t \| P^{\rm st}) = \ln P_t(\boldsymbol{x}) - \ln P^{\rm st} (\boldsymbol{x}) \to 0$ in the limit $t \to \infty$, which implies the relaxation to a steady state $P_t(\boldsymbol{x}) \to P^{\rm st}(\boldsymbol{x})$.
\end{remark}
Based on the gradient flow expression (\ref{gradientflow}), we obtain the following expression of the excess entropy production rate discussed in Ref.~\cite{dechant2022geometric}. 
\begin{theorem}
We assume that $P_{\tau} (\boldsymbol{x})$ decays sufficiently rapidly at infinity. The excess entropy production rate is given by
\begin{eqnarray}
\sigma^{\rm ex}_{\tau} = - \left. \partial_{\tau} D_{\rm KL} (P_{\tau} \|P^{\rm pcan}_s ) \right \rvert_{s={\tau}}.
\end{eqnarray}
\end{theorem}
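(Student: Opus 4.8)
The plan is to differentiate the Kullback--Leibler divergence directly with respect to its moving argument and then identify the result through the gradient-flow expression~(\ref{gradientflow}) and Theorem~\ref{excesshousekeeping}. First I would fix the meaning of the notation: in $\left.\partial_{\tau} D_{\rm KL}(P_{\tau}\|P^{\rm pcan}_s)\right\rvert_{s=\tau}$ the derivative $\partial_{\tau}$ acts \emph{only} on the first argument $P_{\tau}$, the pseudo canonical distribution $P^{\rm pcan}_s$ being held fixed at the label $s$, and the substitution $s=\tau$ is made afterwards. Using the form of the divergence in Eq.~(\ref{KLforfd}), whose functional derivative with respect to $P_{\tau}(\boldsymbol{x})$ is $\ln[P_{\tau}(\boldsymbol{x})/P^{\rm pcan}_s(\boldsymbol{x})]$, together with the conservation $\partial_{\tau}\int d\boldsymbol{x}\,P_{\tau}(\boldsymbol{x})=0$ (which cancels the $-P_{\tau}+P^{\rm pcan}_s$ terms in~(\ref{KLforfd})), I would obtain
\begin{eqnarray}
\left.\partial_{\tau} D_{\rm KL}(P_{\tau}\|P^{\rm pcan}_s)\right\rvert_{s=\tau} &=& \int d\boldsymbol{x}\,\bigl(\partial_{\tau}P_{\tau}(\boldsymbol{x})\bigr)\,\ln\frac{P_{\tau}(\boldsymbol{x})}{P^{\rm pcan}_{\tau}(\boldsymbol{x})}. \nonumber
\end{eqnarray}

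Next I would substitute the Fokker--Planck equation written through the optimal mean local velocity, $\partial_{\tau}P_{\tau}(\boldsymbol{x}) = -\nabla\cdot(\boldsymbol{\nu}^*_{\tau}(\boldsymbol{x})P_{\tau}(\boldsymbol{x}))$, which is legitimate by Eq.~(\ref{conditiondecomp}), integrate by parts, and discard the boundary term $\int d\boldsymbol{x}\,\nabla\cdot(\boldsymbol{\nu}^*_{\tau}P_{\tau}\ln[P_{\tau}/P^{\rm pcan}_{\tau}])$ using the hypothesis that $P_{\tau}$ decays sufficiently rapidly at infinity. Then I would invoke the identity already established in the proof of the gradient-flow proposition, $\nabla\ln[P_{\tau}(\boldsymbol{x})/P^{\rm pcan}_{\tau}(\boldsymbol{x})] = -\boldsymbol{\nu}^*_{\tau}(\boldsymbol{x})/(\mu T)$ (equivalently $\boldsymbol{\nu}^*_{\tau}=\nabla\phi_{\tau}$ with $\phi_{\tau}=-\mu U^*_{\tau}-\mu T\ln P_{\tau}$), which collapses the integral to $-\frac{1}{\mu T}\int d\boldsymbol{x}\,\|\boldsymbol{\nu}^*_{\tau}(\boldsymbol{x})\|^2 P_{\tau}(\boldsymbol{x})$. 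By Theorem~\ref{excesshousekeeping} this is precisely $-\sigma^{\rm ex}_{\tau}$, and moving the minus sign across yields the asserted formula.

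The computational steps — the functional derivative and the integration by parts — are routine. The only points that need care are the bookkeeping of the two time labels (the derivative must hit $P_{\tau}$ but not $P^{\rm pcan}_s$ until after the evaluation $s=\tau$) and the choice of the form~(\ref{KLforfd}) of the divergence, for which the functional derivative is exactly $\ln[P_{\tau}/P^{\rm pcan}_s]$; justifying the vanishing of the boundary term under the decay assumption is the remaining technical input. No ingredient beyond the gradient-flow proposition and Theorem~\ref{excesshousekeeping} is required.
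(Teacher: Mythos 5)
Your proof is correct and is essentially the paper's own proof read in reverse: the paper starts from $\sigma^{\rm ex}_{\tau}=\frac{1}{\mu T}\int d\boldsymbol{x}\,\|\nabla\phi_{\tau}\|^2 P_{\tau}$, substitutes $\nabla(\partial_{P_{\tau}}D_{\rm KL})=-\nabla\phi_{\tau}/(\mu T)$, integrates by parts, uses the Fokker--Planck equation, and recognizes the chain rule, whereas you start from $-\partial_{\tau}D_{\rm KL}|_{s=\tau}$ and run exactly those steps in the opposite order. The only (helpful) difference is that you spell out the bookkeeping of the frozen label $s$ and the normalization cancellation explicitly, which the paper leaves implicit.
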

\begin{proof}
The excess entropy production rate is calculated as 
\begin{eqnarray}
\sigma^{\rm ex}_{\tau} &=& \frac{1}{\mu T}\int d \boldsymbol{x} \| \nabla \phi_{\tau} (\boldsymbol{x}) \|^2 P_{\tau} (\boldsymbol{x}) \nonumber \\
&=&- \int d \boldsymbol{x} \nabla ( \partial_{P_{\tau}(\boldsymbol{x})} D_{\rm KL} (P_{\tau} \| P^{\rm pcan}_{\tau}) )  \cdot (\nabla \phi_{\tau} (\boldsymbol{x}) P_{\tau} (\boldsymbol{x}) ) \nonumber \\
&=& \int d \boldsymbol{x} ( \partial_{P_{\tau} (\boldsymbol{x})} D_{\rm KL} (P_{\tau} \| P^{\rm pcan}_{\tau}) )    \nabla \cdot ( \nabla \phi_{\tau} (\boldsymbol{x})   P_{\tau} (\boldsymbol{x})) \nonumber \\
&=&  - \int d \boldsymbol{x}  (\partial_{\tau} P_{\tau} (\boldsymbol{x}) )( \partial_{P_{\tau} (\boldsymbol{x})} D_{\rm KL} (P_{\tau} \| P^{\rm pcan}_{\tau}) ) \nonumber \\
&=&  - \left. \partial_{\tau} D_{\rm KL} (P_{\tau} \|P^{\rm pcan}_s ) \right \rvert_{s={\tau}},
\end{eqnarray}
where we used $\int d \boldsymbol{x} \nabla \cdot ( \partial_{P_{\tau}(\boldsymbol{x})} D_{\rm KL} (P_{\tau} \| P^{\rm pcan}_{\tau})  (\nabla \phi_{\tau} (\boldsymbol{x}))  P_{\tau} (\boldsymbol{x}) )=0$ because of the assumption that $P_{\tau} (\boldsymbol{x})$ decays sufficiently rapidly at infinity.
\end{proof}
\begin{remark}
If the pseudo distribution does not depend on time $P^{\rm pcan}_{\tau}(\boldsymbol{x}) = P^{\rm st}(\boldsymbol{x})$, the non-negativity of the excess entropy production rate is related to the monotonicity of the Kullback--Leibler divergence $ \partial_{\tau} D_{\rm KL} (P_{\tau} \|P^{\rm st}) \leq 0$, where $P^{\rm st}(\boldsymbol{x})$ is a steady-state distribution that satisfies $\partial_{\tau} P^{\rm st}(\boldsymbol{x}) = 0$. Because $\sigma^{\rm ex}_{\tau}=0$ if and only if the system is in a steady-state $P_{\tau}(\boldsymbol{x}) = P^{\rm st}(\boldsymbol{x})$, the excess entropy production rate can be given by a Lyapunov function $D_{\rm KL} (P_{\tau} \|P^{\rm st})$ and this monotonicity $\partial_{\tau} D_{\rm KL} (P_{\tau} \|P^{\rm st}) \leq 0$ gives the relaxation to a steady state distribution $P_{\tau}(\boldsymbol{x}) \to P^{\rm st}(\boldsymbol{x})$ in the limit $\tau \to \infty$. 
\end{remark}
\begin{remark}
    This expression of the excess entropy production rate in terms of the Kullback--Leibler divergence provides a link between optimal transport theory and information geometry. As discussed in Ref.~\cite{ohga2021information}, the excess entropy production can be expressed using the dual coordinate systems for the Kullback--Leibler divergence. By using the dual coordinate systems with an affine transformation, the Kullback--Leibler divergence between $P$ and $Q$ is given by $D_{\rm KL}(P\|Q) = \varphi ({\eta}_{P} (\boldsymbol{x}) ) + \psi ({\theta}_Q (\boldsymbol{x})) - \int d \boldsymbol{x} {\eta}_P (\boldsymbol{x}) \theta_Q (\boldsymbol{x})$, where $\eta_P (\boldsymbol{x}) =P(\boldsymbol{x})- P^{\rm st}(\boldsymbol{x})$ and  $\theta_Q (\boldsymbol{x})=  \ln Q(\boldsymbol{x}) - \ln P^{\rm st}(\boldsymbol{x})$ are the eta and theta coordinate systems that satisfy $\eta_{P^{\rm st}} (\boldsymbol{x})= \theta_{P^{\rm st}} (\boldsymbol{x})=0$, and $\varphi ({\eta}_{P} (\boldsymbol{x}) )  =D_{\rm KL}(P\|P^{\rm st})$ and $\psi (\theta_Q (\boldsymbol{x})) =D_{\rm KL}(P^{\rm st}\| Q)$ are the dual convex functions, respectively. Thus, if the pseudo distribution does not depend on time $P^{\rm pcan}_{\tau}(\boldsymbol{x}) = P^{\rm st}(\boldsymbol{x})$, the excess entropy production is given by $\sigma^{\rm ex}_{\tau} = - \partial_{\tau}  \varphi ({\eta}_{P_{\tau}} (\boldsymbol{x}) )$. 
\end{remark}
\begin{remark}
The relaxation to an equilibrium distribution $P^{\rm eq}(\boldsymbol{x})$  for a time-independent potential force $\boldsymbol{F}_{\tau}(\boldsymbol{x}) = - \nabla U(\boldsymbol{x})$ was discussed from the viewpoint of information geometry based on the expression of the entropy production rate $\sigma^{\rm ex}_{\tau} = \sigma_{\tau} =  - \partial_{\tau} D_{\rm KL} (P_{\tau} \|P^{\rm eq})$ in Refs.~\cite{nakamura2019reconsideration, shiraishi2019information, kolchinsky2021work}.
\end{remark}
Near steady state, the Fisher metric for a probability density is also related to the entropy production rate. A thermodynamic interpretation of the Fisher metric was discussed in Ref.~\cite{crooks2007measuring} as a generalization of the Weinhold geometry~\cite{weinhold1975metric} or the Ruppeiner geometry~\cite{ruppeiner1979thermodynamics,ruppeiner1995riemannian} in a stochastic system near equilibrium. We also examined this Fisher metric for a far-from-equilibrium system in Refs.~\cite{ito2018stochastic, ito2020stochastic,yoshimura2021thermodynamic}. To discuss a thermodynamic interpretation of the Fisher metric, we start with the definition of the Fisher information of time for a probability density.
\begin{definition}
Let $P_{t} (\boldsymbol{x})$ be a probability density of $\boldsymbol{x} \in \mathbb{R}^d$.
{\it The Fisher information of time} is defined as
\begin{eqnarray}
\frac{ds^2}{dt^2} = \int d \boldsymbol{x} P_{t} (\boldsymbol{x}) ( \partial_{t} \ln  P_{t}(\boldsymbol{x}))^2 = 4\int d \boldsymbol{x}  ( \partial_{t} \sqrt{ P_{t}(\boldsymbol{x})} )^2 .
\end{eqnarray}
The positive square root $v_{\rm info}(t) = \sqrt{ ds^2/ dt^2}$ is called {\it the intrinsic speed}.
\end{definition}
\begin{remark}
The Fisher information of time is given by the Taylor expansion of the Kullback-Leibler divergence as follows.
\begin{eqnarray}
2 \frac{D_{\rm KL}(P_{t} \| P_{t + dt})}{dt^2} = \frac{ds^2}{dt^2} + O(dt).
\end{eqnarray}
If we consider a time-dependent parameter $\theta \in \mathbb{R}$, the Fisher information of time is given by
\begin{eqnarray}
\frac{ds^2}{dt^2} = g_{\theta \theta}(P_{t})  \left(\frac{d \theta}{dt} \right)^2, 
\end{eqnarray}
where $g_{\theta \theta}(P_{t})$ is the Fisher metric defined as $g_{\theta \theta} (P_{t})= \int d \boldsymbol{x} P_{t} (\boldsymbol{x}) ( \partial_{\theta} \ln  P_{t}(\boldsymbol{x}))^2$. Thus, the intrinsic speed $v_{\rm info}(t)= \sqrt{ ds^2/ dt^2}$ means the speed on the manifold of the probability simplex, where the metric is given by the Fisher metric. This differential geometry is well discussed in information geometry~\cite{amari2000methods}.
\end{remark}
\begin{remark}
The relaxation to a steady state can be discussed in terms of the monotonicity of the intrinsic speed (or the monotonicity of the Fisher information of time) $\partial_t v_{\rm info}(t) \leq 0$ (or $\partial_t (v_{\rm info}(t) )^2 \leq 0$), which is valid if a force $\boldsymbol{F}_t(\boldsymbol{x})$ is time-independent. When a force  $\boldsymbol{F}_t(\boldsymbol{x})$ depends on time, the upper bound on $\partial_t (v_{\rm info} (t))^2$ cannot be zero. The upper bound on $\partial_t (v_{\rm info} (t))^2$ for the general case $\partial_t \boldsymbol{F}_t(\boldsymbol{x}) \neq \boldsymbol{0}$ was discussed in Ref.~\cite{ito2020stochastic}.
\end{remark}
If a pseudo canonical distribution is given by a time-independent steady-state distribution $P^{\rm pcan}_{\tau} = P^{\rm st}$, the Fisher information of time is related to the excess entropy production rate. This was discussed in Ref.~\cite{yoshimura2021thermodynamic} for the entropy production rate with a general rate equation on chemical reaction networks. For the Fokker--Planck equation, we newly propose the following relation between the Fisher information of time and the excess entropy production rate as a correspondence of the result in Ref.~\cite{yoshimura2021thermodynamic}.
\begin{proposition}
We assume that a pseudo canonical distribution does not depend on time  $P^{\rm pcan}_{\tau} (\boldsymbol{x}) = P^{\rm st} (\boldsymbol{x})$. We assume that $P_{\tau} (\boldsymbol{x})$ decays sufficiently rapidly at infinity. Let $\eta_{P_t} (\boldsymbol{x}) = P_t(\boldsymbol{x})- P^{\rm st} (\boldsymbol{x})$ be a difference from a steady-state distribution. The Fisher information of time is given by
\begin{eqnarray}
(v_{\rm info}(t))^2 = - \frac{1}{2}\partial_t \sigma^{\rm ex}_{t} +O(\eta_{P_t}^3),
\label{fisherentropy}
\end{eqnarray}
where $O(\eta_{P_t}^3)$ stands for $O(\eta_{P_t}^3)/(\eta_{P_t} (\boldsymbol{x}) )^2 \to 0$ in the limit $P_t \to P^{\rm st}$.
\end{proposition}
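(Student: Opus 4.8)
The plan is to reduce everything to a computation near the steady state using the identity $\sigma^{\rm ex}_t = -\partial_t D_{\rm KL}(P_t\|P^{\rm st})$, which is the previous theorem specialized to the time-independent pseudo-canonical distribution $P^{\rm pcan}_t = P^{\rm st}$. First I would turn $\sigma^{\rm ex}_t$ into a Dirichlet-type quadratic form: differentiating $D_{\rm KL}(P_t\|P^{\rm st}) = \int d\boldsymbol{x}\,P_t\ln(P_t/P^{\rm st})$ and using $\partial_t\int d\boldsymbol{x}\,P_t = 0$ gives $\partial_t D_{\rm KL} = \int d\boldsymbol{x}\,(\partial_t P_t)\ln(P_t/P^{\rm st})$; substituting the gradient-flow form $\partial_t P_t = \mathsf{D}[\ln(P_t/P^{\rm st})] = \nabla\cdot(\mu T P_t\nabla\ln(P_t/P^{\rm st}))$ (valid here because $P^{\rm pcan}_t = P^{\rm st}$) and integrating by parts, with boundary terms killed by the rapid-decay hypothesis, yields $\sigma^{\rm ex}_t = \mu T\int d\boldsymbol{x}\,P_t\|\nabla\ln(P_t/P^{\rm st})\|^2$.

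Next I would linearize around $P^{\rm st}$. Writing $\eta_t := \eta_{P_t} = P_t - P^{\rm st}$ and using $\nabla\ln(P_t/P^{\rm st}) = \nabla(\eta_t/P^{\rm st}) + O(\eta_t^2)$ together with $P_t = P^{\rm st} + O(\eta_t)$ gives $\sigma^{\rm ex}_t = \mu T\int d\boldsymbol{x}\,P^{\rm st}\|\nabla(\eta_t/P^{\rm st})\|^2 + O(\eta_t^3)$. I introduce the weighted pairing $(f,g)_{\rm st} := \int d\boldsymbol{x}\, f(\boldsymbol{x})g(\boldsymbol{x})/P^{\rm st}(\boldsymbol{x})$ and the time-independent linearized generator $\mathcal{L}f := \nabla\cdot(\mu T P^{\rm st}\nabla(f/P^{\rm st}))$, so that the linearized dynamics reads $\partial_t\eta_t = \mathcal{L}\eta_t + O(\eta_t^2)$. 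One integration by parts — again using rapid decay of the perturbation — shows $(\mathcal{L}f,g)_{\rm st} = -\mu T\int d\boldsymbol{x}\,P^{\rm st}\nabla(f/P^{\rm st})\cdot\nabla(g/P^{\rm st})$, which is symmetric in $f,g$; hence $\mathcal{L}$ is self-adjoint for $(\cdot,\cdot)_{\rm st}$, and in particular $\sigma^{\rm ex}_t = -(\mathcal{L}\eta_t,\eta_t)_{\rm st} + O(\eta_t^3)$.

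Now I would differentiate. Since $\mathcal{L}$ is time-independent, $\partial_t(\mathcal{L}\eta_t,\eta_t)_{\rm st} = (\mathcal{L}\partial_t\eta_t,\eta_t)_{\rm st} + (\mathcal{L}\eta_t,\partial_t\eta_t)_{\rm st} = 2(\mathcal{L}\eta_t,\partial_t\eta_t)_{\rm st}$ by self-adjointness, and inserting $\partial_t\eta_t = \mathcal{L}\eta_t + O(\eta_t^2)$ gives $\partial_t\sigma^{\rm ex}_t = -2(\mathcal{L}\eta_t,\mathcal{L}\eta_t)_{\rm st} + O(\eta_t^3)$ (the derivative of the $O(\eta_t^3)$ remainder is still $O(\eta_t^3)$ because $\partial_t\eta_t = O(\eta_t)$). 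On the other hand, $(v_{\rm info}(t))^2 = \int d\boldsymbol{x}\,(\partial_t P_t)^2/P_t = \int d\boldsymbol{x}\,(\partial_t\eta_t)^2/(P^{\rm st}+\eta_t) = (\partial_t\eta_t,\partial_t\eta_t)_{\rm st} + O(\eta_t^3) = (\mathcal{L}\eta_t,\mathcal{L}\eta_t)_{\rm st} + O(\eta_t^3)$. Comparing the two displays gives $(v_{\rm info}(t))^2 = -\tfrac12\partial_t\sigma^{\rm ex}_t + O(\eta_{P_t}^3)$, which is the claim.

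The main obstacle I expect is not any single computation but the order bookkeeping: one must check that the linearized flow really makes $\partial_t\eta_t$ (hence $\partial_t^2\eta_t$) first order in $\eta_t$ with genuinely quadratic corrections, that every term discarded in the expansions of $\sigma^{\rm ex}_t$ and $(v_{\rm info}(t))^2$ is $O(\eta_{P_t}^3)$ in the sense of the proposition, and that all integrations by parts are legitimate under the rapid-decay assumption on $P_\tau$. The conceptual key — and what pins the constant to $1/2$ — is the self-adjointness of $\mathcal{L}$ in the weighted inner product $(\cdot,\cdot)_{\rm st}$: it is precisely the quadratic form $(\mathcal{L}\eta_t,\eta_t)_{\rm st}$ having time derivative $2(\mathcal{L}\eta_t,\mathcal{L}\eta_t)_{\rm st}$, against $(v_{\rm info}(t))^2 = (\mathcal{L}\eta_t,\mathcal{L}\eta_t)_{\rm st}$, that produces the factor.
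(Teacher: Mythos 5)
Your proof is correct, and the underlying computation — linearize around $P^{\rm st}$, use the gradient-flow form of $\partial_t P_t$ to turn everything into a Dirichlet-type quadratic form, integrate by parts under the rapid-decay hypothesis, and exploit the symmetry of that form under time differentiation to pin the constant to $1/2$ — is the same as the paper's. What differs is the packaging. The paper works directly with the dual coordinates $\eta_{P_t}(\boldsymbol{x}) = P_t - P^{\rm st}$ and $\theta_{P_t}(\boldsymbol{x}) = \ln(P_t/P^{\rm st})$, writes $(v_{\rm info})^2 = \int d\boldsymbol{x}\, [\partial_t \eta_{P_t}][\partial_t \theta_{P_t}]$ and $\sigma^{\rm ex}_t = -\int d\boldsymbol{x}\,[\partial_t \eta_{P_t}]\theta_{P_t}$, integrates both by parts to produce $\mu T P^{\rm st}$-weighted integrals of $\nabla\theta_{P_t}$, and obtains the factor of $1/2$ from the elementary identity $\partial_t\|\nabla\theta_{P_t}\|^2 = 2\,\nabla\theta_{P_t}\cdot\partial_t\nabla\theta_{P_t}$. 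You instead abstract the linearized Fokker--Planck operator into $\mathcal{L}f = \nabla\cdot(\mu T P^{\rm st}\nabla(f/P^{\rm st}))$, verify its self-adjointness in the weighted pairing $(\cdot,\cdot)_{\rm st}$, and derive $\partial_t(\mathcal{L}\eta_t,\eta_t)_{\rm st} = 2(\mathcal{L}\eta_t,\mathcal{L}\eta_t)_{\rm st}$ as a structural consequence. The two are equivalent (to leading order $\nabla\theta_{P_t} = \nabla(\eta_t/P^{\rm st}) + O(\eta_t^2)$, so the quadratic forms agree up to $O(\eta_t^3)$), but your framing makes the origin of the $1/2$ cleaner and modular, at the cost of an extra linearization step ($\theta_{P_t} \leadsto \eta_t/P^{\rm st}$) that the paper avoids by keeping $\theta_{P_t}$ as a primitive; the paper's choice also keeps the $\eta$--$\theta$ Legendre duality in view, consistent with the information-geometric thread running through the rest of the manuscript. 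One small bookkeeping point you gloss over — as does the paper — is that ``$\partial_t O(\eta_{P_t}^3) = O(\eta_{P_t}^3)$'' requires the remainder to be a smooth functional of $\eta_t$ whose time derivative gains no inverse powers of $\eta_t$, which holds here because $\partial_t\eta_t = O(\eta_t)$ under the linearized flow; it is worth stating this explicitly since the remainders involve spatial derivatives of $\eta_t$ as well.
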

\begin{proof}
Let $\theta_{P_t} (\boldsymbol{x}) = \ln P_t(\boldsymbol{x})- \ln P^{\rm st}(\boldsymbol{x})$ be the theta coordinate that satisfies $\theta_{P^{\rm st}}(\boldsymbol{x})=0$.
The Fisher information of time is given by
\begin{eqnarray}
(v_{\rm info}(t))^2 &=& \int d \boldsymbol{x} [\partial_t \eta_{P_t} (\boldsymbol{x})][ \partial_t \theta_{P_t} (\boldsymbol{x})] \nonumber\\
&=& -\int d \boldsymbol{x} \nabla \cdot ( - \mu T ( \nabla \theta_{P_t} (\boldsymbol{x}) ) P_t (\boldsymbol{x})) [ \partial_t \theta_{P_t} (\boldsymbol{x})] \nonumber \\
&=& - \int d \boldsymbol{x}  ( \mu T  P^{\rm st} (\boldsymbol{x})) [ (\nabla \theta_{P_t} (\boldsymbol{x})) \cdot \partial_t  \left( \nabla \theta_{P_t} (\boldsymbol{x}) \right) ] + O(\eta_{P_t}^3),
\end{eqnarray}
where we used $\int d \boldsymbol{x} \nabla \cdot ( (\nabla \theta_{P_t} (\boldsymbol{x}) ) P_t (\boldsymbol{x}) [ \partial_t \theta_{P_t} (\boldsymbol{x})]) =0$ because of the assumption that $P_{\tau} (\boldsymbol{x})$ decays sufficiently rapidly at infinity.
The excess entropy production rate is given by
\begin{eqnarray} \label{entropyproductionthetaeta}
\sigma^{\rm ex}_{t} &=& - \int d \boldsymbol{x} [\partial_t \eta_{P_t} (\boldsymbol{x}) ]\theta_{P_t} (\boldsymbol{x}) \nonumber \\
&=&  \int d \boldsymbol{x} \nabla \cdot ( - \mu T \nabla (\theta_{P_t} (\boldsymbol{x}) ) P_t (\boldsymbol{x})) \theta_{P_t} (\boldsymbol{x}) \nonumber \\
&=& \int d \boldsymbol{x}  ( \mu T  P^{\rm st} (\boldsymbol{x}) )\| \nabla  \theta_{P_t} (\boldsymbol{x}) \|^2 + O(\eta_{P_t}^3),
\end{eqnarray}
where we used $\int d \boldsymbol{x} \nabla \cdot ((\nabla \theta_{P_t} (\boldsymbol{x})) P_t (\boldsymbol{x}) \theta_{P_t} (\boldsymbol{x}) )=0$ because of the assumption that $P_{\tau} (\boldsymbol{x})$ decays sufficiently rapidly at infinity.
Thus, we obtain 
\begin{eqnarray}
(v_{\rm info}(t))^2 &=& - \frac{1}{2} \int d \boldsymbol{x}   ( \mu T  P^{\rm st} (\boldsymbol{x})) \partial_t \| \nabla  \theta_{P_t} (\boldsymbol{x}) \|^2 + O(\eta_{P_t}^3) \nonumber \\
&=&- \frac{1}{2} \partial_t \sigma^{\rm ex}_t + O(\eta_{P_t}^3).
\end{eqnarray}
\end{proof}
\begin{remark}
Because the excess entropy production rate is defined in terms of the $L^2$-Wasserstein distance $\sigma_{t}^{\rm ex} = [\mathcal{W}_2 (P_{\tau}, P_{\tau+dt})]^2/( \mu T dt^2)$ up to $O(dt)$, Eq.~(\ref{fisherentropy}) implies a relation between the $L^2$-Wasserstein distance and the Fisher information of time near steady state $(v_{\rm info}(\tau))^2 = - \partial_{\tau} [\mathcal{W}_2 (P_{\tau}, P_{\tau+dt})]^2/( 2\mu T dt^2)$ up to $O(\eta_{P_t}^3)$ and $O(dt)$.
\end{remark}
\begin{remark}
We discussed a relation between the Fisher information and the excess entropy production rate proposed by Glansdorff and Prigogine near steady state in Ref.~\cite{ito2022information} from the viewpoint of the Glansdorff--Prigogine criterion for stability~\cite{glansdorff1974thermodynamic, schnakenberg1976network, maes2015revisiting, qian2002entropy}. We remark that the definition of the excess entropy production rate by Glansdorff and Prigogine is slightly different from the definition based on $L^2$-Wasserstein distance in this paper.
\end{remark}
\begin{remark}
As discussed in Ref.~\cite{ohga2021information}, the expression of $\sigma^{\rm ex}_{t} = - \int d \boldsymbol{x} [\partial_t \eta_{P_t} (\boldsymbol{x}) ]\theta_{P_t} (\boldsymbol{x})$ in Eq.~(\ref{entropyproductionthetaeta}) implies that the time derivative of the eta coordinate system  $\partial_t \eta_{P_t}(\boldsymbol{x})$ corresponds to the thermodynamic flow and the theta coordinate $-\theta_{P_t}(\boldsymbol{x})$ corresponds to the conjugated thermodynamic force, respectively. The expression of the Fisher information of time in terms of the thermodynamic flow and the conjugated thermodynamic force $(v_{\rm info}(t))^2 = \int d \boldsymbol{x} [\partial_t \eta_{P_t} (\boldsymbol{x})][ \partial_t \theta_{P_t} (\boldsymbol{x})]$ has been substantially obtained in Ref.~\cite{ito2018stochastic}.
The gradient of the thermodynamic force $\nabla  \theta_{P_t} (\boldsymbol{x})$ is also regarded as the thermodynamic force because the gradient is given by the linear combination of the thermodynamic force at position $\boldsymbol{x}+ \Delta \boldsymbol{x}$ and position $\boldsymbol{x}$ for the infinitesimal distance $\Delta \boldsymbol{x}$.
The quantity $\mu T P^{\rm st} (\boldsymbol{x})$ is also regarded as the Onsager coefficient near equilibrium because Eq.~(\ref{entropyproductionthetaeta}) is the quadratic function of the thermodynamic force $\nabla  \theta_{P_t} (\boldsymbol{x})$ with proportionality coefficient $\mu T  P^{\rm st} (\boldsymbol{x})$. The gradient flow expression of the Fokker--Planck equation Eq.~(\ref{gradientflow}) is given by the weighted Laplacian operator Eq.~(\ref{weightedlaplacian}) where this weight is regarded as the Onsager coefficient near equilibrium. Based on the quadratic expression in Eq.~(\ref{entropyproductionthetaeta}), we also can consider a geometry where the weight of the Onsager coefficient is a metric. We used the weight of the generalized Onsager coefficient in Ref.~\cite{yoshimura2021thermodynamic} to define the excess entropy production rate for general Markov processes based on optimal transport theory, and discussed a geometric interpretation of the excess entropy production rate.
\end{remark}

By using the Fisher information of time, the information-geometric speed limit discussed in Refs.~\cite{crooks2007measuring, ito2018stochastic,ito2020stochastic, yoshimura2021thermodynamic} can be obtained in parallel with the derivation of the thermodynamic speed limit Eq.~(\ref{speedlimittighter}). The information-geometric speed limit provides a lower bound on the quantity $\int_{\tau}^{\tau+\Delta \tau} dt (v_{\rm info}(t))^2$. The quantity $\int_{\tau}^{\tau+\Delta \tau} dt (v_{\rm info}(t))^2$ can be regarded as the thermodynamic cost because this quantity is related to the change of the excess entropy production rate $\int_{\tau}^{\tau+\Delta \tau} dt (v_{\rm info}(t))^2 = (\sigma^{\rm ex}_{\tau}- \sigma^{\rm ex}_{\tau+\Delta \tau}) /2 +  O(\eta_{P_t}^3)$ near steady state by using Eq.~(\ref{fisherentropy}). 
\begin{theorem}
The quantity $\int_{\tau}^{\tau+\Delta \tau} dt (v_{\rm info}(t))^2$ is bounded by
\begin{eqnarray}
\int_{\tau}^{\tau+\Delta \tau} dt (v_{\rm info}(t))^2 \geq \frac{ \left( \int_{\tau}^{\tau+\Delta t} dt v_{\rm info}(t) \right)^2}{\Delta \tau} \geq \frac{[\mathcal{D}(P_{\tau}, P_{\tau+ \Delta \tau})]^2}{\Delta \tau},
\end{eqnarray} 
where $\mathcal{D}(P_{\tau}, P_{\tau+ \Delta \tau})$ is the twice of the Bhattacharyya angle $\zeta^{\rm B}$  defined as
\begin{eqnarray}
\mathcal{D}(P_{\tau}, P_{\tau+ \Delta \tau}) = 2 \arccos \left( \int d \boldsymbol{x} \sqrt{ P_{\tau} (\boldsymbol{x}) P_{\tau+ \Delta \tau} ( \boldsymbol{x}) } \right) = 2 \zeta^{\rm B}.
\end{eqnarray} 
\end{theorem}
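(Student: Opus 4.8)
The plan is to run the same two-step argument used for the thermodynamic speed limit Eq.~(\ref{speedlimittighter}), but with the $L^2$-Wasserstein geometry replaced by the Fisher--Rao geometry on the probability simplex. For the left inequality, since $v_{\rm info}(t)\ge 0$, the Cauchy--Schwartz inequality gives
\[
\left(\int_{\tau}^{\tau+\Delta\tau} dt\, v_{\rm info}(t)\right)^2 \le \int_{\tau}^{\tau+\Delta\tau} dt\,(v_{\rm info}(t))^2 \cdot \int_{\tau}^{\tau+\Delta\tau} dt,
\]
and $\int_{\tau}^{\tau+\Delta\tau} dt = \Delta\tau$ yields the first bound. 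This is verbatim the Cauchy--Schwartz step used for the Wasserstein path length $\mathcal{L}$, now applied to the intrinsic speed.

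For the right inequality I would reinterpret $\int_{\tau}^{\tau+\Delta\tau} dt\, v_{\rm info}(t)$ as a curve length. Using the identity $v_{\rm info}(t) = 2\sqrt{\int d\boldsymbol{x}\,(\partial_t \sqrt{P_t(\boldsymbol{x})})^2}$ from the definition of the Fisher information of time, we have $v_{\rm info}(t) = 2\|\partial_t \psi_t\|_{L^2}$ with $\psi_t := \sqrt{P_t}$, so $\int_{\tau}^{\tau+\Delta\tau} dt\, v_{\rm info}(t)$ equals twice the $L^2$-length of the curve $t\mapsto\psi_t$. Because $\int d\boldsymbol{x}\, P_t(\boldsymbol{x}) = 1$ for all $t$, this curve lives on the unit sphere of $L^2(\mathbb{R}^d)$, and the length of any such curve joining $\psi_{\tau}$ and $\psi_{\tau+\Delta\tau}$ is at least the great-circle distance $\arccos\langle\psi_{\tau},\psi_{\tau+\Delta\tau}\rangle$. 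Recognizing $\langle\psi_{\tau},\psi_{\tau+\Delta\tau}\rangle = \int d\boldsymbol{x}\sqrt{P_{\tau}(\boldsymbol{x})P_{\tau+\Delta\tau}(\boldsymbol{x})}$ as the Bhattacharyya coefficient, the lower bound is $2\arccos\!\big(\int d\boldsymbol{x}\sqrt{P_{\tau}P_{\tau+\Delta\tau}}\big) = \mathcal{D}(P_{\tau},P_{\tau+\Delta\tau})$, and combining with the first step completes the proof.

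The only step needing care is the great-circle minimality, which I would make self-contained: setting $\Theta(t) = \arccos\langle\psi_{\tau},\psi_t\rangle$ and differentiating, the component of $\psi_{\tau}$ along $\psi_t$ drops out because $\langle\psi_t,\partial_t\psi_t\rangle = \tfrac12\partial_t\|\psi_t\|^2 = 0$, leaving $|\partial_t\Theta(t)|\le\|\partial_t\psi_t\|$; hence $\int_{\tau}^{\tau+\Delta\tau}\|\partial_t\psi_t\|\,dt \ge \int_{\tau}^{\tau+\Delta\tau}|\partial_t\Theta|\,dt \ge \Theta(\tau+\Delta\tau)-\Theta(\tau) = \arccos\langle\psi_{\tau},\psi_{\tau+\Delta\tau}\rangle$, since $\Theta(\tau)=0$. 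Everything else is Cauchy--Schwartz and bookkeeping; the mild regularity used (differentiability of $t\mapsto\sqrt{P_t}$ in $L^2$ with $\langle\psi_t,\partial_t\psi_t\rangle=0$) follows from normalization and the assumed smoothness of $P_t$. This is the information-geometric speed limit that runs in parallel with the thermodynamic speed limit of the previous section.
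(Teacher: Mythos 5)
Your proof is correct. The first step (Cauchy--Schwartz applied to $v_{\rm info}$ over $[\tau,\tau+\Delta\tau]$) coincides exactly with the paper's.

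For the second inequality your route is genuinely different. The paper sets up a constrained Euler--Lagrange problem: it minimizes $\int_\tau^{\tau+\Delta\tau}dt\,(v_{\rm info}(t))^2$ subject to the normalization constraint with fixed endpoints $P_\tau$, $P_{\tau+\Delta\tau}$, solves the Euler--Lagrange equation $(\partial_t)^2\sqrt{P_t}=-(\phi/4)\sqrt{P_t}$ to obtain the explicit spherical-linear interpolant
\[
\sqrt{P^*_t(\boldsymbol{x})}=\frac{\sqrt{P_{\tau}(\boldsymbol{x})}\sin\!\big[\zeta^{\rm B}(1-\tfrac{t-\tau}{\Delta\tau})\big]+\sqrt{P_{\tau+\Delta\tau}(\boldsymbol{x})}\sin\!\big[\zeta^{\rm B}\tfrac{t-\tau}{\Delta\tau}\big]}{\sin\zeta^{\rm B}},
\]
verifies its normalization, and substitutes back to evaluate the minimum as $[\mathcal{D}]^2/\Delta\tau$. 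You instead prove the length minimality of the great circle directly by the angle-derivative estimate: identifying $v_{\rm info}(t)=2\|\partial_t\psi_t\|_{L^2}$ with $\psi_t=\sqrt{P_t}$ on the unit sphere, setting $\Theta(t)=\arccos\langle\psi_\tau,\psi_t\rangle$, and using $\langle\psi_t,\partial_t\psi_t\rangle=0$ to kill the radial component and obtain $|\partial_t\Theta|\leq\|\partial_t\psi_t\|$, then integrating. The two arguments buy different things. Your approach directly establishes the middle-to-right link $\int v_{\rm info}\,dt\geq\mathcal{D}$ as a length lower bound, and it avoids the implicit assumption, built into the paper's variational step, that the Euler--Lagrange stationary point is in fact a global minimizer rather than merely a critical point. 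The paper's approach, on the other hand, produces the explicit form of the optimal interpolant $P^*_t$, which is useful beyond this inequality (e.g.\ for characterizing the information-geometric geodesic and the equality conditions). One small point worth making precise in your write-up is the behavior at isolated instants where $\sin\Theta(t)=0$: for probability amplitudes $\psi_t\geq 0$ this can only happen when $\psi_t=\psi_\tau$ (so $\Theta=0$), and the one-sided derivative bound still holds; this is a measure-zero set and does not affect the integrated estimate.
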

\begin{proof}
From the Cauchy--Schwartz inequality, we obtain
\begin{eqnarray}
\left( \int_{\tau}^{\tau+\Delta \tau} dt (v_{\rm info}(t))^2 \right) \left( \int_{\tau}^{\tau+\Delta \tau} dt \right) \geq \left( \int_{\tau}^{\tau+\Delta \tau} dt v_{\rm info}(t) \right)^2.
\end{eqnarray} 
Thus, the tighter lower bound is obtained as 
\begin{eqnarray}
\int_{\tau}^{\tau+\Delta \tau} dt (v_{\rm info}(t))^2  \geq \frac{ \left( \int_{\tau}^{\tau+\Delta \tau} dt v_{\rm info}(t) \right)^2} {\Delta \tau}.
\end{eqnarray} 
To solve the minimization of $\int_{\tau}^{\tau+\Delta t} dt (v_{\rm info}(t))^2$ under the constraint $\int d \boldsymbol{x}  P_{t} (\boldsymbol{x})=1$ with fixed $P_{\tau}$ and $P_{\tau+ \Delta \tau}$, we consider the Euler--Lagrange equation
\begin{eqnarray}
\partial_{\sqrt{P_t(\boldsymbol{x})}} \mathbb{L}' (\{\sqrt{ P_{t}}\}, \{\partial_t \sqrt{ P_{t}} \}, \phi)  = \partial_t \left[ \partial_{(\partial_{t} \sqrt{P_t (\boldsymbol{x})})} \mathbb{L}' (\{\sqrt{ P_{t}}\}, \{\partial_t \sqrt{ P_{t}} \}, \phi)  \right],
\end{eqnarray}
for $\tau< t < \tau+ \Delta \tau$ with the Lagrangian
\begin{eqnarray}
&&\mathbb{L}' (\{\sqrt{ P_{t}}\}, \{\partial_t \sqrt{ P_{t}} \}, \phi)  \nonumber \\
&=& \int_{\tau}^{\tau + \Delta \tau} dt  \left[  4 \int d\boldsymbol{x} ( \partial_{t} \sqrt{ P_{t}(\boldsymbol{x})} )^2 - \phi  \left[ \int d\boldsymbol{x} ( \sqrt{ P_{t}(\boldsymbol{x})} )^2 -1 \right] \right].
\end{eqnarray}
The Euler--Lagrange equation can be rewritten as
\begin{eqnarray}
(\partial_t)^2 \sqrt{ P_t(\boldsymbol{x})} = - \frac{ \phi}{4}\sqrt{ P_t(\boldsymbol{x})},
\end{eqnarray}
which solution is generally given by $\sqrt{ P_t(\boldsymbol{x})} =\alpha(\boldsymbol{x}) \cos ( \sqrt{\phi}/2 (t- \beta(\boldsymbol{x})))$ for $\alpha(\boldsymbol{x}) \in \mathbb{R}$ and $\beta(\boldsymbol{x}) \in \mathbb{R}$.
The constraint $\int d \boldsymbol{x}  P_{t} (\boldsymbol{x})=1$ with fixed $P_{\tau}$ and $P_{\tau+ \Delta \tau}$ for this solution provides the optimal solution that minimize $\int_{\tau}^{\tau+\Delta t} dt (v_{\rm info}(t))^2$ under the constraint,
\begin{eqnarray}
\sqrt{P^*_t(\boldsymbol{x})} &=& \frac{ \sqrt{P_{\tau}(\boldsymbol{x})} \sin \left[\zeta^{\rm B}\left(1- \frac{t-\tau}{\Delta \tau} \right) \right] +  \sqrt{P_{\tau+ \Delta \tau}(\boldsymbol{x})}  \sin \left[\zeta^{\rm B} \frac{t- \tau}{\Delta \tau} \right] }{\sin \left[ \zeta^{\rm B} \right]},
\end{eqnarray}
where the normalization of the probability is satisfied for  $\tau \leq t \leq \tau+ \Delta \tau$,
\begin{eqnarray}
&&\int d \boldsymbol{x} (\sqrt{P^*_t(\boldsymbol{x})})^2  \nonumber \\
&=& \frac{\sin^2 \left[\zeta^{\rm B} \left(1- \frac{t-\tau}{\Delta \tau} \right) \right] +\sin^2 \left[\zeta^{\rm B}\frac{t- \tau}{\Delta \tau} \right] +2 \cos \left[\zeta^{\rm B}  \right] \sin \left[\zeta^{\rm B} \left(1- \frac{t-\tau}{\Delta \tau} \right) \right]  \sin \left[\zeta^{\rm B}  \frac{t- \tau}{\Delta \tau} \right] }{\sin^2 \left[ \zeta^{\rm B} \right]} \nonumber \\
&=& \frac{\sin^2 \left[ \zeta^{\rm B}  \right] \left(\sin^2 \left[\zeta^{\rm B}  \frac{t- \tau}{\Delta \tau} \right]  +\cos^2 \left[\zeta^{\rm B}  \frac{t- \tau}{\Delta \tau} \right]   \right)}{\sin^2 \left[\zeta^{\rm B}  \right]} =1.
\end{eqnarray}
Thus, the weaker lower bound is calculated as
\begin{eqnarray}
&&\int_{\tau}^{\tau+\Delta \tau} dt (v_{\rm info}(t))^2  \nonumber \\
&\geq&  4 \int_{\tau}^{\tau + \Delta \tau} dt \int d\boldsymbol{x} (\partial_t \sqrt{P^*_t(\boldsymbol{x})})^2 \nonumber \\
&=& \frac{\left[ \mathcal{D}(P_{\tau}, P_{\tau+ \Delta \tau}) \right]^2  }{\Delta \tau^2} \int_{\tau}^{\tau + \Delta \tau} dt \left[ \frac{  \cos^2 \left[\zeta^{\rm B}\left(1- \frac{t-\tau}{\Delta \tau} \right)  \right]+ \cos^2 \left[\zeta^{\rm B} \frac{t- \tau}{\Delta \tau} \right] }{\sin^2 \left[ \zeta^{\rm B} \right]}\right. \nonumber \\
&&\left. - 2 \frac{\cos \left[\zeta^{\rm B} \right] \cos\left[\zeta^{\rm B} \left(1- \frac{t-\tau}{\Delta \tau} \right)\right] \cos \left[\zeta^{\rm B} \frac{t- \tau}{\Delta \tau} \right] }{\sin^2 \left[ \zeta^{\rm B} \right] }  \right] \nonumber\\
&=& \frac{\left[ \mathcal{D}(P_{\tau}, P_{\tau+ \Delta \tau}) \right]^2  }{(\Delta \tau)^2} \int_{\tau}^{\tau + \Delta \tau} dt =  \frac{\left[ \mathcal{D}(P_{\tau}, P_{\tau+ \Delta \tau}) \right]^2  }{\Delta \tau}.
\end{eqnarray} 
\end{proof}
\begin{remark}
$\mathcal{D}(P_{\tau}, P_{\tau+ \Delta \tau})$ is regarded as the geodesic on the hyper-sphere surface of radius $2$. An interpretation of the Bhattacharyya angle as the geodesic on the hyper-sphere surface is related to the fact that information geometry can be regarded as the geometry of a hyper-sphere surface of radius $2$ because the square of the line element is obtained from the Fisher metric as $ds^2 =\int d\boldsymbol{x} (2 d\sqrt{P_t(\boldsymbol{x})})^2$ with the constraint $\int d\boldsymbol{x} (\sqrt{P_t(\boldsymbol{x})})^2 =1$. The Bhattacharyya angle $\zeta^{\rm B}$ is given by the inner product for a unit vector on the hyper-sphere $\cos \zeta^{\rm B} = \int d\boldsymbol{x} (\sqrt{P_{\tau}(\boldsymbol{x})} \sqrt{P_{\tau+ \Delta \tau}(\boldsymbol{x})})$.
\end{remark}
\begin{remark}
The quantity $\int_{\tau}^{\tau+\Delta \tau} dt v_{\rm info}(t)$ is called the thermodynamic length proposed in Ref.~\cite{crooks2007measuring} as a generalization of the result in Ref.~\cite{salamon1983thermodynamic}. The thermodynamic length is minimized as $\int_{\tau}^{\tau+\Delta \tau} dt v_{\rm info}(t) \geq \mathcal{D}(P_{\tau}, P_{\tau+ \Delta \tau})$ for the fixed initial distribution $P_{\tau}$ and the final distribution $P_{\tau+ \Delta \tau}$.  The minimization of the thermodynamic length near equilibrium for large time interval $\Delta \tau$ is related to an optimal protocol to minimize the quadratic cost representing an observable fluctuation~\cite{crooks2007measuring, sivak2012thermodynamic, rotskoff2017geometric}.
\end{remark}
From the Cram\`{e}r-Rao bound, the intrinsic speed is also related to the speed of the observable. From the viewpoint of thermodynamics, this fact was discussed in Ref.~\cite{ito2020stochastic} for a time-independent observable, and in Ref.~\cite{nicholson2020time} for a time-dependent observable. 
\begin{definition}
Let $r(\boldsymbol{x}) \in \mathbb{R}$ be time-independent $\partial_t r(\boldsymbol{x})=0$. {\it The speed of the observable $v_r(t)$} is defined as
\begin{eqnarray}
v_r(t) = \sqrt{ \frac{\left(\partial_t \mathbb{E}_{P_t} [r]  \right)^2}{{\rm Var}_{P_t} [r]}} = \frac{\mid \partial_t \mathbb{E}_{P_t} [r]  \mid }{\sqrt{{\rm Var}_{P_t} [r]}} ,
\end{eqnarray}
where $\mathbb{E}_{P_t} [r]= \int d\boldsymbol{x} r(\boldsymbol{x}) P_t(\boldsymbol{x})$ and ${\rm Var}_{P_t} [r]= \mathbb{E}_{P_t} [ (\Delta r)^2]$ with $\Delta r(\boldsymbol{x}) = r(\boldsymbol{x})- \mathbb{E}_{P_t} [r]$.
\end{definition}
\begin{remark}
The speed of the observable can be regarded as the degree of the expected value's change $\mid  \partial_t  \mathbb{E}_{P_t} [r] \mid $, which is normalized by its standard deviation $\sqrt{{\rm Var}_{P_t} [r]}$.
\end{remark}
\begin{lemma}
For any $r(\boldsymbol{x}) \in \mathbb{R}$, the speed of the observable $v_r(t)$ is generally bounded by the intrinsic speed $v_{\rm info}(t)$,
\begin{eqnarray}
v_{\rm info}(t) \geq v_r(t).
\label{speedobscramer}
\end{eqnarray}
\end{lemma}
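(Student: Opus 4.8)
The plan is to recognize inequality~(\ref{speedobscramer}) as the Cram\'{e}r--Rao bound in which the ``parameter'' is time itself, and to prove it directly by a Cauchy--Schwartz estimate. First I would differentiate the expectation value under the integral sign,
\begin{eqnarray}
\partial_t \mathbb{E}_{P_t} [r] = \int d\boldsymbol{x}\, r(\boldsymbol{x}) \partial_t P_t(\boldsymbol{x}) = \int d\boldsymbol{x}\, r(\boldsymbol{x}) P_t(\boldsymbol{x}) \partial_t \ln P_t(\boldsymbol{x}),
\end{eqnarray}
using $\partial_t \ln P_t(\boldsymbol{x}) = \partial_t P_t(\boldsymbol{x})/P_t(\boldsymbol{x})$ on the support of $P_t$. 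Since $\int d\boldsymbol{x}\, \partial_t P_t(\boldsymbol{x}) = \partial_t \int d\boldsymbol{x}\, P_t(\boldsymbol{x}) = 0$, the expectation value $\mathbb{E}_{P_t}[r]$ may be subtracted freely inside the integrand, giving
\begin{eqnarray}
\partial_t \mathbb{E}_{P_t} [r] = \int d\boldsymbol{x}\, \Delta r(\boldsymbol{x})\, P_t(\boldsymbol{x}) \partial_t \ln P_t(\boldsymbol{x}),
\end{eqnarray}
where $\Delta r(\boldsymbol{x}) = r(\boldsymbol{x}) - \mathbb{E}_{P_t}[r]$ as in the definition of $v_r(t)$.

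Next I would apply the Cauchy--Schwartz inequality with respect to the measure $P_t(\boldsymbol{x})\, d\boldsymbol{x}$, splitting the integrand as $\Delta r(\boldsymbol{x}) \cdot \partial_t \ln P_t(\boldsymbol{x})$:
\begin{eqnarray}
\left( \partial_t \mathbb{E}_{P_t} [r] \right)^2 \leq \left[ \int d\boldsymbol{x}\, (\Delta r(\boldsymbol{x}))^2 P_t(\boldsymbol{x}) \right] \left[ \int d\boldsymbol{x}\, (\partial_t \ln P_t(\boldsymbol{x}))^2 P_t(\boldsymbol{x}) \right] = {\rm Var}_{P_t}[r] \cdot v_{\rm info}(t)^2,
\end{eqnarray}
where I identify the first bracket with ${\rm Var}_{P_t}[r] = \mathbb{E}_{P_t}[(\Delta r)^2]$ and the second with $v_{\rm info}(t)^2 = ds^2/dt^2$ from the definition of the Fisher information of time. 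Dividing by ${\rm Var}_{P_t}[r]$ and taking the positive square root yields $v_r(t) \leq v_{\rm info}(t)$, which is Eq.~(\ref{speedobscramer}).

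I do not expect any genuine obstacle here; the argument is the textbook Cram\'{e}r--Rao manipulation, and indeed it can alternatively be read off from Lemma~\ref{lemmacramerrao} by taking the interpolation parameter to be time. The only points needing a word of care are the interchange of $\partial_t$ and $\int d\boldsymbol{x}$ (legitimate under the standing regularity/decay assumptions on $P_t$), the fact that $\partial_t \ln P_t$ is defined on $\{P_t > 0\}$ while the complement contributes nothing to either integral, and the degenerate case ${\rm Var}_{P_t}[r] = 0$, in which $r$ is $P_t$-almost-surely constant so $\partial_t \mathbb{E}_{P_t}[r] = 0$ and $v_r(t) = 0 \leq v_{\rm info}(t)$ trivially. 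If desired I would also record the equality condition: $v_{\rm info}(t) = v_r(t)$ holds precisely when $\partial_t \ln P_t(\boldsymbol{x})$ is $P_t$-almost-everywhere proportional to $\Delta r(\boldsymbol{x})$.
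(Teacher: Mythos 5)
Your proof is correct and follows essentially the same route as the paper: identify the intrinsic speed as the Fisher information of time, subtract the mean using $\int d\boldsymbol{x}\, \partial_t P_t = 0$, and apply the Cauchy--Schwartz inequality with respect to the measure $P_t\, d\boldsymbol{x}$ to obtain the Cram\'{e}r--Rao bound with $t$ as the parameter. The extra remarks on the support of $P_t$, the degenerate variance case, and the equality condition are sound but not needed for the argument.
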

\begin{proof}
The Fisher information of time $[v_{\rm info}(t)]^2$ is the Fisher metric for the parameter $\theta =t$. As discussed in Lemma~\ref{lemmacramerrao}, the Cram\'{e}r--Rao bound for the parameter $\theta =t$ is given by
\begin{eqnarray}
[v_{\rm info}(t)]^2 &=& \int d\boldsymbol{x} P_t(\boldsymbol{x})(\partial_t \ln P_t(\boldsymbol{x}) )^2 \nonumber \\
&\geq& \frac{\left( \int d\boldsymbol{x} P_t(\boldsymbol{x})( \Delta r (\boldsymbol{x}))(\partial_t \ln P_t(\boldsymbol{x}) )  \right)^2 }{\int d\boldsymbol{x} P_t(\boldsymbol{x})( \Delta r (\boldsymbol{x}))^2} = (v_r(t))^2,
\end{eqnarray}
where we used the Cauchy--Schwartz inequality and $\int d\boldsymbol{x} \partial_t P_t(\boldsymbol{x})=0$. By taking the square root of each side, we obtain Eq.~(\ref{lemmacramerrao}).
\end{proof}
We newly propose that the intrinsic speed $v_{\rm info}(t)$ also provides an upper bound on the excess entropy production rate. This fact was substantially proposed in Refs.~\cite{yoshimura2021thermodynamic,ito2022information}. 
\begin{proposition}\label{excessspeedlimit}
The excess entropy production rate $\sigma^{\rm ex}_t$ is bounded as follows.
\begin{eqnarray}
v_{\rm info}(t) \sqrt{ {\rm Var}_{P_t}[\theta_{P_t}]}  \geq \sigma^{\rm ex}_t, \label{excessbound}
\end{eqnarray}
where $\theta_{P_t}(\boldsymbol{x})$ is the theta coordinate system defined as $\theta_{P_t}(\boldsymbol{x}) = \ln P_t(\boldsymbol{x}) - \ln P^{\rm pcan}_t (\boldsymbol{x})$.
\end{proposition}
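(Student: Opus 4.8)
The plan is to combine the Kullback--Leibler representation of the excess entropy production rate already established, namely $\sigma^{\rm ex}_{t} = -\left.\partial_{t} D_{\rm KL}(P_{t}\|P^{\rm pcan}_{s})\right\rvert_{s=t}$, with a single application of the Cauchy--Schwarz inequality in the Hilbert space $L^{2}(P_{t}\,d\boldsymbol{x})$, exactly in the spirit of the Cram\'er--Rao argument used earlier to obtain $v_{\rm info}(t)\geq v_{r}(t)$.

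First I would rewrite $\sigma^{\rm ex}_{t}$ as an instantaneous pairing between the probability current $\partial_{t}P_{t}$ and the theta coordinate $\theta_{P_{t}}$. Holding the second argument fixed at $s=t$, the chain rule gives $\sigma^{\rm ex}_{t} = -\int d\boldsymbol{x}\,(\partial_{t}P_{t}(\boldsymbol{x}))\,\partial_{P_{t}(\boldsymbol{x})}D_{\rm KL}(P_{t}\|P^{\rm pcan}_{t})$, and since the functional derivative of the Kullback--Leibler divergence in the form used in Eq.~(\ref{KLforfd}) is $\partial_{P_{t}(\boldsymbol{x})}D_{\rm KL}(P_{t}\|P^{\rm pcan}_{t}) = \ln[P_{t}(\boldsymbol{x})/P^{\rm pcan}_{t}(\boldsymbol{x})] = \theta_{P_{t}}(\boldsymbol{x})$, this yields the identity $\sigma^{\rm ex}_{t} = -\int d\boldsymbol{x}\,(\partial_{t}P_{t}(\boldsymbol{x}))\,\theta_{P_{t}}(\boldsymbol{x})$. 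Alternatively the same identity follows from Theorem~\ref{excesshousekeeping} together with $\boldsymbol{\nu}^{*}_{t} = \nabla\phi_{t} = -\mu T\nabla\theta_{P_{t}}$ and one integration by parts, which also makes transparent where the decay-at-infinity assumption enters.

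Second, because $\int d\boldsymbol{x}\,\partial_{t}P_{t}(\boldsymbol{x}) = \partial_{t}\int d\boldsymbol{x}\,P_{t}(\boldsymbol{x}) = 0$, I can recenter $\theta_{P_{t}}$ by its mean without changing the integral: writing $\Delta\theta_{P_{t}}(\boldsymbol{x}) = \theta_{P_{t}}(\boldsymbol{x}) - \mathbb{E}_{P_{t}}[\theta_{P_{t}}]$ and $\partial_{t}P_{t}(\boldsymbol{x}) = P_{t}(\boldsymbol{x})\,\partial_{t}\ln P_{t}(\boldsymbol{x})$ one gets $\sigma^{\rm ex}_{t} = -\int d\boldsymbol{x}\,P_{t}(\boldsymbol{x})\,(\partial_{t}\ln P_{t}(\boldsymbol{x}))\,\Delta\theta_{P_{t}}(\boldsymbol{x})$. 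Applying Cauchy--Schwarz with respect to the measure $P_{t}(\boldsymbol{x})\,d\boldsymbol{x}$ then gives $|\sigma^{\rm ex}_{t}| \leq \sqrt{\int d\boldsymbol{x}\,P_{t}(\boldsymbol{x})(\partial_{t}\ln P_{t}(\boldsymbol{x}))^{2}}\;\sqrt{\int d\boldsymbol{x}\,P_{t}(\boldsymbol{x})(\Delta\theta_{P_{t}}(\boldsymbol{x}))^{2}}$. By the definition of the Fisher information of time the first factor is exactly $v_{\rm info}(t)$, and the second factor is $\sqrt{{\rm Var}_{P_{t}}[\theta_{P_{t}}]}$. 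Since $\sigma^{\rm ex}_{t}\geq 0$ the absolute value may be dropped, and Eq.~(\ref{excessbound}) follows.

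I do not expect a genuine obstacle here. The only point that needs care is the bookkeeping in the first step: one must use the functional derivative associated with the form of $D_{\rm KL}$ in Eq.~(\ref{KLforfd}), so that it equals $\theta_{P_{t}}$ on the nose rather than up to an additive constant, and one must invoke the rapid-decay hypothesis when passing between the pairing $\int (\partial_{t}P_{t})\theta_{P_{t}}$ and the optimal-transport expression of the excess entropy production rate, precisely as in the proofs of the preceding results. Everything after the identity $\sigma^{\rm ex}_{t} = -\int d\boldsymbol{x}\,(\partial_{t}P_{t})\,\theta_{P_{t}}$ is a one-line Cauchy--Schwarz estimate that mirrors the derivation of $v_{\rm info}(t)\geq v_{r}(t)$.
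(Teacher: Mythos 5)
Your proof is correct and follows essentially the same route as the paper: both reduce to the identity $\sigma^{\rm ex}_t = -\int d\boldsymbol{x}\,(\partial_t P_t)\,\theta_{P_t}$, recenter $\theta_{P_t}$ using $\int d\boldsymbol{x}\,\partial_t P_t = 0$, and apply Cauchy--Schwarz in $L^2(P_t\,d\boldsymbol{x})$. The only cosmetic difference is that you explicitly derive that starting identity from $\sigma^{\rm ex}_t = -\partial_t D_{\rm KL}(P_t\|P^{\rm pcan}_s)|_{s=t}$ via the functional-derivative computation, whereas the paper simply invokes it as established.
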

\begin{proof}
The excess entropy production is given by
\begin{eqnarray}
\sigma^{\rm ex}_t &=& - \int d \boldsymbol{x} P_t(\boldsymbol{x}) [\partial_t \ln P_t (\boldsymbol{x})] \theta_{P_t} (\boldsymbol{x}) \\
&=& - \int d \boldsymbol{x} P_t(\boldsymbol{x}) [\partial_t \ln P_t (\boldsymbol{x})] [ \theta_{P_t} (\boldsymbol{x}) - \mathbb{E}_{P_t} [\theta_{P_t}]],
\end{eqnarray}
where we used $\int d\boldsymbol{x} \partial_t P_t(\boldsymbol{x})=0$.
From the Cauchy--Schwartz inequality, we obtain 
\begin{eqnarray}
(\sigma^{\rm ex}_t)^2
 &=& \left( \int d \boldsymbol{x} P_t(\boldsymbol{x}) [\partial_t \ln P_t (\boldsymbol{x})] (\theta_{P_t} (\boldsymbol{x}) - \mathbb{E}_{P_t} [\theta_{P_t}] ) \right)^2 \nonumber \\
&\leq& \left( \int d \boldsymbol{x} P_t(\boldsymbol{x}) (\partial_t \ln P_t (\boldsymbol{x}))^2 \right) \left( \int d \boldsymbol{x} P_t(\boldsymbol{x}) (\theta_{P_t} (\boldsymbol{x}) - \mathbb{E}_{P_t} [\theta_{P_t}] )^2 \right) \nonumber \\
&=& [v_{\rm info}(t)]^2 {\rm Var}_{P_t}[\theta_{P_t}].
\end{eqnarray}
By taking the square root of each side, we obtain Eq.~(\ref{excessbound}).
\end{proof}
\begin{remark}
Proposition~\ref{excessspeedlimit} implies that the excess entropy production rate $\sigma^{\rm ex}_t$ is generally bounded by the intrinsic speed $v_{\rm info}(t)$. From the bound (\ref{excessbound}), the excess entropy production rate is zero $\sigma^{\rm ex}_t=0$ if the intrinsic speed is zero $v_{\rm info}(t)=0$. This result is consistent with the fact that the excess entropy production rate is zero in a steady state.
\end{remark}
\subsection{Excess entropy production and information geometry in space of path probability densities}
Here we newly propose that the excess entropy production rate, which is given by the $L^2$-Wasserstein distance in optimal transport theory, can also be obtained from the projection theorem in the space of path probability densities as analogous to the entropy production rate. This projection theorem for the excess entropy production rate was substantially obtained in Ref.~\cite{kolchinsky2022information} for the general Markov process. This result also gives another link between information geometry in the space of path probability densities and optimal transport theory.  

We start with the expressions of the entropy production rate, the excess entropy production rate and the housekeeping entropy production rate by the Kullback--Leibler divergence between the path probability densities of interpolated dynamics $\mathbb{P}_{\boldsymbol{\nu}_{\tau}'}^{\theta}$ defined in Definition~\ref{pathprobint}. 
\begin{proposition} \label{klpathopt}
The entropy production rate $\sigma_{\tau}$, the excess entropy production rate $\sigma_{\tau}^{\rm ex}$ and the housekeeping entropy production rate $\sigma_{\tau}^{\rm hk}$ for original Fokker--Planck dynamics~(\ref{fp}) are given by
\begin{eqnarray}
\sigma_{\tau} &=& \lim_{dt \to 0} \frac{4 D_{\rm KL}(\mathbb{P}_{\boldsymbol{\nu}_{\tau}}^1 \| \mathbb{P}^1_{\boldsymbol{0}})}{dt}, \\
\sigma_{\tau}^{\rm ex} &=& \lim_{dt \to 0} \frac{4 D_{\rm KL}(\mathbb{P}^1_{\boldsymbol{\nu}_{\tau}} \| \mathbb{P}^1_{\boldsymbol{\nu}_{\tau} - \boldsymbol{\nu}^*_{\tau}})}{dt} = \lim_{dt \to 0} \frac{4 D_{\rm KL}(\mathbb{P}^1_{\boldsymbol{\nu}_{\tau}^*} \| \mathbb{P}^1_{\boldsymbol{0}})}{dt}, \\
\sigma_{\tau}^{\rm hk} &=& \lim_{dt \to 0} \frac{4 D_{\rm KL}(\mathbb{P}^1_{\boldsymbol{\nu}_{\tau}} \| \mathbb{P}^1_{ \boldsymbol{\nu}^*_{\tau}})}{dt} = \lim_{dt \to 0} \frac{4 D_{\rm KL}(\mathbb{P}^1_{\boldsymbol{\nu}_{\tau} - \boldsymbol{\nu}^*_{\tau}} \| \mathbb{P}^1_{\boldsymbol{0}})}{dt},
\end{eqnarray}
where $\boldsymbol{\nu}^*_{\tau}(\boldsymbol{x})$ is the optimal mean local velocity defined as $\boldsymbol{\nu}^*_{\tau}(\boldsymbol{x}) = \nabla \phi_{\tau}(\boldsymbol{x})$, and $\boldsymbol{\nu}_{\tau} (\boldsymbol{x}_{\tau})$ is the mean local velocity $\boldsymbol{\nu}_{\tau} (\boldsymbol{x}) = \mu ( \boldsymbol{F}_{\tau} (\boldsymbol{x}) - T \nabla \ln P_{\tau}(\boldsymbol{x}) )$ for the original Fokker--Planck equation $\partial_{\tau} P_{\tau}(\boldsymbol{x}) =- \nabla \cdot( \boldsymbol{\nu}_{\tau} (\boldsymbol{x}) P_{\tau} (\boldsymbol{x})  )$.
\end{proposition}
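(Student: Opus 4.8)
The plan is to reduce every identity to one computation: the Kullback--Leibler divergence between two interpolated path probability densities of the form $\mathbb{P}^{1}_{\boldsymbol{a}}$ and $\mathbb{P}^{1}_{\boldsymbol{b}}$, for arbitrary vector fields $\boldsymbol{a}(\boldsymbol{x}),\boldsymbol{b}(\boldsymbol{x})\in\mathbb{R}^{d}$. First I would record the structure of $\mathbb{P}^{1}_{\boldsymbol{\nu}'_{\tau}}$: setting $\theta=1$ in Definition~\ref{pathprobint}, the transition kernel $\mathbb{T}^{1}_{\tau;\boldsymbol{\nu}'_{\tau}}(\boldsymbol{x}_{\tau+dt}\mid\boldsymbol{x}_{\tau})$ is Gaussian in $\boldsymbol{x}_{\tau+dt}$ with covariance $2\mu T\,dt\,\mathbb{I}$ and mean $\boldsymbol{x}_{\tau}+[\boldsymbol{\nu}'_{\tau}(\boldsymbol{x}_{\tau})+\mu T\nabla\ln P_{\tau}(\boldsymbol{x}_{\tau})]\,dt$; in particular $\mathbb{P}^{1}_{\boldsymbol{\nu}_{\tau}}=\mathbb{P}$, in agreement with Lemma~\ref{lemma1}. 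Since $\mathbb{P}^{1}_{\boldsymbol{a}}$ and $\mathbb{P}^{1}_{\boldsymbol{b}}$ share the initial marginal $P_{\tau}$, the chain rule for the Kullback--Leibler divergence gives
\begin{eqnarray}
D_{\rm KL}(\mathbb{P}^{1}_{\boldsymbol{a}}\|\mathbb{P}^{1}_{\boldsymbol{b}}) = \int d\boldsymbol{x}_{\tau}\,P_{\tau}(\boldsymbol{x}_{\tau})\,D_{\rm KL}\!\left(\mathbb{T}^{1}_{\tau;\boldsymbol{a}}(\cdot\mid\boldsymbol{x}_{\tau})\,\big\|\,\mathbb{T}^{1}_{\tau;\boldsymbol{b}}(\cdot\mid\boldsymbol{x}_{\tau})\right). \nonumber
\end{eqnarray}

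Next I would evaluate the inner divergence exactly as in the proof of Lemma~\ref{lemmakl}: the two conditional laws are Gaussians with a common covariance and with means differing only by $[\boldsymbol{a}(\boldsymbol{x}_{\tau})-\boldsymbol{b}(\boldsymbol{x}_{\tau})]\,dt$, because the $\mu T\nabla\ln P_{\tau}$ contribution is common to both kernels and cancels in the log-ratio. The log-ratio is therefore affine in $\boldsymbol{x}_{\tau+dt}-\boldsymbol{x}_{\tau}$, and taking its $\mathbb{T}^{1}_{\tau;\boldsymbol{a}}$-expectation (using $\langle\boldsymbol{x}_{\tau+dt}-\boldsymbol{x}_{\tau}\rangle=[\boldsymbol{a}+\mu T\nabla\ln P_{\tau}]\,dt$) collapses everything to $\|\boldsymbol{a}(\boldsymbol{x}_{\tau})-\boldsymbol{b}(\boldsymbol{x}_{\tau})\|^{2}dt/(4\mu T)$. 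Hence
\begin{eqnarray}
\lim_{dt\to0}\frac{4D_{\rm KL}(\mathbb{P}^{1}_{\boldsymbol{a}}\|\mathbb{P}^{1}_{\boldsymbol{b}})}{dt} = \frac{1}{\mu T}\int d\boldsymbol{x}\,\|\boldsymbol{a}(\boldsymbol{x})-\boldsymbol{b}(\boldsymbol{x})\|^{2}P_{\tau}(\boldsymbol{x}). \nonumber
\end{eqnarray}

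It then remains to substitute. Taking $(\boldsymbol{a},\boldsymbol{b})=(\boldsymbol{\nu}_{\tau},\boldsymbol{0})$ and using $\mathbb{P}^{1}_{\boldsymbol{\nu}_{\tau}}=\mathbb{P}$ gives $\sigma_{\tau}$ by the definition of the entropy production rate. Taking $(\boldsymbol{a},\boldsymbol{b})=(\boldsymbol{\nu}_{\tau},\boldsymbol{\nu}_{\tau}-\boldsymbol{\nu}^{*}_{\tau})$ or $(\boldsymbol{\nu}^{*}_{\tau},\boldsymbol{0})$ makes the difference equal to $\boldsymbol{\nu}^{*}_{\tau}$, so the right-hand side is $(\mu T)^{-1}\int d\boldsymbol{x}\,\|\boldsymbol{\nu}^{*}_{\tau}\|^{2}P_{\tau}=\sigma^{\rm ex}_{\tau}$ by Theorem~\ref{excesshousekeeping}. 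Taking $(\boldsymbol{a},\boldsymbol{b})=(\boldsymbol{\nu}_{\tau},\boldsymbol{\nu}^{*}_{\tau})$ or $(\boldsymbol{\nu}_{\tau}-\boldsymbol{\nu}^{*}_{\tau},\boldsymbol{0})$ makes the difference equal to $\boldsymbol{\nu}_{\tau}-\boldsymbol{\nu}^{*}_{\tau}$, so the right-hand side is $(\mu T)^{-1}\int d\boldsymbol{x}\,\|\boldsymbol{\nu}_{\tau}-\boldsymbol{\nu}^{*}_{\tau}\|^{2}P_{\tau}=\sigma^{\rm hk}_{\tau}$, again by Theorem~\ref{excesshousekeeping}. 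This produces all six claimed identities.

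There is no genuine obstacle here; the only point requiring attention is that Lemma~\ref{lemmakl} is stated for two interpolated dynamics sharing the \emph{same} auxiliary field $\boldsymbol{\nu}'_{\tau}$, whereas the divergences in the proposition compare path probabilities with different subscripts. One therefore has to rerun (rather than merely cite) the Gaussian-integral computation for the general pair $(\mathbb{P}^{1}_{\boldsymbol{a}},\mathbb{P}^{1}_{\boldsymbol{b}})$ and check that the shared $\nabla\ln P_{\tau}$ drift drops out of the log-ratio; once this elementary fact is in hand, everything follows from the definition of $\sigma_{\tau}$ and from Theorem~\ref{excesshousekeeping}.
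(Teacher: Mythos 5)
Your proof is correct and follows essentially the same route as the paper's: you recompute the Kullback--Leibler divergence between $\mathbb{P}^{1}_{\boldsymbol{a}}$ and $\mathbb{P}^{1}_{\boldsymbol{b}}$ for arbitrary vector fields by the Gaussian calculation, obtaining $D_{\rm KL}(\mathbb{P}^{1}_{\boldsymbol{a}}\|\mathbb{P}^{1}_{\boldsymbol{b}})=\tfrac{dt}{4\mu T}\int d\boldsymbol{x}\,\|\boldsymbol{a}-\boldsymbol{b}\|^{2}P_{\tau}$, and then substitute the same five pairs $(\boldsymbol{\nu}_{\tau},\boldsymbol{0})$, $(\boldsymbol{\nu}_{\tau},\boldsymbol{\nu}_{\tau}-\boldsymbol{\nu}^{*}_{\tau})$, $(\boldsymbol{\nu}^{*}_{\tau},\boldsymbol{0})$, $(\boldsymbol{\nu}_{\tau},\boldsymbol{\nu}^{*}_{\tau})$, $(\boldsymbol{\nu}_{\tau}-\boldsymbol{\nu}^{*}_{\tau},\boldsymbol{0})$ together with Theorem~\ref{excesshousekeeping}. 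Your organizational use of the KL chain rule over the shared marginal $P_{\tau}$, and the explicit observation that the common $\mu T\nabla\ln P_{\tau}$ drift cancels in the log-ratio, are minor streamlinings of the paper's explicit quadratic expansion, not a different argument.
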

\begin{proof}
For any $\boldsymbol{v} (\boldsymbol{x}) \in \mathbb{R}^d$ and $\boldsymbol{v}' (\boldsymbol{x}) \in \mathbb{R}^d$, we obtain
\begin{eqnarray}
\ln \frac{\mathbb{T}^1_{\tau;\boldsymbol{v}} (\boldsymbol{x}_{\tau + dt} \mid \boldsymbol{x}_{\tau}  )}{\mathbb{T}^1_{\tau;\boldsymbol{v}'} (\boldsymbol{x}_{\tau + dt} \mid \boldsymbol{x}_{\tau}  )} &=& - \frac{\|\boldsymbol{x}_{\tau +dt}- \boldsymbol{x}_{\tau} - \mu \boldsymbol{F}_{\tau}(\boldsymbol{x}_{\tau})dt - \boldsymbol{v}(\boldsymbol{x})dt - \boldsymbol{\nu}_{\tau} ( \boldsymbol{x}_{\tau})dt  \|^2}{4  \mu T dt} \nonumber \\
&&+  \frac{\|\boldsymbol{x}_{\tau +dt}- \boldsymbol{x}_{\tau} - \mu \boldsymbol{F}_{\tau}(\boldsymbol{x}_{\tau})dt - \boldsymbol{v}'(\boldsymbol{x})dt - \boldsymbol{\nu}_{\tau} ( \boldsymbol{x}_{\tau})dt  \|^2}{4  \mu T dt} \nonumber \\
&=& \frac{(\boldsymbol{x}_{\tau +dt}- \boldsymbol{x}_{\tau} - \mu \boldsymbol{F}_{\tau}(\boldsymbol{x}_{\tau})dt - \boldsymbol{\nu}_{\tau} ( \boldsymbol{x}_{\tau})dt ) \cdot (\boldsymbol{v}(\boldsymbol{x}_{\tau}) -  \boldsymbol{v}(\boldsymbol{x}_{\tau}))  }{2  \mu T }  \nonumber \\
&&+ \frac{ \| \boldsymbol{v}'^2(\boldsymbol{x}_{\tau}) -  \boldsymbol{v}^2(\boldsymbol{x}_{\tau}) \|  dt}{4  \mu T }.
\end{eqnarray}
Thus, the Kullback--Leibler divergecne is calculated as
\begin{eqnarray}
D_{\rm KL}(\mathbb{P}^1_{\boldsymbol{v}} \| \mathbb{P}^1_{\boldsymbol{v}'}) &=& \int d \boldsymbol{x}_{\tau} P_{\tau} (\boldsymbol{x}_{\tau}) \int d \boldsymbol{x}_{\tau +dt} \mathbb{T}^1_{\tau;\boldsymbol{v}} (\boldsymbol{x}_{\tau + dt} \mid \boldsymbol{x}_{\tau}  ) \ln \frac{\mathbb{T}^1_{\tau;\boldsymbol{v}} (\boldsymbol{x}_{\tau + dt} \mid \boldsymbol{x}_{\tau}  )}{\mathbb{T}^1_{\tau;\boldsymbol{v}'} (\boldsymbol{x}_{\tau + dt} \mid \boldsymbol{x}_{\tau}  )} \nonumber \\
&=& dt \frac{1}{4\mu T}\int d \boldsymbol{x}_{\tau} P_{\tau} (\boldsymbol{x}_{\tau})  \| \boldsymbol{v}(\boldsymbol{x}_{\tau}) -  \boldsymbol{v}(\boldsymbol{x}_{\tau}) \|^2. \label{geometrykl}
\end{eqnarray}
Therefore, by plugging $(\boldsymbol{v}, \boldsymbol{v}')=(\boldsymbol{\nu}_{\tau}, \boldsymbol{0})$, $(\boldsymbol{v}, \boldsymbol{v}')=(\boldsymbol{\nu}_{\tau}, \boldsymbol{\nu}_{\tau} - \boldsymbol{\nu}^*_{\tau})$, $(\boldsymbol{v}, \boldsymbol{v}')=(\boldsymbol{\nu}_{\tau}^*, \boldsymbol{0})$, $(\boldsymbol{v}, \boldsymbol{v}')=(\boldsymbol{\nu}_{\tau},  \boldsymbol{\nu}^*_{\tau})$, $(\boldsymbol{v}, \boldsymbol{v}')=(\boldsymbol{\nu}_{\tau}- \boldsymbol{\nu}_{\tau}^*, \boldsymbol{0})$ into Eq.~(\ref{geometrykl}), we obtain 
\begin{eqnarray}
&&\lim_{dt \to 0} \frac{4 D_{\rm KL}(\mathbb{P}_{\boldsymbol{\nu}_{\tau}}^1 \| \mathbb{P}^1_{\boldsymbol{0}})}{dt} = \frac{1}{\mu T}\int d \boldsymbol{x}_{\tau} P_{\tau}(\boldsymbol{x}_{\tau}) \|\boldsymbol{\nu}_{\tau} (\boldsymbol{x}_{\tau})  \|^2 = \sigma_{\tau} , \\
&& \lim_{dt \to 0} \frac{4 D_{\rm KL}(\mathbb{P}^1_{\boldsymbol{\nu}_{\tau}} \| \mathbb{P}^1_{\boldsymbol{\nu}_{\tau} - \boldsymbol{\nu}^*_{\tau}})}{dt} =  \lim_{dt \to 0} \frac{4 D_{\rm KL}(\mathbb{P}^1_{\boldsymbol{\nu}_{\tau}^*} \| \mathbb{P}^1_{\boldsymbol{0}})}{dt} \nonumber  \\
&=&\frac{1}{\mu T}\int d \boldsymbol{x}_{\tau} P_{\tau}(\boldsymbol{x}_{\tau}) \|\boldsymbol{\nu}_{\tau}^* (\boldsymbol{x}_{\tau})  \|^2 = \sigma^{\rm ex}_{\tau} , \\
&&\lim_{dt \to 0} \frac{4 D_{\rm KL}(\mathbb{P}^1_{\boldsymbol{\nu}_{\tau}} \| \mathbb{P}^1_{ \boldsymbol{\nu}^*_{\tau}})}{dt} = \lim_{dt \to 0} \frac{4 D_{\rm KL}(\mathbb{P}^1_{\boldsymbol{\nu}_{\tau} - \boldsymbol{\nu}^*_{\tau}} \| \mathbb{P}^1_{\boldsymbol{0}})}{dt} \nonumber \\
&=&\frac{1}{\mu T}\int d \boldsymbol{x}_{\tau} P_{\tau}(\boldsymbol{x}_{\tau}) \|\boldsymbol{\nu}_{\tau} (\boldsymbol{x}_{\tau}) -\boldsymbol{\nu}_{\tau}^* (\boldsymbol{x}_{\tau})  \|^2 = \sigma^{\rm hk}_{\tau}.
\end{eqnarray}
\end{proof}
\begin{remark} \label{pythagreanKL}
Proposition~\ref{klpathopt} implies that the origin of the decomposition $\sigma_{\tau} = \sigma^{\rm ex}_{\tau}+ \sigma^{\rm hk}_{\tau}$ comes from the generalized Pythagorean theorems
\begin{eqnarray}
D_{\rm KL}(\mathbb{P}_{\boldsymbol{\nu}_{\tau}}^1 \| \mathbb{P}^1_{\boldsymbol{0}}) &=& D_{\rm KL}(\mathbb{P}_{\boldsymbol{\nu}_{\tau}}^1 \| \mathbb{P}^1_{\boldsymbol{\nu}^*_{\tau}}) + D_{\rm KL}(\mathbb{P}_{\boldsymbol{\nu}^*_{\tau}}^1 \| \mathbb{P}^1_{\boldsymbol{0}}), \\
D_{\rm KL}(\mathbb{P}_{\boldsymbol{\nu}_{\tau}}^1 \| \mathbb{P}^1_{\boldsymbol{0}}) &=& D_{\rm KL}(\mathbb{P}_{\boldsymbol{\nu}_{\tau}}^1 \| \mathbb{P}^1_{\boldsymbol{\nu}_{\tau}- \boldsymbol{\nu}^*_{\tau}}) + D_{\rm KL}(\mathbb{P}_{\boldsymbol{\nu}_{\tau}- \boldsymbol{\nu}^*_{\tau}}^1 \| \mathbb{P}^1_{\boldsymbol{0}}),
\end{eqnarray}
that is consistent with the Pythagorean theorem in Remark~\ref{remarkpytha}.
\end{remark}
Based on the projection theorem for the Pythagorean theorem in Remark~\ref{pythagreanKL}, we obtain expressions of the excess entropy production rate and the housekeeping entropy production rate by the minimization problem of the Kullback--Leibler divergence. 
\begin{proposition} 
We assume that $P_{\tau}(\boldsymbol{x}_{\tau})$ decays sufficiently rapidly at infinity. The excess entropy production rate and the housekeeping entropy production rate are given by
\begin{eqnarray}
\sigma_{\tau}^{\rm ex} = \lim_{dt \to 0} \inf_{\mathbb{Q} \in \mathcal{M}_{\rm ZD} (\mathbb{P}) } \frac{4D_{\rm KL}(\mathbb{P}\| \mathbb{Q})}{dt}, \label{excessvar}\\
\sigma_{\tau}^{\rm hk} = \lim_{dt \to 0} \inf_{\mathbb{Q} \in \mathcal{M}_{\rm G} (\mathbb{P}) } \frac{4D_{\rm KL}(\mathbb{P}\| \mathbb{Q})}{dt}, \label{housekeepingvar}
\end{eqnarray}
where $\mathcal{M}_{\rm ZD} (\mathbb{P})$ is {\it the zero-divergence manifold} defined as
\begin{eqnarray}
\mathcal{M}_{\rm ZD} (\mathbb{P}) = \{\mathbb{P}^{1}_{\boldsymbol{v}} \mid \nabla \cdot (\boldsymbol{v} (\boldsymbol{x}_{\tau}) P_{\tau}(\boldsymbol{x}_{\tau})) =0  \},
\end{eqnarray}
and $\mathcal{M}_{\rm G} (\mathbb{P})$ is {\it the gradient manifold} defined as
\begin{eqnarray}
\mathcal{M}_{\rm G} (\mathbb{P}) = \{\mathbb{P}^{1}_{\nabla r} \mid r (\boldsymbol{x}_{\tau})\in \mathbb{R} \}.
\end{eqnarray}
\end{proposition}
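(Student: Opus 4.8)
The plan is to turn each of the two minimizations into an orthogonal projection in the weighted inner product $\langle\cdot,\cdot\rangle_{P_{\tau}/(\mu T)}$ of Remark~\ref{remarkpytha}, and then to read off the minimizer from the Helmholtz--Hodge splitting $\boldsymbol{\nu}_{\tau}=\boldsymbol{\nu}^*_{\tau}+(\boldsymbol{\nu}_{\tau}-\boldsymbol{\nu}^*_{\tau})$ already isolated in Theorem~\ref{excesshousekeeping}. This runs parallel to the projection theorem for the entropy production rate in Theorem~\ref{ptep}, with $\mathcal{M}_{\rm ZD}(\mathbb{P})$ and $\mathcal{M}_{\rm G}(\mathbb{P})$ in place of the backward manifold $\mathcal{M}_{\rm B}(\mathbb{P})$.

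First I would use that $\mathbb{P}=\mathbb{P}^{1}_{\boldsymbol{\nu}_{\tau}}$ (the interpolation term vanishes for $\boldsymbol{\nu}'_{\tau}=\boldsymbol{\nu}_{\tau}$) and that every $\mathbb{Q}$ in $\mathcal{M}_{\rm ZD}(\mathbb{P})$ or $\mathcal{M}_{\rm G}(\mathbb{P})$ has the form $\mathbb{P}^{1}_{\boldsymbol{v}}$, so the exact quadratic identity Eq.~(\ref{geometrykl}) from the proof of Proposition~\ref{klpathopt} gives, for every admissible $\boldsymbol{v}$,
\begin{equation}
\frac{4 D_{\rm KL}(\mathbb{P}\|\mathbb{P}^{1}_{\boldsymbol{v}})}{dt} = \langle \boldsymbol{\nu}_{\tau}-\boldsymbol{v},\, \boldsymbol{\nu}_{\tau}-\boldsymbol{v}\rangle_{P_{\tau}/(\mu T)},
\end{equation}
with no dependence on $dt$. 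Hence the right-hand side of Eq.~(\ref{excessvar}) equals $\inf\{\langle\boldsymbol{\nu}_{\tau}-\boldsymbol{v},\boldsymbol{\nu}_{\tau}-\boldsymbol{v}\rangle_{P_{\tau}/(\mu T)}\mid\nabla\cdot(\boldsymbol{v}P_{\tau})=0\}$ and that of Eq.~(\ref{housekeepingvar}) equals $\inf_{r}\langle\boldsymbol{\nu}_{\tau}-\nabla r,\boldsymbol{\nu}_{\tau}-\nabla r\rangle_{P_{\tau}/(\mu T)}$, and the $dt\to0$ limits are immediate.

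I would then identify the minimizers. By Theorem~\ref{excesshousekeeping}, $\boldsymbol{\nu}^*_{\tau}=\nabla\phi_{\tau}$ is a gradient and $\nabla\cdot((\boldsymbol{\nu}_{\tau}-\boldsymbol{\nu}^*_{\tau})P_{\tau})=0$ by Eq.~(\ref{conditiondecomp}); moreover the integration by parts used there, valid since $P_{\tau}$ decays rapidly at infinity, shows that any gradient is $\langle\cdot,\cdot\rangle_{P_{\tau}/(\mu T)}$-orthogonal to any weighted-divergence-free field. For the excess part, write $\boldsymbol{\nu}_{\tau}-\boldsymbol{v}=\boldsymbol{\nu}^*_{\tau}+[(\boldsymbol{\nu}_{\tau}-\boldsymbol{\nu}^*_{\tau})-\boldsymbol{v}]$; when $\nabla\cdot(\boldsymbol{v}P_{\tau})=0$ the bracketed field is again weighted-divergence-free, so the cross term drops and $\langle\boldsymbol{\nu}_{\tau}-\boldsymbol{v},\boldsymbol{\nu}_{\tau}-\boldsymbol{v}\rangle_{P_{\tau}/(\mu T)}=\sigma^{\rm ex}_{\tau}+\langle(\boldsymbol{\nu}_{\tau}-\boldsymbol{\nu}^*_{\tau})-\boldsymbol{v},(\boldsymbol{\nu}_{\tau}-\boldsymbol{\nu}^*_{\tau})-\boldsymbol{v}\rangle_{P_{\tau}/(\mu T)}\geq\sigma^{\rm ex}_{\tau}$, with equality for the admissible choice $\boldsymbol{v}=\boldsymbol{\nu}_{\tau}-\boldsymbol{\nu}^*_{\tau}$, i.e. $\mathbb{Q}=\mathbb{P}^{1}_{\boldsymbol{\nu}_{\tau}-\boldsymbol{\nu}^*_{\tau}}\in\mathcal{M}_{\rm ZD}(\mathbb{P})$; this proves Eq.~(\ref{excessvar}). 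The housekeeping part is the mirror image: write $\boldsymbol{\nu}_{\tau}-\nabla r=(\boldsymbol{\nu}_{\tau}-\boldsymbol{\nu}^*_{\tau})+\nabla(\phi_{\tau}-r)$, apply the same orthogonality, and get the lower bound $\sigma^{\rm hk}_{\tau}$ saturated at $r=\phi_{\tau}$, i.e. at $\mathbb{Q}=\mathbb{P}^{1}_{\nabla\phi_{\tau}}=\mathbb{P}^{1}_{\boldsymbol{\nu}^*_{\tau}}\in\mathcal{M}_{\rm G}(\mathbb{P})$, proving Eq.~(\ref{housekeepingvar}). Geometrically, $\mathcal{M}_{\rm ZD}(\mathbb{P})$ and $\mathcal{M}_{\rm G}(\mathbb{P})$ are the two orthogonal components of the Helmholtz--Hodge decomposition, and $\mathbb{P}^{1}_{\boldsymbol{\nu}_{\tau}-\boldsymbol{\nu}^*_{\tau}}$, $\mathbb{P}^{1}_{\boldsymbol{\nu}^*_{\tau}}$ are the $m$-projections of $\mathbb{P}$ onto them, which is the content of the Pythagorean identities in Remark~\ref{pythagreanKL}.

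The step requiring most care is passing from the fact that $\boldsymbol{v}=\boldsymbol{\nu}_{\tau}-\boldsymbol{\nu}^*_{\tau}$ is one admissible competitor to the fact that it attains the infimum, which rests on the orthogonality $\langle\nabla r,\boldsymbol{v}\rangle_{P_{\tau}/(\mu T)}=-\frac{1}{\mu T}\int d\boldsymbol{x}\, r(\boldsymbol{x})\,\nabla\cdot(\boldsymbol{v}(\boldsymbol{x})P_{\tau}(\boldsymbol{x}))$; one must make sure the boundary term in this integration by parts genuinely vanishes, i.e. that the decay of $P_{\tau}$ together with the admissible growth of $r$ and $\boldsymbol{v}$ is sufficient. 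I would also point out that the existence of the splitting itself --- a potential $\phi_{\tau}$ with $\nabla\cdot((\boldsymbol{\nu}_{\tau}-\nabla\phi_{\tau})P_{\tau})=0$ --- is already supplied by the optimal-transport construction of Lemma~\ref{optimalprotocol} and Theorem~\ref{excesshousekeeping}, so no new analytic input is needed; everything else is the exact Gaussian computation behind Eq.~(\ref{geometrykl}) together with elementary linear algebra in $\langle\cdot,\cdot\rangle_{P_{\tau}/(\mu T)}$.
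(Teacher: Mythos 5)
Your proof is correct and follows essentially the same route as the paper: reduce the Kullback--Leibler divergence to the exact weighted-$L^2$ quadratic form via Eq.~(\ref{geometrykl}), split $\boldsymbol{\nu}_{\tau}-\boldsymbol{v}$ (resp. $\boldsymbol{\nu}_{\tau}-\nabla r$) along the Helmholtz--Hodge decomposition, and kill the cross term by integration by parts using the rapid decay of $P_{\tau}$, which yields the two generalized Pythagorean theorems with minimizers $\mathbb{P}^{1}_{\boldsymbol{\nu}_{\tau}-\boldsymbol{\nu}^*_{\tau}}$ and $\mathbb{P}^{1}_{\boldsymbol{\nu}^*_{\tau}}$. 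The only genuine difference is cosmetic: you phrase the orthogonality explicitly in the inner product $\langle\cdot,\cdot\rangle_{P_{\tau}/(\mu T)}$ and flag the boundary-term caveat, whereas the paper writes out the same cancellation as a completed-square identity inside the integral.
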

\begin{proof}
Let $\boldsymbol{\nu}^*_{\tau}(\boldsymbol{x}_{\tau})= \nabla \phi_{\tau} (\boldsymbol{x}_{\tau})$ be the optimal mean local velocity. For any $\mathbb{P}^{1}_{\boldsymbol{v}} \in \mathcal{M}_{\rm ZD} (\mathbb{P})$, we obtain the generalized Pythagorean theorem,
\begin{eqnarray}
&&D_{\rm KL}(\mathbb{P}\|\mathbb{P}^{1}_{\boldsymbol{v}} ) \nonumber \\
&=& \frac{dt\left[\int d\boldsymbol{x}_{\tau} P_{\tau}(\boldsymbol{x}_{\tau})\| \boldsymbol{\nu}_{\tau}(\boldsymbol{x}_{\tau}) -\boldsymbol{\nu}^*_{\tau}(\boldsymbol{x}_{\tau})-\boldsymbol{v} (\boldsymbol{x}_{\tau}) +\nabla  \phi_{\tau} (\boldsymbol{x}_{\tau}) \|^2 \right]}{4 \mu T} \nonumber \\
&=& \frac{dt \left[\int d\boldsymbol{x}_{\tau} P_{\tau}(\boldsymbol{x}_{\tau}) \| \boldsymbol{\nu}^*_{\tau}(\boldsymbol{x}_{\tau})\|^2 
+ \int d\boldsymbol{x}_{\tau} P_{\tau}(\boldsymbol{x}_{\tau}) \| \boldsymbol{\nu}_{\tau}(\boldsymbol{x}_{\tau}) -\boldsymbol{\nu}^*_{\tau}(\boldsymbol{x}_{\tau})-\boldsymbol{v} (\boldsymbol{x}_{\tau}) \|^2 \right] }{4 \mu T} \nonumber \\
&=& D_{\rm KL}(\mathbb{P}\|\mathbb{P}^{1}_{\boldsymbol{\nu}_{\tau}-\boldsymbol{\nu}^*_{\tau}} ) + D_{\rm KL}(\mathbb{P}^{1}_{\boldsymbol{\nu}_{\tau}-\boldsymbol{\nu}^*_{\tau}} \| \mathbb{P}^{1}_{\boldsymbol{v}} ), \label{pythagoreanexcess}
\end{eqnarray}
where we used Eq.~(\ref{geometrykl}) and 
\begin{eqnarray}
&&\int d \boldsymbol{x}_{\tau} P_{\tau} (\boldsymbol{x}_{\tau}) (\boldsymbol{\nu}_{\tau}(\boldsymbol{x}_{\tau}) -\boldsymbol{\nu}^*_{\tau}(\boldsymbol{x}_{\tau}) - \boldsymbol{v}(\boldsymbol{x}_{\tau})) \cdot  \nabla 
\phi_{\tau} (\boldsymbol{x}_{\tau}) \nonumber \\
&=& - \int d \boldsymbol{x}_{\tau}\nabla \cdot ( P_{\tau} (\boldsymbol{x}_{\tau}) (\boldsymbol{\nu}_{\tau}(\boldsymbol{x}_{\tau}) -\boldsymbol{\nu}^*_{\tau}(\boldsymbol{x}_{\tau}) - \boldsymbol{v}(\boldsymbol{x}_{\tau})) ( \phi_{\tau} (\boldsymbol{x}_{\tau})) = 0,
\end{eqnarray}
because $\nabla \cdot ( P_{\tau} (\boldsymbol{x}_{\tau}) (\boldsymbol{\nu}_{\tau}(\boldsymbol{x}_{\tau}) -\boldsymbol{\nu}^*_{\tau}(\boldsymbol{x}_{\tau}) - \boldsymbol{v}(\boldsymbol{x}_{\tau})))= 0$ and $P_{\tau}(\boldsymbol{x}_{\tau})$ decays sufficiently rapidly at infinity. Because $D_{\rm KL}(\mathbb{P}^{1}_{\boldsymbol{\nu}_{\tau}-\boldsymbol{\nu}^*_{\tau}} \| \mathbb{P}^{1}_{\boldsymbol{v}} ) \geq 0$, we obtain $\inf_{\mathbb{Q} \in \mathcal{M}_{\rm ZD} (\mathbb{P}) } D_{\rm KL}(\mathbb{P}\| \mathbb{Q}) = D_{\rm KL}(\mathbb{P}\|\mathbb{P}^{1}_{\boldsymbol{\nu}-\boldsymbol{\nu}^*_{\tau}} )$ and Eq.~(\ref{excessvar}) from the generalized Pythagorean theorem Eq.~(\ref{pythagoreanexcess}).

For any $\mathbb{P}^{1}_{\nabla r} \in \mathcal{M}_{\rm G} (\mathbb{P})$, we obtain the generalized Pythagorean theorem,
\begin{eqnarray}
&&D_{\rm KL}(\mathbb{P}\|\mathbb{P}^{1}_{\nabla r} ) \nonumber \\
&=& \frac{dt\left[\int d\boldsymbol{x}_{\tau} P_{\tau}(\boldsymbol{x}_{\tau})\| \boldsymbol{\nu}_{\tau}(\boldsymbol{x}_{\tau}) -\boldsymbol{\nu}^*_{\tau}(\boldsymbol{x}_{\tau}) - \nabla (r (\boldsymbol{x}_{\tau}) -  \phi_{\tau} (\boldsymbol{x}_{\tau})) \|^2 \right]}{4 \mu T} \nonumber \\
&=& \frac{dt \left[\int d\boldsymbol{x}_{\tau} P_{\tau}(\boldsymbol{x}_{\tau}) \| \boldsymbol{\nu}_{\tau}(\boldsymbol{x}_{\tau}) -\boldsymbol{\nu}^*_{\tau}(\boldsymbol{x}_{\tau})\|^2 
+ \int d\boldsymbol{x}_{\tau} P_{\tau}(\boldsymbol{x}_{\tau}) \| \nabla (r (\boldsymbol{x}_{\tau}) -  \phi_{\tau} (\boldsymbol{x}_{\tau})) \|^2 \right] }{4 \mu T} \nonumber \\
&=& D_{\rm KL}(\mathbb{P}\|\mathbb{P}^{1}_{\boldsymbol{\nu}^*_{\tau}} ) + D_{\rm KL}(\mathbb{P}^{1}_{\boldsymbol{\nu}^*_{\tau}} \| \mathbb{P}^{1}_{\nabla r} ), \label{pythagoreanhouse}
\end{eqnarray}
where we used Eq.~(\ref{geometrykl}) and 
\begin{eqnarray}
&&\int d \boldsymbol{x}_{\tau} P_{\tau} (\boldsymbol{x}_{\tau}) (\boldsymbol{\nu}_{\tau}(\boldsymbol{x}_{\tau}) -\boldsymbol{\nu}^*_{\tau}(\boldsymbol{x}_{\tau})) \cdot  \nabla (r (\boldsymbol{x}_{\tau}) -  \phi_{\tau}(\boldsymbol{x}_{\tau})) \nonumber \\
&=& - \int d \boldsymbol{x}_{\tau}\nabla \cdot ( P_{\tau} (\boldsymbol{x}_{\tau}) (\boldsymbol{\nu}_{\tau}(\boldsymbol{x}_{\tau}) -\boldsymbol{\nu}^*_{\tau}(\boldsymbol{x}_{\tau})) ) (r (\boldsymbol{x}_{\tau}) -  \phi_{\tau} (\boldsymbol{x}_{\tau})) = 0,
\end{eqnarray}
because $\nabla \cdot [ P_{\tau} (\boldsymbol{x}_{\tau}) (\boldsymbol{\nu}_{\tau}(\boldsymbol{x}_{\tau})-\boldsymbol{\nu}^*_{\tau}(\boldsymbol{x}_{\tau})) ]= 0$ and $P_{\tau}(\boldsymbol{x}_{\tau})$ decays sufficiently rapidly at infinity. Because $ D_{\rm KL}(\mathbb{P}^{1}_{\boldsymbol{\nu}^*_{\tau}} \| \mathbb{P}^{1}_{\nabla r} ) \geq 0$, we obtain $\inf_{\mathbb{Q} \in \mathcal{M}_{\rm G} (\mathbb{P}) } D_{\rm KL}(\mathbb{P}\| \mathbb{Q}) = D_{\rm KL}(\mathbb{P}\|\mathbb{P}^{1}_{\boldsymbol{\nu}^*_{\tau}} )$ and Eq.~(\ref{housekeepingvar}) from the generalized Pythagorean theorem Eq.~(\ref{pythagoreanhouse}).
\end{proof}
\begin{remark}
Because $\mathbb{P}^1_{\boldsymbol{0}} \in \mathcal{M}_{\rm G}(\mathbb{P})$ and $\mathbb{P}^1_{\boldsymbol{0}} \in \mathcal{M}_{\rm ZD}(\mathbb{P})$, the path probability $\mathbb{P}^1_{\boldsymbol{0}}$ is on the intersection of two manifolds $\mathcal{M}_{\rm G}(\mathbb{P})$ and $\mathcal{M}_{\rm ZD}(\mathbb{P})$.
\end{remark}
\begin{remark}
The path probability density $\mathbb{P}^{\theta}_{\boldsymbol{\nu}^*_{\tau}}$ corresponds to the $e$-geodesic between $\mathbb{P}$ and $\mathbb{P}^{1}_{\boldsymbol{\nu}^*_{\tau}}$. The path probability density $\mathbb{P}^{\theta}_{\boldsymbol{\nu}_{\tau}- \boldsymbol{\nu}^*_{\tau}}$ corresponds to the $e$-geodesic between $\mathbb{P}$ and $\mathbb{P}^{1}_{\boldsymbol{\nu}_{\tau}-\boldsymbol{\nu}^*_{\tau}}$. 
\end{remark}
The excess entropy production rate and the housekeeping entropy production rate are also regarded as the Fisher metric for the path probability density $\mathbb{P}$. This fact implies that optimal transport can be discussed from the viewpoint of information geometry in the space of path probability densities.
\begin{proposition}
The entropy production rate, the excess entropy production rate and the housekeeping entropy production rate are given by
\begin{eqnarray}
\sigma_{\tau} (\mathbb{P}) &=& \frac{2}{dt} g_{\theta ({\boldsymbol{0}}) \theta({\boldsymbol{0}})}(\mathbb{P}), \label{fisherpath1} \\
\sigma^{\rm ex}_{\tau} (\mathbb{P})&=& \frac{2}{dt} g_{\theta (\boldsymbol{\nu}_{\tau} - \boldsymbol{\nu}^*_{\tau})\theta (\boldsymbol{\nu}_{\tau} - \boldsymbol{\nu}^*_{\tau})}(\mathbb{P}), \label{fisherpath2}\\
\sigma^{\rm hk}_{\tau} (\mathbb{P}) &=& \frac{2}{dt} g_{\theta(\boldsymbol{\nu}^*_{\tau}) \theta(\boldsymbol{\nu}^*_{\tau})}(\mathbb{P}) .\label{fisherpath3}
\end{eqnarray}
\end{proposition}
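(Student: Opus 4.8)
The plan is to obtain all three identities from the single closed-form expression of the path-space Fisher metric already recorded in Eq.~(\ref{fishermetricv}), namely
\[
g_{\theta(\boldsymbol{\nu}'_{\tau})\theta(\boldsymbol{\nu}'_{\tau})}(\mathbb{P}) = \frac{dt}{2\mu T}\int d\boldsymbol{x}_{\tau}\,\|\boldsymbol{\nu}'_{\tau}(\boldsymbol{x}_{\tau}) - \boldsymbol{\nu}_{\tau}(\boldsymbol{x}_{\tau})\|^{2}P_{\tau}(\boldsymbol{x}_{\tau}),
\]
which holds for an arbitrary target field $\boldsymbol{\nu}'_{\tau}$. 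Multiplying by $2/dt$ gives $\tfrac{2}{dt}\,g_{\theta(\boldsymbol{\nu}'_{\tau})\theta(\boldsymbol{\nu}'_{\tau})}(\mathbb{P}) = \tfrac{1}{\mu T}\int d\boldsymbol{x}_{\tau}\,\|\boldsymbol{\nu}'_{\tau}-\boldsymbol{\nu}_{\tau}\|^{2}P_{\tau}$, so Eqs.~(\ref{fisherpath1})--(\ref{fisherpath3}) are three specializations of this one formula.

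Concretely, I would first take $\boldsymbol{\nu}'_{\tau}=\boldsymbol{0}$: then $\boldsymbol{\nu}'_{\tau}-\boldsymbol{\nu}_{\tau}=-\boldsymbol{\nu}_{\tau}$, so $\tfrac{2}{dt}\,g_{\theta(\boldsymbol{0})\theta(\boldsymbol{0})}(\mathbb{P}) = \tfrac{1}{\mu T}\int d\boldsymbol{x}_{\tau}\,\|\boldsymbol{\nu}_{\tau}\|^{2}P_{\tau} = \sigma_{\tau}$ by the Definition of the entropy production rate, which is Eq.~(\ref{fisherpath1}). Next I would take $\boldsymbol{\nu}'_{\tau}=\boldsymbol{\nu}_{\tau}-\boldsymbol{\nu}^{*}_{\tau}$, so that $\boldsymbol{\nu}'_{\tau}-\boldsymbol{\nu}_{\tau}=-\boldsymbol{\nu}^{*}_{\tau}$ and $\tfrac{2}{dt}\,g_{\theta(\boldsymbol{\nu}_{\tau}-\boldsymbol{\nu}^{*}_{\tau})\theta(\boldsymbol{\nu}_{\tau}-\boldsymbol{\nu}^{*}_{\tau})}(\mathbb{P}) = \tfrac{1}{\mu T}\int d\boldsymbol{x}_{\tau}\,\|\boldsymbol{\nu}^{*}_{\tau}\|^{2}P_{\tau}$, which is exactly the expression for $\sigma^{\rm ex}_{\tau}$ established in Theorem~\ref{excesshousekeeping}; this gives Eq.~(\ref{fisherpath2}). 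Finally, $\boldsymbol{\nu}'_{\tau}=\boldsymbol{\nu}^{*}_{\tau}$ gives $\boldsymbol{\nu}'_{\tau}-\boldsymbol{\nu}_{\tau}=\boldsymbol{\nu}^{*}_{\tau}-\boldsymbol{\nu}_{\tau}$ and $\tfrac{2}{dt}\,g_{\theta(\boldsymbol{\nu}^{*}_{\tau})\theta(\boldsymbol{\nu}^{*}_{\tau})}(\mathbb{P}) = \tfrac{1}{\mu T}\int d\boldsymbol{x}_{\tau}\,\|\boldsymbol{\nu}_{\tau}-\boldsymbol{\nu}^{*}_{\tau}\|^{2}P_{\tau} = \sigma^{\rm hk}_{\tau}$, again from Theorem~\ref{excesshousekeeping}, which is Eq.~(\ref{fisherpath3}).

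An equivalent route is to notice that $g_{\theta(\boldsymbol{\nu}'_{\tau})\theta(\boldsymbol{\nu}'_{\tau})}(\mathbb{P})$ is by construction the quadratic (Fisher) part of the Kullback--Leibler divergence along the $e$-geodesic $\mathbb{P}^{\theta}_{\boldsymbol{\nu}'_{\tau}}$, so the three endpoint fields used in Proposition~\ref{klpathopt} together with Eq.~(\ref{geometrykl}) reproduce the same identities; this makes explicit that each rate is simultaneously a Kullback--Leibler divergence (Proposition~\ref{klpathopt}) and a Fisher metric on the space of path probability densities, in the same spirit as Theorem~\ref{theoremfisher}. I do not expect a genuine obstacle: the substantive inputs are already in hand, namely the closed form Eq.~(\ref{fishermetricv}) and the optimal-transport expressions for $\sigma^{\rm ex}_{\tau}$ and $\sigma^{\rm hk}_{\tau}$ from Theorem~\ref{excesshousekeeping} (whose proof used the orthogonality $\int d\boldsymbol{x}\,\boldsymbol{\nu}^{*}_{\tau}\cdot(\boldsymbol{\nu}_{\tau}-\boldsymbol{\nu}^{*}_{\tau})P_{\tau}=0$). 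The only point requiring care is the role convention in the subscript of $\mathbb{P}^{\theta}_{\boldsymbol{\nu}'_{\tau}}$ --- that $\boldsymbol{\nu}'_{\tau}$ is the mean local velocity at $\theta=1$ while $\boldsymbol{\nu}_{\tau}$ is the one at $\theta=0$ --- so that the differences $\boldsymbol{\nu}'_{\tau}-\boldsymbol{\nu}_{\tau}$ are formed consistently throughout.
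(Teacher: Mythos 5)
Your proof is correct and is exactly the paper's own argument: substitute $\boldsymbol{\nu}'_{\tau}=\boldsymbol{0}$, $\boldsymbol{\nu}_{\tau}-\boldsymbol{\nu}^{*}_{\tau}$, and $\boldsymbol{\nu}^{*}_{\tau}$ into the closed form Eq.~(\ref{fishermetricv}) and identify the resulting integrals with $\sigma_{\tau}$, $\sigma^{\rm ex}_{\tau}$, and $\sigma^{\rm hk}_{\tau}$ via Theorem~\ref{excesshousekeeping}. The only difference is that you spell out the intermediate algebra that the paper leaves implicit.
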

\begin{proof}
By plugging $\boldsymbol{\nu}'_{\tau} = \boldsymbol{0}$, $\boldsymbol{\nu}'_{\tau} = \boldsymbol{\nu}_{\tau} - \boldsymbol{\nu}^*_{\tau}$ and $\boldsymbol{\nu}'_{\tau} = \boldsymbol{\nu}^*_{\tau}$ into Eq.~(\ref{fishermetricv}), we obtain Eqs.~(\ref{fisherpath1}), (\ref{fisherpath2}) and ~(\ref{fisherpath3}), respectively.
\end{proof}
\begin{remark}
The expressions of the excess entropy production rate and the housekeeping entropy production rate by the Fisher metric lead to the thermodynamic uncertainty relations for the excess entropy production rate and the housekeeping entropy production rate as a consequence of the Cram\'{e}r--Rao inequality. The thermodynamic uncertainty relation for the excess entropy production rate had been discussed in Theorem.~\ref{turexcess}. These thermodynamic uncertainty relations for the excess entropy production rate and the housekeeping entropy production rate have been substaintially obtained in Ref.~\cite{dechant2022geometric}.
The thermodynamic uncertainty relation for the excess entropy production rate and the housekeeping entropy production rate can be generalized based on the orthogonality as discussed in Ref.~\cite{kamijima2022thermodynamic}.
\end{remark}
\begin{remark}
Because the excess entropy production rate is defined in terms of the $L^2$-Wasserstein distance $\sigma_{t}^{\rm ex} = [\mathcal{W}_2 (P_{\tau}, P_{\tau+dt})]^2/( \mu T dt^2)$ up to $O(dt)$, Eq.~(\ref{excessvar}) implies a relation between the $L^2$-Wasserstein distance and the Kullback--Leibler divergence, and Eq.~(\ref{fisherpath2}) implies a relation between the $L^2$-Wasserstein distance and the Fisher metric, respectively. Information-geometrically, the square of the line element for path probability densities can be defined as $ds^2_{\rm path} = g_{\theta (\boldsymbol{\nu}_{\tau} - \boldsymbol{\nu}^*_{\tau})\theta (\boldsymbol{\nu}_{\tau} - \boldsymbol{\nu}^*_{\tau})} d\theta^2 =(1/2) \sigma^{\rm ex}_{\tau}  dt d\theta^2$ for the interpolation parameter $\theta$ in $\mathbb{P}^{\theta}_{\boldsymbol{\nu}_{\tau} - \boldsymbol{\nu}^*_{\tau}}$. Thus, the $L^2$-Wasserstein distance can be information-geometrically interpreted as $ [\mathcal{W}_2 (P_{\tau}, P_{\tau+dt})]^2 = (2\mu T dt) (ds^2_{\rm path} / d\theta^2)$.
\end{remark}

\section{Conclusion and discussion}\label{sec13}
We discuss stochastic thermodynamic links between information geometry and optimal transport theory via the excess entropy production rate. We can discuss a link between information geometry in the space of probability densities and optimal transport theory and a link between information geometry in the space of path probability densities and optimal transport theory because the excess entropy production rate is related to the $L^2$-Wasserstein distance, the time derivative of the Kullback--Leibler divergence between probability densities, and the Kullback--Leibler divergence between the path probability densities. These links are useful for studying the mathematical properties of the entropy production rate in stochastic thermodynamics. For example, thermodynamic trade-off relations, namely the thermodynamic uncertainty relations and the  thermodynamic speed limit, can be obtained from geometric inequalities such as the Cauchy--Schwartz inequality and the triangle inequality. The optimal protocol to minimize the thermodynamic cost can also be discussed in terms of the geodesic. 

We also remark on possible generalizations of the results in this paper. In this paper, we only focus on the stochastic dynamics described by the Fokker--Planck equation. Because stochastic thermodynamics has been discussed for the general Markov process described by the master equation, the generalization of the proposed results for the general Markov process is interesting. For example, generalizations of the results in this paper for the master equation have been seen in Refs.~\cite{ito2018stochastic,ito2020unified,otsubo2020estimating,liu2020thermodynamic,van2021geometrical, ito2022information,ohga2021information,yoshimura2022geometrical,kolchinsky2022information}. Because generalizations are not unique, rather different approaches of optimal transport theory for stochastic thermodynamics in the Markov jump process have also been seen in Refs.~\cite{dechant2022minimum,hamazaki2022speed, van2022thermodynamic,van2022topological}. Unlike these generalizations, our generalizations~\cite{yoshimura2022geometrical,kolchinsky2022information} are related to the gradient flow expression and information-geometric projection discussed in this paper. The generalization for the deterministic chemical rate equation is also interesting to consider information geometry and optimal transport theory for chemical thermodynamics, which was proposed in Refs.~\cite{yoshimura2021thermodynamic, yoshimura2021thermodynamic2,ohga2021information2, sughiyama2022hessian, yoshimura2022geometrical, kobayashi2022geometry, kolchinsky2022information,van2022topological}. For the deterministic chemical rate equation, we do not need to use stochasticity to obtain generalized results, and geometric properties play a crucial role in the derivation of generalized results, similarly as in the stochastic case. This is the reason why we call our framework geometric thermodynamics instead of stochastic thermodynamics.

Finally, we point out that geometric thermodynamics is related to several fascinating topics, and has the potential to clarify the geometric properties of these topics. The classical correspondence of the control protocol called {\it shortcuts to adiabatically} for the stochastic process~\cite{patra2017shortcuts, li2022geodesic, ilker2022shortcuts, guery2022driving,patron2022optimal} is related to the geometry of the probability distribution. Remarkably, the link between shortcuts and information geometry has been proposed in Ref.~\cite{takahashi2017shortcuts}. Indeed, a generalization of our framework for general Markov jump processes~\cite{yoshimura2022geometrical} is related to the stochastic correspondence of shortcuts called {\it shortcuts in stochastic systems}~\cite{ilker2022shortcuts}.
An application of shortcuts or our framework to a low-power electronic circuits called {\it adiabatic circuits}~\cite{koller1992adiabatic} is promising.
A connection between geometric thermodynamics and a geometrical interpretation of another excess entropy production rate proposed in Ref.~\cite{komatsu2008steady} in terms of {\it the Berry phase}~\cite{sagawa2011geometrical}, which is related to the geometry of the cyclic path, is interesting. The cyclic path in information geometry and optimal transport theory was discussed in the optimal heat engine~\cite{brandner2020thermodynamic, nakazato2021geometrical,fu2021maximal, miangolarra2022geometry,frim2022geometric} and the geometric pump~\cite{takahashi2020nonadiabatic}. A geometric interpretation of the restricted path may also be interesting in the context of {\it optimal limited control}~\cite{blaber2021steps, zhong2022limited,blaber2022optimal,chennakesavalu2022unifying}. The dual coordinate systems in stochastic thermodynamics provide the duality in stochastic thermodynamics~\cite{lu2022emergence, ohga2021information, ohga2021information2, sughiyama2022hessian, kobayashi2022geometry, kolchinsky2022information, yang2022statistical}, which is related to the variational calculus such as {\it the maximum caliber principle}~\cite{presse2013principles,leonard2014survey, kolchinsky2022information, yang2022statistical} and {\it the Schr\"{o}dinger bridge}~\cite{chen2016relation, kamya2022optimal}. Our results may also be useful for a machine learning technique based on non-equilibrium thermodynamics called {\it the diffusion model}~\cite{sohl2015deep} or {\it the score-based generative modeling}, because our result provides a link between machine learning technique based on optimal transport and non-equilibrium thermodynamics for diffusion dynamics described by the Fokker--Planck equation. Because {\it the denoising matching}~\cite{ho2020denoising, kingma2021variational} or {\it the reversed stochastic differential equation}~\cite{song2020score} can be described by proposed interpolated dynamics, our framework may be helpful in learning process by the diffusion-based generative model. Because our framework also can provide a link between the reversed stochastic differential equation and optimal transport theory, which is well used in generative models such as {\it the Wasserstein generative adversarial networks}~\cite{arjovsky2017wasserstein}, it might be interesting to consider a link between the diffusion model and learning based on the $L^2$-Wasserstein distance in our framework.
A connection to {\it information thermodynamics}~\cite{parrondo2015thermodynamics} is also interesting. For example, the information-thermodynamic quantities, called the partial entropy production and the transfer entropy~\cite{ito2013information,hartich2014stochastic,horowitz2014thermodynamics,ito2016backward,schreiber2000measuring} can be information-geometrically treated by the projection theorem~\cite{oizumi2016unified, ito2020unified}, and the optimal transport for the subsystem is related to the problem of the finite bit erasure~\cite{proesmans2020finite, proesmans2020optimal, zhen2021universal, lee2022speed} and the problem of the minimum partial entropy production in the subsystem~\cite{nakazato2021geometrical,fujimoto2021game}. Applications to the evolutionary process~\cite{zhang2020information, adachi2022universal, garcia2022diversity, hoshino2022geometric} is also interesting because information geometry provides a geometric interpretation of {\it the Price equation}~\cite{frank2018price, frank2020fundamental}. 
As a generalization of the gradient flow expression, an approach based on {\it the general equation for non-equilibrium reversible-irreversible coupling (GENERIC)}~\cite{grmela1997dynamics,ottinger1997dynamics, esen2022role, esen2022role2} or {\it the transport Hessian Hamiltonian flow}~\cite{li2021hessian} might be promising, The experimental application of geometric thermodynamics to biological dynamics is interesting~\cite{tafoya2019using,brown2019theory,ashida2021experimental}, and the quantitative discussion on the design principle of the complex biological system from the viewpoint of geometric thermodynamics could be a significant topic in the near future. 

\bmhead{Acknowledgments}
S.~I.~gratefully thank Andreas Dechant, Kohei Yoshimura, Naruo Ohga, Artemy Kolchinsky, Shin-ichi Sasa, Muka Nakazato, Takahiro Sagawa, Shun Otsubo, Takuya Kamijima, Yuma Fujimoto, Ryuna Nagayama, Masahiro Hoshino, Jumpei F. Yamagishi, Masafumi Oizumi and Shun-ich Amari for valuable discussions on geometric aspects in stochastic thermodynamics and chemical thermodynamics. S.~I. appreciate Artemy Kolchinsky, Ryuna Nagayama, Kohei Yoshimura and Naruo Ohga for careful reading of this manuscript. S.~I.~is supported by JST Presto Grant No. JPMJPR18M2 and JSPS KAKENHI Grants No.~19H05796, No.~21H01560, and No.~22H01141.
\bmhead{Data Availability}
Data sharing is not applicable to this article as no data sets were generated or analyzed during the current study.
\bmhead{Declarations}
On behalf of all authors, the corresponding author states that there is no conflict of interest. 
\bibliography{reference}

\end{document}